\documentclass{article}
\usepackage[hidelinks]{hyperref}
\hypersetup{
	colorlinks=true,
	linkcolor=blue,
	filecolor=green,  
	citecolor=magenta
}
\usepackage[utf8]{inputenc}
\usepackage{amsmath,amsfonts}
\usepackage{amsthm,thm-restate}
\usepackage{authblk}
\usepackage[numbers,sort]{natbib}
\usepackage{float}
\usepackage{array}
\usepackage{dirtytalk}
\usepackage[dvipsnames]{xcolor}
\usepackage{thm-restate}
\usepackage{graphicx}
\usepackage{booktabs}
\usepackage{multirow}
\usepackage{makecell}
\usepackage{tabularx}
\usepackage{booktabs}
\usepackage{diagbox}
\usepackage{amsmath}
\usepackage{bbm}
\usepackage{fullpage}
\usepackage{pgfplots}
\pgfplotsset{compat=1.17}
\usepackage[shortlabels]{enumitem}
\usetikzlibrary{patterns, decorations.pathmorphing, arrows.meta}
\usetikzlibrary{decorations.pathreplacing}
\usepackage{enumitem}
\usepackage[boxed]{algorithm}
\usepackage[noend]{algorithmic}
\usepackage{subcaption}
\usepackage{amssymb,xspace}
\usepackage{xcolor,tikz}
\usetikzlibrary{positioning,patterns}

\newcommand{\ceil}[1]{\lceil #1 \rceil }

\usepackage{arydshln}

\allowdisplaybreaks
\usepackage{verbatim,ifthen}
\usepackage{enumitem}
\usepackage{url}
\usepackage{ifthen}
\usepackage[font=small,labelfont=bf]{caption}
\usepackage{pgflibraryshapes}
\usetikzlibrary{arrows}
\usepackage{centernot}
\usetikzlibrary{decorations.pathreplacing}
\usetikzlibrary{patterns}
\usepackage{pgflibraryshapes}
\usetikzlibrary{positioning,chains,fit,shapes,calc}
	\usepackage{varioref}
	\usepackage{nicefrac}
	\usepackage{hyperref}
\usepackage{cleveref}
\usepackage{tikz}
\usetikzlibrary{arrows.meta,topaths,calc}
\usepackage[normalem]{ulem}
	\usepackage{tabularx}
	\usepackage{mathtools}

\definecolor{light-gray}{gray}{0.9}

\newtheorem{definition}{Definition}%

\usepackage{boxedminipage}
\usepackage{xspace}

		\newcommand{\calX}{{\mathcal{X}}}

				\newcommand{\bt}[1][]{\ensuremath{\ifthenelse{\equal{#1}{}}{\mathit{BT}}{\mathit{BT}(#1)}}\xspace}

	\newcommand{\ltm}{\operatorname{lm}}
	\newcommand{\rtm}{\operatorname{rm}}

	\newtheorem{lemma}{Lemma}%
	\newtheorem{theorem}{Theorem}%
	
	\newtheorem{proposition}{Proposition}%
	\newtheorem{corollary}{Corollary}%

	\newtheorem{remark}{Remark}
	
\usepackage{eqparbox}

	\usepackage{enumitem}
	\setenumerate[1]{label=\rm(\it{\roman{*}}\rm),ref=({\it\roman{*}}),leftmargin=*}
	\newlength{\wordlength}

\usepackage{enumitem}
\setenumerate[1]{label=\rm(\it{\roman{*}}\rm),ref=({\it\roman{*}}),leftmargin=*}

\newcommand{\nbh}[1][]{
	\ifthenelse{\equal{#1}{}}{\nu}{\nu(#1)}
}

\newcommand{\cstr}[1][]{
	\ifthenelse{\equal{#1}{}}{\mathscr S}{\cstr(#1)}
}

\newcommand{\choice}[1][]{
	\ifthenelse{\equal{#1}{}}{\mathit{C}}{\choice(#1)}
}

\newcommand{\GCT}{GC-TrSP\xspace}
\definecolor{cblue}{HTML}{237B9F}
\definecolor{clightblue}{HTML}{71BFB2}
\definecolor{cred}{HTML}{AD0B08}
\definecolor{cpink}{HTML}{EC817E}
\definecolor{cyellow}{HTML}{FEE066}

\usepackage{cleveref}
\begin{document}

\date{}

\title{Fair Transit Stop Placement: A Clustering Perspective and Beyond}

\author[1]{Haris Aziz}
\author[2]{Ling Gai}
\author[1]{Yuhang Guo}
\author[3]{Jeremy Vollen}

\affil[1]{UNSW Sydney, Australia}
\affil[2]{University of Shanghai for Science and Technology, China}
\affil[3]{Northwestern University, USA}

\addtocounter{footnote}{0}

\maketitle

\begin{abstract}
	We study the transit stop placement (TrSP) problem in general metric spaces, where agents travel between source–destination pairs and may either walk directly or utilize a shuttle service via selected transit stops. 
	We investigate fairness in TrSP through the lens of justified representation (JR) and the core, and uncover a structural correspondence with fair clustering.
	Specifically, we show that a constant-factor approximation to proportional fairness in clustering can be used to guarantee a constant-factor bi-parameterized approximation to core.
	We establish a lower bound of $1.366$ on the approximability of JR, and moreover show that no clustering algorithm can approximate JR within a factor better than $3$. 
	Going beyond clustering, we propose the Expanding Cost Algorithm, which achieves a tight $2.414$-approximation for JR, but does not give any bounded core guarantee. 
	In light of this, we introduce a parameterized algorithm that interpolates between these approaches, and enables a tunable trade-off between JR and core.
	Finally, we complement our results with an experimental analysis using small-market public carpooling data.
\end{abstract}

\section{Introduction}
A municipality has decided to offer a publicly-operated shuttle to offer its residents a safe, convenient, and accessible alternative to private vehicles.
Towards this end, the public infrastructure planner has been allocated a budget to construct a desired number of shuttle stops.
Given data describing the common trips made by each resident that would use the shuttle service, how can the planner decide where to place the shuttle stops?
This problem, which we call the \emph{Transit Stop Placement Problem} (TrSP), is the focus of this work. 

While the focus of prior public transportation research in civil engineering and operations research is typically on efficiency and cost minimization~\citep{DeHi07a,MeA+16a,TeJa16a,Cede16a}, there has been a growing appreciation of equity concerns when designing and planning infrastructure in the last decade~\citep{Mart16a,MaLu18a,BEL25a}. 
In terms of our problem, while each resident (referred to as agents, henceforth) would prefer to have a shuttle stop at the exact addresses of their starting point (e.g., home) and destination (e.g., work), the aim of the planner is to select a set of stops that is as fair as possible to the agents, subject to scarce resources.
\citet{BEL25a} recently studied this question in the case in which all agents and potential stops are located on a line. 
Inspired by the literature on committee voting \citep[see, e.g.,][]{ABC+17a,LaSk23a}, they defined fairness properties known as \emph{core} and \emph{justified representation} (a relaxation of core). 
Rather than take an egalitarian formulation of fairness, which may overcorrect for an isolated resident and hence result in stop placements which are not convenient for \textit{any} agent, core and justified representation are based upon the ideal of proportional representation, i.e., one in which groups of agents with similar preferences are entitled to resources in proportion to their size.

The primary limitation of the work by \citet{BEL25a} lies in its focus on the line, which while a natural introductory setting for investigating fairness in transit stop placement, fails to adequately capture the complexity of many real-world transportation networks. 
In this work, we address this limitation by studying transit stop placement in general metric spaces.

To make things more concrete, we offer an example. 
Suppose there are six agents, and each agent $i$ follows a route from $a_i$ to $b_i$. 
The planner has a budget to construct three shuttle stops. 
Each agent derives disutility from a transit stop placement equal to the sum of the agent's walking distance to the stops they use and their transit cost between stops.
There are four candidate stop locations, $c_1$ through $c_4$. 
Consider the stop placements depicted in \Cref{fig:motivation_example_solution_1} and \Cref{fig:motivation_example_solution_2}, where the selected stops are indicated by stars. 
We observe that the placement in \Cref{fig:motivation_example_solution_1} fails to adequately represent the agent group $\{1,2,3,4\}$. 
Every agent in that group derives greater disutility from the transit stop placement in \Cref{fig:motivation_example_solution_1} than from the placement in \Cref{fig:motivation_example_solution_2}. 
This group constitutes two thirds of the population, and thus intuitively should have two thirds of the decision power, i.e. be able to effectively decide two of the three stops.
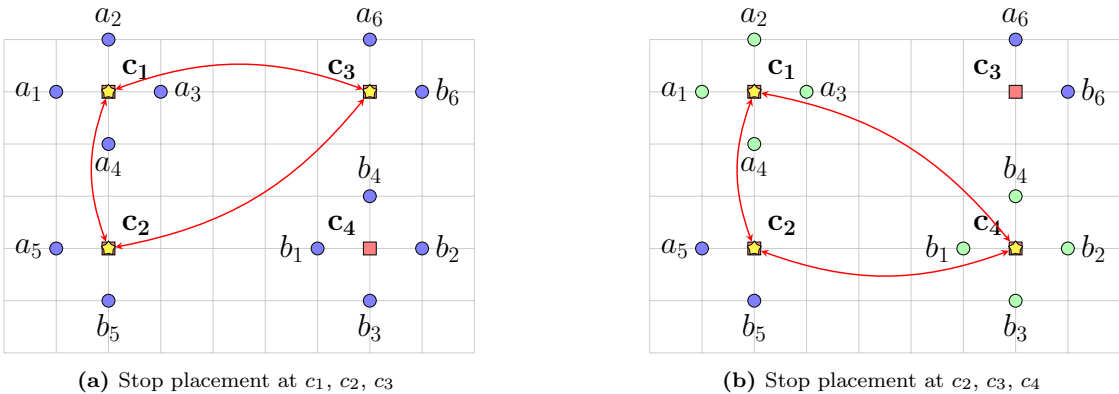
\begin{figure}[htbp]
	\centering
	\begin{subfigure}[b]{0.48\linewidth}
		\centering
		\scalebox{0.55}{
			\begin{tikzpicture}[>=stealth, every node/.style={font=\huge}]
				\tikzset{
					circlenode/.style={draw, circle, fill=blue!50, inner sep=2pt, minimum size=7pt},
					happycirclenode/.style={draw, circle, fill=green!30, inner sep=2pt, minimum size=7pt},
					rectnode/.style={draw, rectangle, fill=red!50, inner sep=2pt, minimum size=7pt},
					starnode/.style={star, star points=5, draw, fill=yellow!80, inner sep=1.5pt, minimum size=8pt},
					bgblock/.style={fill=gray!12, draw=gray!35, rounded corners=8pt}
				}
				
				\filldraw[bgblock] (-1.45,3.55) rectangle (1.45,6.45); 
				\filldraw[bgblock] (-1.45,0.55) rectangle (0.45,2.45); 
				\filldraw[bgblock] (4.55,4.55) rectangle (6.45,6.45); 
				\filldraw[bgblock] (3.55,0.55) rectangle (6.45,3.45); 
				
				\node[rectnode, label=above right:$\bf c_1$] (c1) at (0,5) {};
				\node[circlenode,label=left:$a_1$]  (a1) at (-1,5) {};
				\node[circlenode,label=above:$a_2$] (a2) at (0,6) {};
				\node[circlenode,label=right:$a_3$] (a3) at (1,5) {};
				\node[circlenode,label=below:$a_4$] (a4) at (0,4) {};
				\node[starnode] at (0,5) {};
				
				\node[rectnode, label=above right:$\bf c_2$] (c2) at (0,2) {};
				\node[circlenode,label=left:$a_5$]  (a5) at (-1,2) {};
				\node[circlenode,label=below:$b_5$] (b5) at (0,1) {};
				\node[starnode] at (0,2) {};
				
				\node[rectnode, label=above left:$\bf c_3$] (c3) at (5,5) {};
				\node[circlenode,label=above:$a_6$] (a6) at (5,6) {};
				\node[circlenode,label=right:$b_6$] (b6) at (6,5) {};
				\node[starnode] at (5,5) {};
				
				\node[rectnode, label=above left:$\bf c_4$] (c4) at (5,2) {};
				\node[circlenode,label=left:$b_1$] (b1) at (4, 2) {};
				\node[circlenode,label=right:$b_2$] (b2) at (6,2) {};
				\node[circlenode,label=below:$b_3$] (b3) at (5,1) {};
				\node[circlenode,label=above:$b_4$] (b4) at (5,3) {};
				
				\draw[<->, red, thick, bend right=20] (c1) to (c2);
				\draw[<->, red, thick, bend left=20] (c1) to (c3);
				\draw[<->, red, thick, bend right=20] (c2) to (c3);
		\end{tikzpicture}}
		\caption{Stop placement at $c_1$, $c_2$, $c_3$}
		\label{fig:motivation_example_solution_1}
	\end{subfigure}
	\hfill
	\begin{subfigure}[b]{0.48\linewidth}
		\centering
		\scalebox{0.55}{
			\begin{tikzpicture}[>=stealth, every node/.style={font=\huge}]
				\tikzset{
					circlenode/.style={draw, circle, fill=blue!50, inner sep=2pt, minimum size=7pt},
					happycirclenode/.style={draw, circle, fill=green!30, inner sep=2pt, minimum size=7pt},
					rectnode/.style={draw, rectangle, fill=red!50, inner sep=2pt, minimum size=7pt},
					starnode/.style={star, star points=5, draw, fill=yellow!80, inner sep=1.5pt, minimum size=8pt},
					bgblock/.style={fill=gray!12, draw=gray!35, rounded corners=8pt}
				}
				
				\filldraw[bgblock] (-1.45,3.55) rectangle (1.45,6.45); 
				\filldraw[bgblock] (-1.45,0.55) rectangle (0.45,2.45); 
				\filldraw[bgblock] (4.55,4.55) rectangle (6.45,6.45); 
				\filldraw[bgblock] (3.55,0.55) rectangle (6.45,3.45); 
				
				\node[rectnode, label=above right:$\bf c_1$] (c1) at (0,5) {};
				\node[happycirclenode,label=left:$a_1$]  (a1) at (-1,5) {};
				\node[happycirclenode,label=above:$a_2$] (a2) at (0,6) {};
				\node[happycirclenode,label=right:$a_3$] (a3) at (1,5) {};
				\node[happycirclenode,label=below:$a_4$] (a4) at (0,4) {};
				
				\node[rectnode, label=above right:$\bf c_2$] (c2) at (0,2) {};
				\node[circlenode,label=left:$a_5$]  (a5) at (-1,2) {};
				\node[circlenode,label=below:$b_5$] (b5) at (0,1) {};
				
				\node[rectnode, label=above left:$\bf c_3$] (c3) at (5,5) {};
				\node[circlenode,label=above:$a_6$] (a6) at (5,6) {};
				\node[circlenode,label=right:$b_6$] (b6) at (6,5) {};
				
				\node[rectnode, label=above left:$\bf c_4$] (c4) at (5,2) {};
				\node[happycirclenode,label=left:$b_1$] (b1) at (4, 2) {};
				\node[happycirclenode,label=right:$b_2$] (b2) at (6,2) {};
				\node[happycirclenode,label=below:$b_3$] (b3) at (5,1) {};
				\node[happycirclenode,label=above:$b_4$] (b4) at (5,3) {};
				
				\node[starnode] at (0,5) {};
				\node[starnode] at (0,2) {};
				\node[starnode] at (5,2) {};
				
				\draw[<->, red, thick, bend right=20] (c1) to (c2);
				\draw[<->, red, thick, bend right=20] (c2) to (c4);
				\draw[<->, red, thick, bend right=20] (c4) to (c1);
		\end{tikzpicture}}
		\caption{Stop placement at $c_2$, $c_3$, $c_4$}
		\label{fig:motivation_example_solution_2}
	\end{subfigure}
	\caption{Transit stop placement example. Each travel route connects an agent pair $(a_i, b_i)$, with six agents in total (blue circles). There are four candidate stop locations, $c_1$ to $c_4$ (red squares). Panels (a) and (b) illustrate two different placement choices, with the selected stops marked by yellow stars. Red arrows indicate the shuttle transit routes.}
	\label{fig:motivation_example}
\end{figure}

In this work, we aim to identify algorithms which can provably guarantee fair transit stop placements in general metric spaces.
Since the provably fair algorithms introduced by \citet{BEL25a} exploit the particular structure of the line, we must take a different approach.
We note that our problem shares substantial overlap with fair centroid clustering~\citep{CFLM19a,MiSh20a,CMS24a,ALMV24,KePe24a,CSY25a}, which also involves selecting a prescribed number of centers (or stops) for a given set of points while pursuing fairness guarantees.
Indeed, the application of a fair clustering algorithm to our previous example would certainly require the selection of both $c_1$ and $c_4$.  
This observation naturally raises the question of whether existing fair clustering algorithms, backed by a rich body of research on fairness, can be adapted to achieve our fairness objectives in the context of transit stop placement.  
\begin{quote}
	\textit{To what extent can fair clustering algorithms guarantee proportionally representative and fair transit stop placements in general metric spaces?}
\end{quote}
The two settings differ, however, in that agents are associated with a pair of points in our model whereas they are modeled as a single point in clustering, and thus clustering algorithms necessarily ignore potentially useful information on agents' preferences.
Furthermore, clustering algorithms do not capture that agents' preferences may also depend on their transit cost between stops.
We would like to define transit cost flexibly on instances defined on general metric spaces since walking and other transport infrastructure may not necessarily align.\footnote{For example, a footbridge may allow an agent to move between two points on foot faster than other infrastructure allows.}
\begin{quote}
	\textit{Is it possible to devise algorithms which outperform clustering algorithms if we take a more fine-grained view of agents' preferences and flexibly account for agents' transit times?}
\end{quote}

\subsection{Our Contributions}
Our first contribution is a flexible model of transit stop placement in general metric spaces.
In the model of \citet{BEL25a}, transit times between stops are assumed to be proportional to the walking distance between the two stops.
Our model, at its most general, allows for significantly more general transit times, requiring only that they abide by a distance function which satisfies the triangle inequality.
As was done in \citet{BEL25a}, we define an approximate version of justified representation, $\beta$-JR, which requires that all agents in a deviating group prefer the deviation by a factor $\beta\geq 1$. 
We also define a bi-criteria approximation of core, $(\alpha,\beta)$-core, in which the $\alpha$ factor strengthens the deviating group size requirement.

We first establish formal connections between centroid clustering and transit stop placement (TrSP) under the assumption of negligible transit times. 
By reducing TrSP to clustering, we show that any clustering algorithm satisfying $\rho$-Proportional Fairness (PF) can be used to guarantee a $(2, \rho)$-core outcome for TrSP. 
Since the Greedy Capture algorithm gives a $(1+\sqrt{2})$-PF solution in clustering, a direct corollary of our theorem is that this algorithm in the context of TrSP, which we refer to as \GCT, guarantees $(2, 1+\sqrt{2})$-core.
We also prove that \GCT satisfies $(2+\sqrt{5})\approx 4.236$-JR, and that both of these bounds are tight.
In the reverse direction, we give a mapping from clustering instances to TrSP instances which demonstrates that any $\beta$-Justified Representation (JR) algorithm for TrSP
can be used to obtain a $2\beta$-PF solution for the original clustering instance.
This insight leads to an impossibility result: in general metric spaces, a JR (and therefore, core) outcome is not guaranteed to exist.
We improve on this impossibility by showing that no TrSP algorithm can guarantee better than $\frac{\sqrt{3}+1}{2} \approx 1.366$-JR in general. 
To understand the limitations of our reduction to clustering with respect to JR, we construct a TrSP instance for which no clustering algorithm can guarantee better than $3$-JR. 

To surpass this barrier, we introduce a novel algorithm, the \textbf{\textit{Expanding Cost Algorithm (ECA)}}, which evaluates and selects stop \textit{pairs} rather than singletons, and directly considers agents' costs rather than mere distances. 
We prove that ECA satisfies $(1+\sqrt{2}) \approx 2.414$-JR, again with a tight bound, thereby improving on the approximation guarantee of every clustering-based algorithm.
Strikingly, this approximation factor holds for any travel times between stops that satisfy the triangle inequality.
In contrast, no clustering algorithm guarantees a constant factor JR approximation under such generality.
Although ECA substantially outperforms clustering algorithms in terms of JR approximation, it fails to satisfy $(\alpha, \beta)$-core for any $\alpha, \beta \geq 1$. 

Given that \GCT and ECA exhibit complementary strengths in approximating JR and the core, we propose a new algorithm, $\lambda$-Hybrid, which interpolates between \GCT and ECA via a tunable parameter $\lambda\geq 0$. This parameter $\lambda$ controls the algorithm's preference for selecting singleton or paired stops. We show that $\lambda$-Hybrid satisfies $$\frac{\lambda+3+\sqrt{\lambda^2+10\lambda+9}}{2}\text{-JR and }(2,\frac{\sqrt{\lambda^2 + 6\lambda + 1} + \lambda + 1}{2\lambda})\text{-core}.$$
A comprehensive picture of JR approximations is depicted in \Cref{fig:overview_jr_approximation_result} while the comparison of \GCT, ECA, and $\lambda$-Hybrid with respect to core approximations is summarized in Table~\ref{tab:summary_of_results_ECA_GC}.

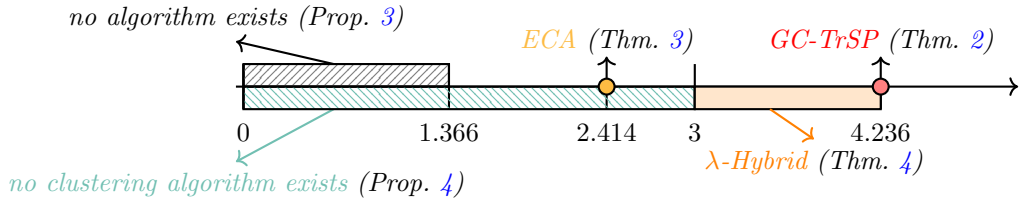
\begin{figure}[!htbp]
	\centering
	\resizebox{.6\linewidth}{!}{
		\begin{tikzpicture}[thick,every node/.style={font=\Large}]
			\tikzset{
				circlenode/.style={
					draw=red!70!black,
					circle,
					fill=red!55,
					inner sep=2pt,
					minimum size=6pt
				},
				starnode/.style={
					draw=orange!80!black,
					circle,
					fill=yellow!80!orange,
					inner sep=2pt,
					minimum size=6pt
				}
			}
			
			\draw[->,black!75,thick] (-1,0) -- (9.4,0);
			\draw[-,thick,black!75] (0,-.3) -- (0,.3);
			
			\foreach \x/\lab in {0,1.366,3}
			\draw[thick,black!75] (2*\x,0.3) -- (2*\x,-0.3) node[below=2pt] {\lab};
			
			\foreach \x/\lab in {2.414,4.236}
			\draw[<-,thick,black!75] (2*\x,0.4) -- (2*\x,-0.3) node[below=2pt] {\lab};
			
			\draw[pattern=north east lines, pattern color=gray] (0,0) rectangle (2*1.366,0.3);
			
			\node (noalgo) [above,align=center,text=black!70] at (0.7,0.6) 
			{\itshape no algorithm \\ \textnormal{(Prop.~\ref{prop:JR_existence_lower_bound})}};
			
			\node (noclualg) [text=blue!70!black,below,align=center] at (.7,-1) 
			{\itshape no clustering algorithm \\ \textcolor{black}{\textnormal{(Prop.~\ref{prop:clustering_impossibility})}}};
			
			\draw[->,black!70] (1.2,0.3) -- (noalgo.south);
			\draw[->,blue!70!black] (1.2,-0.3) -- (noclualg.north);
			
			\draw[pattern=north west lines, pattern color=blue!70!black] (6,0) rectangle (0,-0.3);
			
			\node[color=orange!85!black,above,align=center] at (2*2.414,0.3) 
			{\itshape ECA \\ \textcolor{black}{\textnormal{(Thm.~\ref{thm:ECA_approximates_JR})}}};
			
			\node[starnode] at (2*2.414,0) {};
			
			\draw[fill=orange!20,pattern color=orange] (2*4.236,0) rectangle (6,-0.3);
			
			\node (hybrid) [text=orange!85!black,below right,align=center] at (6,-0.9) 
			{\itshape $\lambda$-Hybrid \\ \textcolor{black}{\textnormal{(Thm.~\ref{thm:JR_approximation_hybrid_ECA})}}};
			
			\draw[->,orange!85!black] (7,-0.3) -- (hybrid.north);
			
			\node[text=red!70!black,above,align=center] at (2*4.236,0.3) 
			{\itshape GC-TrSP \\ \textcolor{black}{\textnormal{(Thm.~\ref{thm:Greedy_Capture_JR_approximation_2_sqrt5})}}};
			
			\node[circlenode] at (2*4.236,0) {};
	\end{tikzpicture}}
	\caption{Overview of JR approximation ratios. The two shaded regions with diagonal lines indicate the lower bounds for general algorithms and clustering algorithms, respectively. The points at $2.414$ and $4.236$ correspond to the ECA and \GCT algorithms. The performance of the $\lambda$-Hybrid algorithm ranges between $3$ and $4.236$, depending on the choice of the parameter $\lambda$.}
	\label{fig:overview_jr_approximation_result}
\end{figure}

\begin{table}[!htbp]
	\centering
	\resizebox{.6\linewidth}{!}{\begin{tabular}{@{}ccc@{}}
			\toprule
			\multirow{2}{*}{}  & \multicolumn{2}{c}{\textbf{Core}} \\ \cmidrule(l){2-3} 
			& Lower Bound   & Upper Bound       \\ \midrule
			\textbf{\GCT}     & \makecell{$(2, 1+\sqrt{2})$ \\ (Prop.~\ref{prop:gc_trsp_core_approximation})}       & \makecell{$(2,1+\sqrt{2})$ \\ (Prop.~\ref{prop:gc_trsp_core_approximation})}  \\
			\midrule
			\textbf{ECA}  & \makecell{$(\gamma,\rho)$-core for any $\gamma,\rho\geq 1$ \\ (Prop.~\ref{prop:ECA_fails_gamma_rho_core})}   & --    \\
			\midrule
			$\lambda$-\textbf{Hybrid}  &  \makecell{$\left(2, \frac{\sqrt{4 \lambda^2 + 12\lambda + 1} + 2\lambda + 1}{4\lambda}\right)$ \\ (Prop.~\ref{prop:hybrid_core_tightness})}   &  \makecell{$\left(2,\frac{\sqrt{\lambda^2 + 6\lambda + 1} + \lambda + 1}{2\lambda}\right)$ \\ (Cor.~\ref{cor:hybrid_core_upper})}
			\\ \bottomrule
	\end{tabular}}
	\caption{Comparison of Greedy Capture for TrSP (\GCT), Expanding Cost Algorithm (ECA), and $\lambda$-Hybrid in terms of approximation ratios for core in general metric space.}
	\label{tab:summary_of_results_ECA_GC}
\end{table}
We note that the algorithm by \citet{BEL25a}, which provides JR in the line instance, is in fact also a clustering algorithm on the line, in which setting it guarantees a PF solution. 
However, it can perform arbitrarily poorly when the candidate centers do not align with data points. 
To address this, we propose a new algorithm called $\ell$-dictator partition algorithm, which generalizes the algorithm by \citet{BEL25a} and satisfies PF in clustering on the line, and thus matches their $(1,2)$-core bound for TrSP on the line. 
Lastly, we prove that a solution minimizing the total cost for TrSP in general metric space is NP-hard to compute. We defer these results to appendices. 

\subsection{Scope and Assumptions}
While our model is quite general, it does not fully encapsulate the problem of designing a public transportation network. 
A significant component of that task is route selection, i.e., the decision of which sequences of stops will traversed by buses, trains, etc. 
That decision in turn affects transit time between stops.
In contrast, we focus on the placement of the stops themselves and treat transit times between stops as exogenous.
As a result, our model is not well-suited to designing large-scale public transportation systems wherein transit times make up a significant portion of agents' costs. 
Instead, our model is more amenable to scenarios in which either (1) a route (or set of routes) is set \text{a priori} and all candidate stops lie along these routes, or (2) total travel times are dominated by walking rather than transit time, and thus do not depend greatly on routing. 

\subsection{Related Work}
The problem of transit stop placement has been extensively studied in the public transportation literature \citep{CeWi86a, Cede16a, ChQi04a}. For instance, \citet{HRJ+16a} investigated the optimization of stop locations in transit networks under elastic travel demand and budget constraints. They formulated the problem as a mixed integer program and developed heuristic algorithms capable of solving large-scale instances. \citet{CBW15a} addressed the challenge of bus stop placement in routes with uneven topography. They incorporated topographical variation into three distinct mathematical models that account for walking speed, the attractiveness of access paths to transit services, and vehicle acceleration at stops. To tackle it, they proposed a heuristic evolutionary algorithm to approximate optimal solutions.

Fair transit stop placement has received growing attention in recent years due to its critical role in promoting equitable access to public transportation \citep{Mart16a, MaLu18a}. For instance, \citet{TBC+22a} studied the equitable public bus network optimization problem through a case study of Singapore’s bus system, formulating efficiency and equity metrics and conducting exploratory experiments to evaluate their real-world impact. Their work underscores the challenges of balancing fairness and efficiency in transit network design. In a complementary direction, \citet{ATM+23a} explored fairness and equity from psychological and cognitive perspectives, highlighting how users’ perceptions and experiences shape their sense of fairness in public transit. \citet{MHV18a} wrote a comprehensive survey on equitable vehicle routing problems (VRPs), reviewing various strategies to embed fairness into routing decisions. More recently, \citet{HBL+24} examined fairness in transportation network design from a welfarist perspective, focusing on selecting a subset of edges in an undirected graph to optimize network performance while maintaining fairness considerations.

In this paper, we consider fairness notions inspired by the concept of \textit{core stability}, a foundational idea that has been extensively applied across various domains, including transferable utility cooperative games~\citep{PeSu07a}, coalition formation~\citep{AzSa15a}, exchange markets~\citep{ShSc74a}, and two-sided matching~\citep{AlGa90a}. 
Although it is typically viewed as a stability concept, it is also amongst the strongest fairness notions in many social choice contexts, including committee voting~\citep{LaSk23a}, fair mixing~\citep{ABM20a}, and fair allocation~\citep{FMS18a}.

\citet{BEL25a} were the first to explore fairness in the transit stop placement problem. Their work focused on the line metric. They presented polynomial-time algorithms for cost minimization and designed algorithms with provable fairness guarantees, including justified representation (JR) and a factor-two approximation for core stability.
Importantly, they note that ``an important research challenge is to develop a richer framework that can be used to reason about fairness in more realistic models of public transport.” 
Motivated by this, we consider a more general setting in which agents and candidate bus stops lie in an arbitrary metric space. In contrast to the positive results \citet{BEL25a} obtained for the line, we will show that in general, the cost minimization problem becomes computationally intractable, and a solution satisfying JR need not exist. 

The TrSP problem we study in this paper is closely related to the study of fairness in centroid selection for clustering~\citep{CMM21a,CFLM19a,MiSh20a,ALMV24,KePe24a}. 
\citet{CFLM19a} initiated the study of fairness in clustering, introducing the proportional fairness (PF) axiom, requiring that no ``large enough''group of datapoints has an incentive to collectively deviate to an unselected candidate center, an idea inspired by core stability. 
They showed that outcomes satisfying reasonable approximations of PF are guaranteed to exist and can be computed via the ``Greedy Capture" algorithm.
Following this, a substantial body of work has emerged on fairness and proportionality in clustering. 
For example, \citet{MiSh20a} extended the analysis to unconstrained centroid candidate sets; \citet{ALMV24} proposed a fairness axiom targeting proportional representation; and \citet{KePe24a} established connections between fair clustering and committee voting, and analyzed the connections between several of the axioms introduced in this literature. 
The TrSP problem studied in this paper can be viewed as a variant of centroid selection, where centroids correspond to transit stops. However, unlike standard clustering where each datapoint represents an agent, the TrSP model associates two datapoints with each agent. In this work, we will highlight both the inherent connections and the fundamental differences between these two problem domains.

\section{Preliminaries}

\subsection{Transit Stop Placement Model} 
For any $t\in \mathbb{N}$, let $[t]:=\{1,2,\dots,t\}$.
Let $\calX$ be a set and $d$ and $d'$ be two distinct distance functions satisfying the triangle inequality. 
The metric space $(\calX, d)$ represents \emph{walking cost\footnote{A standard interpretation of the cost is travel time.}} while the metric space $(\calX,d')$ represents \emph{transit cost}.
An instance of the \textit{Transit Stop Placement} (TrSP) model is defined by the tuple $\mathcal{I}=\langle N, \mathcal{C}, \{\theta_i\}_{i\in N}, k\rangle$, where $N=[n]$ is a finite set of $n$ agents, $\mathcal{C}$ is a set of $m$ candidate transit stops, $k$ is a positive integer, and $\theta_i=(a_i,b_i)$ denotes the \textit{endpoints} between which agent $i$ travels, where $a_i, b_i \in \calX$. 
We denote by $\Theta(S)$ the multiset of all endpoints associated with agents in a subset $S \subseteq N$. 
For simplicity, let $\Theta := \Theta(N)$. 
A solution to a TrSP instance is a subset of candidate transit stops $Y \subseteq \mathcal{C}$. The solution $Y$ is said to be \textit{feasible} if $|Y| \leq k$. 
For ease of exposition, we define $d(i, X)=\min_{j\in X}d(i,j)$.
Given any feasible solution $Y$, the cost of agent $i$ with type $(a_i,b_i)$ is given by
\begin{equation*}
	\scalebox{1}{$
		c_i(Y) = 
		\min \left\{ d(a_i,b_i),
		\min_{y_1,y_2\in Y}\bigl[d(a_i,y_1) + d'(y_1,y_2) + d(y_2, b_i)\bigr] \right\}.
		$}
\end{equation*}
Intuitively, each agent $i$ minimizes her travel cost by either walking directly from $a_i$ to $b_i$, or by walking from $a_i$ to a transit stop $y_1 \in Y$, taking the transit system from $y_1$ to another stop $y_2 \in Y$, and then walking from $y_2$ to her destination $b_i$. 
Unless otherwise specified, we adopt the assumption that $d'(y_1,y_2)=0$ for all $y_1,y_2 \in Y$. 
This assumption, which we refer to as \emph{null transit cost}, captures scenarios where the cost of using the transit system is negligible compared to walking. 
Critically, our central algorithmic result will hold for arbitrary transit cost.

The total travel cost of a solution $Y \subseteq \mathcal{C}$ is defined as $c(Y) = \sum_{i \in N} c_i(Y)$. 
We first remark that a solution which minimizes total travel cost among all feasible solutions is NP-hard to compute, even in the special case of null transit times.
Since we are mainly focused on fairness in this work, we defer the proof to Appendix~\ref{app:minimizing_total_travel_cost}.
This result marks a contrast between general metric spaces and the line, since the latter setting admits a polynomial time algorithm for any transit cost which are directly proportional to the walking cost \citep{BEL25a}.

\begin{restatable}{proposition}{propCostMinimization}
	\label{prop:np_hard_total_cost_minimization}
	Unless $P=NP$, there is no polynomial time algorithm which computes a minimum cost solution to the TrSP problem, even under null transit cost, i.e., even when $d'(i,j)=0$, for all $i,j\in\mathcal X$.
\end{restatable}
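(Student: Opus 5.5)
We reduce from a classical NP-hard facility-location problem, the decision version of $k$-median in a general (e.g., graph shortest-path) metric with a discrete candidate set. Equivalently, we can reduce from Set Cover or Dominating Set, which makes the gadget arithmetic easier to control. The plan is to take an instance of $k$-median with clients $P$, facilities $F$, metric $\delta$ and budget $k$. From it we build a TrSP instance with null transit cost such that the optimal total travel cost of the TrSP instance determines the optimal $k$-median cost.

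The key observation under null transit cost is that $c_i(Y)=\min\{d(a_i,b_i),\, d(a_i,Y)+d(b_i,Y)\}$. If $d(a_i,b_i)$ is made huge, each agent simply pays the sum of the distances of its two endpoints to $Y$.

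The construction is as follows. For each client $p\in P$ we create one agent $i_p$ with $a_{i_p}=p$. For its destination we add a single far-away hub point $h$ at distance $L$ from everything, and set $b_{i_p}=h$. We place a candidate stop at $h$ itself, or rather at a point $h'$ with $d(h,h')=0$ (duplicating points is harmless in a pseudometric, or we use a tiny $\epsilon$). The candidate set is $F\cup\{h'\}$ and the budget is $k+1$. The metric is the shortest-path metric on the graph of the $k$-median instance together with the pendant hub at distance $L$. This keeps the triangle inequality automatically. We choose $L$ larger than $n$ times the maximum $k$-median cost, so that $d(a_i,b_i)\ge L$ exceeds any reasonable transit route and walking directly is never preferred.

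We then verify the cost equivalence. First, any solution omitting $h'$ costs at least $nL$ in total, since every $b_i$ is $L$ away from all other candidates. A solution containing $h'$ together with $k$ facilities $Y'$ costs $\sum_p \delta(p,Y')$. Each agent can also walk directly at cost $\ge L$, which exceeds $\delta(p,Y')+0$, so the $\min$ always picks the shuttle route. Hence the optimal TrSP cost is below a threshold $T$ if and only if the $k$-median instance has cost at most $T$. Because $k$-median in general metrics, with distances given explicitly by a graph, is NP-hard, a polynomial-time algorithm computing a minimum-cost TrSP solution would decide $k$-median, and this gives the statement of the proposition. If a cleaner source is preferred, we would instead reduce directly from Dominating Set. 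There, clients and facilities are the vertices of the graph, edges have length $1$, and $k$-median cost at most $n-k$ holds if and only if there is a dominating set of size $k$; the rest of the argument is identical.

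The main obstacle is making sure the second endpoint does not give agents extra flexibility that breaks the equivalence. Concretely, we must check that an optimal TrSP solution must include the hub stop, and that walking directly is never cheaper. Both are settled by choosing $L$ large enough, polynomially bounded, for example $L=n\cdot\max\delta+1$, and by checking the triangle inequality for the combined metric. That check is immediate if everything is defined as a shortest-path metric in a single weighted graph.
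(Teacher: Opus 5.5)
Your proposal is correct and uses essentially the same reduction as the paper: from $k$-median, each client becomes an agent's origin, all destinations sit at a common far-away hub with a co-located candidate stop, and the budget increases by one. The only difference is the hub distance: the paper sets it to exactly the $k$-median bound $B$, whereas you choose $L$ larger than every client--facility distance, which ensures no agent's cost is capped by walking and keeps the triangle inequality intact, so your version is if anything slightly more robust.
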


We focus on the fairness axioms introduced by \citet{BEL25a}, both of which draw inspiration from the study of \emph{core stability} in various domains including multi-winner voting, participatory budgeting, and fair clustering (see, e.g., \citep{FGM16a,PiSk22a,CLK+22a,AMS23a,CMY24a,Pete25a}).
In the context of transit stop placement, core is grounded in the principle that, given an instance with $n$ agents and a budget of $k$, each agent is entitled to a $\frac{k}{n}$-proportion of the budget. 
Toward this end, the core comprises the set of feasible imputations in which no coalition of agents can all strictly improve their cost by using their budget on an alternative set of transit stops.
That is, a solution $Y$ is in the core if for any subset of agents $S \subseteq N$ and any transit stop set $T \subseteq \mathcal{C}$ with $|S| \geq |T| \cdot \frac{n}{k}$, there exists at least one agent $i \in S$ such that $c_i(T) \geq c_i(Y)$.

Our other fairness axiom of focus, \emph{justified representation} (JR), weakens core by restricting its attention to deviations consisting of pairs of stops.
That is, JR requires that any group of agents with size at least $\ceil{\frac{2n}{k}}$ should not be able to reduce their travel costs by deviating from solution $Y$ to an alternative pair of stops.
A pair of stops is considered the minimal meaningful unit since no agent can derive benefit from a single stop.
Besides being significantly easier to compute, JR has another major advantage over core.
{Whereas the complexity of checking core in our setting is unsettled, JR can be verified in polynomial time.\footnote{For further details, see Appendix~\ref{sec:experiment_appendix} or \citet[][Appendix C.1]{BEL25a}.}
}

As we will see, JR (and thus core) is not guaranteed to exist in our setting.
As a result, we will study multiplicative relaxations of both properties, which we will now introduce.
For approximate JR, we require that at least one agent in the group improve by no more than a factor $\beta$.
\begin{definition}[$\beta$-Justified Representation ($\beta$-JR)]
	A solution $Y \subseteq \mathcal{C}$ is said to provide $\beta$-JR if for every subset of agents $S \subseteq N$ with $|S| \geq \frac{2n}{k}$ and every pair of transit stops $T\subseteq \mathcal{C}, |T|=2$, there exists an agent $i\in S$ such that $ \beta \cdot c_i(T) \geq c_i(Y)$. 
\end{definition}

For core, we additionally parameterize the size of a deviating group of agents using multiplicative factor $\alpha$.
\begin{definition}[$(\alpha,\beta)$-core]
	Let $\alpha, \beta \geq 1$. A solution $Y \subseteq \mathcal{C}$ is in the \textit{$(\alpha,\beta)$-core} if for every subset of agents $S\subseteq N$ and every transit stop set $T \subseteq \mathcal{C}$ with $|S| \geq \alpha \cdot |T| \cdot \frac{n}{k}$, there exists an agent $i\in S$ such that $\beta \cdot c_i(T) \geq c_i(Y)$. 
\end{definition}

When $\beta=1$, the $(\alpha, 1)$-core aligns with the core approximation defined by \citet{BEL25a}, which represents a multiplicative \textit{size approximation}, relaxing the requirement such that a group of agents can deviate and establish $|T|$ transit stops only if its size is at least $\alpha$ times the number of agents who ``deserve" $|T|$ stops. 
Conversely, when $\alpha=1$, the $(1, \beta)$-core introduces a relaxation on individual cost, aligning with the notion of approximate Proportional Fairness (PF) in fair clustering \citep{CFLM19a}. 

\subsection{Fair Clustering Model}	\label{sec:fair_clustering_model}
We will now review the problem description and relevant definitions of fair centroid clustering. 
To avoid notation confusion with the TrSP instance, we use a slightly modified notation. 
Similar to TrSP, clustering instances consist of points in a metric space $(\calX, d)$. 
A fair clustering instance is given by a tuple $\mathcal{I}'=\langle N', \mathcal{C}', k'\rangle$ where $N'$ is a finite set of $n'$ datapoints, $\mathcal{C}'$ is a set of $m'$ \emph{centers}, and $k'$ is a positive integer. 
A clustering solution is a subset $P \subseteq \mathcal{C}'$ of at most $k'$ centers. 
We will draw and exploit connections between our fairness concepts and the clustering fairness concept known as \textit{proportional fairness}.
\begin{definition}[$\rho$-Proportional Fairness ($\rho$-PF)]
	A clustering solution $P \subseteq \mathcal{C}'$ with $|P| \leq k'$ satisfies $\rho$-Proportional Fairness if, for all $S'\subseteq N'$ with $|S'|\geq \frac{n'}{k'}$ and for all $c\in \mathcal{C}'$, there exists a datapoint $i\in S'$ with $\rho\cdot d(i,c)\geq d(i, P).$
\end{definition}
Put differently, there should be no group of agents large enough to deserve one center that would all prefer that center to their closest center under $P$, even when scaling their alternative distance by $\rho.$

Lastly, we introduce a mapping from instances of TrSP to instances of fair clustering.
Given any TrSP instance $\mathcal{I}=\langle N, \mathcal{C}, k, \{\theta_i\}_{i\in N} \rangle$, we define a clustering instance $\mathcal{I}^C=\langle \Theta, \mathcal{C}, k \rangle$ within the same metric space $(\mathcal{X}, d)$. 
We call this the \emph{clustering instance induced by} $\mathcal{I}$, or simply induced clustering instance, when context is clear.
In words, the clustering instance induced by $\mathcal{I}$ reinterprets the endpoints of agents in the TrSP model as datapoints in a clustering instance and maintains the same candidate set and target selection number.
Notably, the set of feasible outcomes in the TrSP instance $\mathcal{I}$ is identical to that in its induced clustering instance $\mathcal{I}^C$.
As a result, every clustering algorithm immediately yields an algorithm for TrSP instances by the following simple procedure: given a TrSP instance $\mathcal{I}$, run the clustering algorithm on $\mathcal{I}^C$ and return the output.
In the next section, we will reason about the application of clustering algorithms to transit stop placement.

\section{Transit Stop Placement Meets Clustering}

In this section, we establish connections between the TrSP problem and fair clustering. 
We first show that clustering algorithms can be used to approximate fairness in our setting. 

\subsection{Approximate Fairness by Reduction to Clustering}
To establish that fair clustering algorithms can indeed be used to guarantee fairness in TrSP, we prove a metatheorem which uses our reduction from TrSP to clustering under null transit cost assumption.
\begin{theorem} \label{thm:clustering_meta}
	Given a TrSP instance $\mathcal{I}$, if $Y$ is a feasible solution satisfying $\rho$-PF in the induced clustering instance $\mathcal{I}^C$ for some $\rho \geq 1$, then $Y$ is $(2, \rho)$-core solution for $\mathcal{I}$.
\end{theorem}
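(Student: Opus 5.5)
We argue by contraposition: assuming $Y$ is not in the $(2,\rho)$-core of $\mathcal{I}$, we will exhibit a violation of $\rho$-PF in $\mathcal{I}^C$. The induced instance has $n' = 2n$ datapoints (the multiset $\Theta$) and $k' = k$, so a PF-blocking group needs at least $\frac{2n}{k}$ datapoints. Under null transit cost, for any stop set $X$ with $|X|\ge 1$ (we may assume $|X|\ge 1$), the cost satisfies
\[
c_i(X) = \min\{d(a_i,b_i),\, d(a_i,X)+d(b_i,X)\}.
\]

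Suppose $S\subseteq N$ and $T\subseteq\mathcal{C}$ satisfy $|S|\ge 2|T|\frac{n}{k}$ and $\rho\, c_i(T) < c_i(Y)$ for every $i\in S$. Fix $i\in S$. Since $c_i(Y)\le d(a_i,b_i)$, we have $\rho\, c_i(T) < d(a_i,b_i)$. With $\rho\ge 1$ this forces $c_i(T) < d(a_i,b_i)$, so $c_i(T) = d(a_i,T)+d(b_i,T)$.

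We also have $c_i(Y)\le d(a_i,Y)+d(b_i,Y)$, which gives
\[
\rho\bigl(d(a_i,T)+d(b_i,T)\bigr) < d(a_i,Y)+d(b_i,Y).
\]
Consequently at least one endpoint $e_i\in\{a_i,b_i\}$ satisfies $\rho\, d(e_i,T) < d(e_i,Y)$: if both endpoints failed, summing the two reverse inequalities would contradict the display above. This per-agent endpoint selection is the crux of the argument. It is also where the factor $2$ in $\alpha$ is lost, since only one endpoint per agent is guaranteed to be a witness.

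Next we assign each witness endpoint $e_i$ to a stop $t(i)\in T$ achieving $d(e_i,T)$. We then pigeonhole over the $|T|$ stops. Some $t^*\in T$ is assigned to at least $|S|/|T| \ge \frac{2n}{k} = \frac{n'}{k'}$ agents. The corresponding endpoints form a set $S'\subseteq\Theta$ with $|S'|\ge \frac{n'}{k'}$. They are distinct elements of the multiset because they come from distinct agents. For every $e\in S'$ we have $\rho\, d(e,t^*) < d(e,Y)$, and $t^*\in\mathcal{C}=\mathcal{C}'$. This violates $\rho$-PF in $\mathcal{I}^C$, a contradiction, so $Y$ is in the $(2,\rho)$-core.

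The main step to check is the endpoint-selection inequality, including the degenerate cases. If $T=\emptyset$, then $c_i(T)=d(a_i,b_i)$, which is never strictly less than $c_i(Y)$, so no blocking coalition exists. The case $|T|=1$ causes no problem, since an agent may use the same stop twice. Everything else is bookkeeping on the sizes.
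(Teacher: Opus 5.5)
Your proof is correct, and it is essentially the paper's argument run in contrapositive. The paper applies $\rho$-PF at each stop of $T$ with a union bound to show that more than $|S|$ endpoints of $\Theta(S)$ do not $\rho$-improve, so some agent has both endpoints non-improving and $\rho\, c_i(T)\ge c_i(Y)$; you instead extract one $\rho$-improving endpoint per deviating agent and pigeonhole over $T$ to get a PF-violating group of size at least $2n/k$, with the same counting, the same use of $\rho\ge 1$, and the same source of the factor $2$.
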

\begin{proof}
	Given any TrSP instance $\mathcal{I}=\langle N, \mathcal{C}, \{\theta_i\}_{i\in N},k \rangle$, consider the induced clustering instance $\mathcal{I}^C=\langle \Theta, \mathcal{C}, k \rangle$ and let $Y$ be a feasible centroid selection satisfying $\rho$-PF. 
	For any subset of agents $S \subseteq N$ and any subset of transit stops $T \subseteq \mathcal{C}$ with $|S| \geq |T| \cdot \frac{2n}{k}$ in $\mathcal{I}$, let $\Theta(S)$ denote the multiset of datapoints corresponding to agents in $S$. Then $|\Theta(S)| = 2 \cdot |S|$, and hence we have $\frac{1}{2}\cdot |\Theta(S)| \geq |T| \cdot \frac{2n}{k}$. 
	Define the set $Q:= \{j\in \Theta(S): \rho \cdot d(j, T) \geq d(j, Y)\}$. Since the solution $Y$ satisfies $\rho$-PF in $\mathcal{I}^C$ and $|\Theta|=2n$, it follows that for any candidate center $c\in T$, the number of datapoints $j\in \Theta(S)$ such that $\rho \cdot d(j, c) < d(j, Y)$ is strictly less than $\frac{2n}{k}$. Therefore,
	$|Q| > |\Theta(S)| - |T| \cdot \frac{2n}{k} \geq |\Theta(S)| -\frac{1}{2}\cdot |\Theta(S)|=\frac{1}{2}\cdot |\Theta(S)|$.
	Note that each datapoint in $Q$ is an agent's endpoint in $\mathcal{I}$ and $|Q| > \frac{1}{2}\cdot |\Theta(S)|=|S|$. By the pigeonhole principle, there exists an agent $i^\ast \in S$ such that both $a_{i^*}$ and $b_{i^*}$ belong to $Q$. Hence, it holds that $\rho \cdot d(a_{i^*}, T) \geq d(a_{i^*}, Y)$ and $\rho \cdot d(b_{i^*}, T) \geq d(b_{i^*}, Y)$. 
	This tells us the following about the cost of agent $i^*$:
	\begin{align*}
		\rho \cdot c_{i^*}(T)
		&= \rho \cdot \min \left\{d(a_{i^*}, b_{i^*}), \min_{\tau_1,\tau_2\in T}[d(a_{i^*},\tau_1)+d(b_{i^*}, \tau_2)]\right\} \\
		&= \min \left\{\rho \cdot d(a_{i^*}, b_{i^*}), \rho \cdot d(a_{i^*}, T)+\rho \cdot d(b_{i^*}, T)\right\} \\
		&\geq \min \left\{\rho \cdot d(a_{i^*}, b_{i^*}), d(a_{i^*}, Y)+ d(b_{i^*}, Y)\right\}  \tag{$\rho \cdot d(a_{i^*}, T) \geq d(a_{i^*}, Y)$; $\rho \cdot d(b_{i^*}, T) \geq d(b_{i^*}, Y)$} 
		\\
		&\geq \min \left\{d(a_{i^*}, b_{i^*}), d(a_{i^*}, Y)+ d(b_{i^*}, Y)\right\}  \tag{$\rho \geq 1$}\\
		&= c_{i^*}(Y).
	\end{align*}
	Therefore, for any subset of agents $S \subseteq N$ and any subset of transit stops $T \subseteq \mathcal{C}$ with $|S| \geq |T| \cdot \frac{2n}{k}$ in $\mathcal{I}$, there always exists an agent $i^*$ such that $\rho \cdot c_{i^*}(T) \geq c_{i^*}(Y)$. This implies that the solution $Y$ satisfies $(2, \rho)$-core in $\mathcal{I}$.
\end{proof}

From \Cref{thm:clustering_meta}, it immediately follows that $(2, 1+\sqrt{2})$-core solutions for the TrSP problem
can be computed using the clustering algorithm known as Greedy Capture \citep{CFLM19a}. 
When applied to TrSP instances (through the reduction described in \Cref{sec:fair_clustering_model}), we refer to this algorithm as Greedy Capture for TrSP (\GCT).

The algorithm works by uniformly growing balls around candidate transit stops and iteratively adding stops whose balls capture a sufficient number of uncaptured endpoints.
In more detail, each endpoint is first marked ``active'', and \GCT smoothly increases radius $r$ and iteratively ``opens'' stops which are at distance at most $r$ from at least $\ceil{\frac{2n}{k}}$ active endpoints. 
Opened stops are added to the solution and endpoints are deactivated as soon as they are contained in an opened ball.
The algorithm terminates when all agents are deactivated.\footnote{We note that Greedy Capture may terminate before selecting $k$ centers, an artifact that appears in some of our lower bound arguments. This behavior can be avoided by deactivating exactly $2n/k$ (fractional) endpoints for each selected center. It is not clear whether the choice of which endpoints to deactivate can be used to improve the bounds. Nonetheless, we also give lower bound results like \Cref{prop:clustering_impossibility} which apply to all clustering algorithms.}
See \Cref{alg::greedy_capture_for_bus_stop_selection} for formal pseudocode.
\begin{algorithm}[h!]
	\caption{Greedy Capture for TrSP (\GCT)}
	\label{alg::greedy_capture_for_bus_stop_selection}
	\begin{algorithmic}[1]
		\REQUIRE TrSP instance $\mathcal{I}=\langle N, \mathcal{C}, k, \{\theta_i\}_{i\in N} \rangle$.
		\ENSURE Solution $Y$. 
		\STATE Create $\mathcal{I}^C=\langle \Theta, \mathcal{C}, k\rangle$, the clustering instance induced by $\mathcal{I}$.
		\STATE Denote the distance ball of candidate $c\in \mathcal{C}$ with radius $r$ by 
		$B(c, r)\leftarrow \{j\in \Theta : d(j, c) \leq r\}$.
		\STATE Initialize $r\leftarrow 0$, $Y\leftarrow \emptyset$; $\mathcal{N}\leftarrow \Theta$.
		\WHILE{$\mathcal{N} \neq \emptyset$}
		\STATE Smoothly increase $r$.
		\WHILE{$\exists~y \in Y $ s.t. $\lvert B(y,r) \cap \mathcal{N} \rvert \geq 1$}
		\STATE $\mathcal{N}\leftarrow \mathcal{N} \setminus B(y,r)$\;
		\ENDWHILE
		\WHILE{$\exists~c \in \mathcal{C} \setminus Y$ s.t. $\lvert B(c,r) \cap \mathcal{N} \rvert \geq \ceil{\frac{2n}{k}}$}
		\STATE $Y \leftarrow Y \cup \{c\}$\;
		\STATE $\mathcal{N} \leftarrow \mathcal{N} \setminus B(c,r)$\;
		\ENDWHILE
		\ENDWHILE
		\STATE Return $Y$.
	\end{algorithmic}
\end{algorithm}

In fact, as we will now show, it turns out that $(2,1+\sqrt{2})$-core is the best achievable approximation factor for \GCT and thus the analysis provided by \Cref{thm:clustering_meta} is tight.
We defer instances proving tightness of approximation to Appendix~\ref{app:gc_trsp_core_approx}.
\begin{restatable}{proposition}{propGcCoreApprox}
	\label{prop:gc_trsp_core_approximation}
	\GCT algorithm (\Cref{alg::greedy_capture_for_bus_stop_selection}) satisfies $(2,1+\sqrt{2})$-core. However, for any $\delta,\varepsilon>0$, there exists an instance for which \GCT violates $(2-\delta,1+\sqrt{2}-\varepsilon)$-core.
\end{restatable}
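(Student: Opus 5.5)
The upper bound follows from \Cref{thm:clustering_meta}, so the work is in the tightness construction. First I will check that \GCT is exactly Greedy Capture of \citet{CFLM19a} run on the induced instance $\mathcal{I}^C=\langle \Theta,\mathcal{C},k\rangle$. This instance has $n'=|\Theta|=2n$ datapoints and $k'=k$, so the capture threshold $\ceil{\frac{2n}{k}}$ used in \Cref{alg::greedy_capture_for_bus_stop_selection} equals $\ceil{\frac{n'}{k'}}$, which is the threshold in the clustering algorithm. Since Greedy Capture is known to output a $(1+\sqrt{2})$-PF solution in any metric, the output $Y$ is $(1+\sqrt{2})$-PF for $\mathcal{I}^C$. \Cref{thm:clustering_meta} with $\rho=1+\sqrt{2}$ then gives that $Y$ is in the $(2,1+\sqrt{2})$-core of $\mathcal{I}$.

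For tightness I need, given $\delta,\varepsilon>0$, an instance with a deviating pair $T=\{c_1,c_2\}$. Deviations with $|T|=1$ are useless: with one stop the only transit route is $d(a_i,y)+d(y,b_i)\ge d(a_i,b_i)$. I also need a group $S$ with $|S|\ge (2-\delta)\cdot 2\cdot \frac{n}{k}$, that is, close to $\frac{4n}{k}$ agents, such that every $i\in S$ has $c_i(Y)>(1+\sqrt{2}-\varepsilon)\,c_i(T)$. The plan is to ``double'' the standard tight instance for Greedy Capture. Each agent of $S$ has $a_i$ near $c_1$ and $b_i$ near $c_2$, with $d(a_i,b_i)$ huge, so walking is never competitive. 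The $a$-endpoints are split into two groups $A_1,A_2$, each of size $\ceil{\frac{2n}{k}}-1$, so that neither group alone can trigger the opening of $c_1$. Together they hold about $\frac{4n}{k}$ points, yet they only become capturable by $c_1$ at a radius larger than the one at which two competing stops open.

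The geometry follows the $1+\sqrt{2}$ gadget. Each point of $A_j$ is at distance $1$ from $c_1$, and points in different groups are far apart, except through $c_1$. A competing candidate $y_j$ lies at distance $\sqrt{2}$ from the points of $A_j$. Each $y_j$ also has a set $F_j$ of filler endpoints, one per point of $A_j$ suffices, placed so that $y_j$'s ball first reaches $\ceil{\frac{2n}{k}}$ active endpoints at radius about $\sqrt{2}$. This happens before $c_1$'s ball at radius $\sqrt{2}$ can contain $\ceil{\frac{2n}{k}}$ active endpoints, which would require mixing $A_1$ and $A_2$. I still need to check that at radius $1$ the ball around $c_1$ contains all of $A_1\cup A_2$, since that would already let $c_1$ open. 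To prevent this, I will place $c_1$ at distance $1$ from each of its own points while letting each point of $A_j$ also be at distance $<1$ from $y_j$. I plan to tune this with the exact extremal configuration of the Greedy Capture lower bound of \citet{CFLM19a}. In that configuration, the captured points end up at distance $(1+\sqrt{2})\cdot d(\cdot,c)$ from their nearest open center because the filler points sit on the far side of $y_j$.

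After $y_1,y_2$ open, all of $A_1\cup A_2$ is deactivated, so $c_1$ is never opened. The symmetric construction around $c_2$ handles the $b$-endpoints. The final check is that the agent costs satisfy $c_i(Y)=d(a_i,Y)+d(b_i,Y)\approx(1+\sqrt{2})(d(a_i,c_1)+d(b_i,c_2))=(1+\sqrt{2})\,c_i(T)$ up to an error that vanishes as the perturbations shrink. The group size is $2(\ceil{\frac{2n}{k}}-1)$, with $n$ counting all agents including fillers, and its ratio to $2\cdot\frac{2n}{k}$ tends to $1$ as $\frac{n}{k}\to\infty$. Since the $\alpha$ condition is $|S|\ge\alpha|T|\frac{n}{k}=2\alpha\frac{n}{k}$, this group size gives $\alpha$ close to $2\cdot$ (size ratio), which should beat $2-\delta$ for large $\frac{n}{k}$.

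The main obstacle is the middle step: simultaneously forcing the factor $1+\sqrt{2}$ and the size factor close to $2$. Filler endpoints belong to agents, and those agents are counted in $n$. Each open stop also consumes budget, and there must be at most $k$ of them. So I must check that the fillers and the extra open stops do not inflate $n$ relative to $\frac{4n}{k}$, and that the ball-growth order is exactly as intended. If the simple doubling fails, I would first try letting the filler agents be the same agents for the $a$-side and the $b$-side gadgets, with both endpoints used as fillers. This halves the number of extra agents.
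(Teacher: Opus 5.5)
Your upper bound is correct and is the paper's argument: \GCT is Greedy Capture run on $\mathcal{I}^C$ with threshold $\ceil{\frac{2n}{k}}=\ceil{\frac{n'}{k'}}$, so it is $(1+\sqrt{2})$-PF, and \Cref{thm:clustering_meta} does the rest. The tightness half, however, has a genuine gap at exactly the step you defer. Write $t=\ceil{\frac{2n}{k}}$.

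Splitting the $a$-endpoints of $S$ into $A_1,A_2$ of size $t-1$ does not stop $c_1$ from opening. The ball is centred at $c_1$, so if every point of $A_1\cup A_2$ is at distance $1$ from $c_1$, then $B(c_1,1)$ contains $2t-2\ge t$ active endpoints. Hence $c_1$ opens at radius $1$, before either $y_j$ reaches radius $\sqrt{2}$. That $A_1$ and $A_2$ are far from each other is irrelevant, and distance $\sqrt{2}$ versus $1$ would only give a factor $\sqrt{2}$ on that side anyway. Your patch of putting $A_j$ within distance $<1$ of $y_j$ destroys the ratio, since those agents then have $d(a_i,Y)<1=d(a_i,c_1)$.

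The underlying obstruction is as follows. Let $R$ be the largest distance from $c_1$ to an $a$-endpoint of $S$, and suppose $c_1$ never opens. When the radius reaches $R$, at least $|S|-(t-1)=t-1$ of these endpoints (for $|S|=2t-2$) are already deactivated, and so lie within $R$ of $Y$. They can therefore carry a factor near $1+\sqrt{2}$ on that side only if they lie at distance about $(\sqrt{2}-1)R$ from $c_1$. A group at uniform distance from $c_1$, as you propose, cannot work. The appeal to an unspecified ``extremal configuration'' (with fillers on the far side of $y_j$ producing the factor) is where the construction is missing.

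The missing idea, which is the paper's instance, is to make $S$ heterogeneous and use a single competitor per side. On the $a$-side:
\begin{itemize}
\item $N_1$ ($t-1$ agents) sits at distance $1$ from $\tau_1$;
\item $N_2$ ($t-1$ agents) sits at distance $\sqrt{2}-1$ from $\tau_1$;
\item a filler group $N_3$ sits at distance $1$ from $\tau_1$;
\item the competitor $\tau_2$ is at distance $1-(\sqrt{2}-1)\varepsilon$ from $a_2,a_3$ and $1+\sqrt{2}-\varepsilon$ from $a_1$, consistent with the path $a_1\to\tau_1\to a_2\to\tau_2$.
\end{itemize}
Below radius $1$, $\tau_1$ sees only the $t-1$ close points, while $\tau_2$ captures $N_2\cup N_3$ just before radius $1$. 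Afterwards only the $t-1$ far points remain active near $\tau_1$, so it never opens. Per side, the far agents pay about $1+\sqrt{2}$ against $1$, and the close agents about $1$ against $\sqrt{2}-1$. The $b$-side mirrors this with the same agents, which is your shared-filler idea. Taking $|N_1|=|N_2|=6H-1$, $|N_3|=3H+2$ and $k=5$ gives $t=6H$ and $|S|=12H-2\ge(2-\delta)\cdot 2\cdot\frac{n}{k}$ once $H\ge\frac{1}{3\delta}$, which also settles your counting concern.
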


One of the apparent drawbacks of the core approximation obtained by Greedy Capture is that it strengthens the coalition size requirement by a factor $2$. 
This effectively halves each group's representative decision power when considering deviations. 
It turns out that the coalition size requirement \emph{must} be strengthened to some extent in order for \GCT to give any bounded guarantee with respect to core.
In particular, it holds that \GCT does not satisfy $(1,\rho)$-core for any $\rho\geq 1$.\footnote{Consider an example with two agents and $k=3$ on the unit interval. Suppose $\mathcal{C}=\{0, \frac{1}{4}, \frac{1}{2}, \frac{3}{4}, 1\}$ and the voters have endpoints $(0,\frac{1}{2})$ and $(0,1)$. 
	\GCT selects $\{0,\frac{3}{4}\}$, causing each agent to incur a cost of $\frac{1}{4}$. 
	Note that the solution $\{0,\frac{1}{2},1\}$ is feasible and gives each agent a cost of 0.}
As a natural next step, we investigate \GCT with respect to JR, a property which maintains the proportional coalition size requirement but restricts considered deviations to those consisting of pairs of stops.
The following result shows that \GCT achieves a $(2+\sqrt{5})$-approximation to JR, and this bound is tight.
\begin{restatable}{theorem}{thmGcJrApprox}
	\label{thm:Greedy_Capture_JR_approximation_2_sqrt5}
	\GCT satisfies $(2+\sqrt{5})$-JR. However, for any $\varepsilon > 0$, there exists an instance for which \GCT violates $(2+\sqrt{5}-\varepsilon)$-JR.
\end{restatable}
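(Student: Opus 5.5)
The plan is to argue by contradiction. Fix $S\subseteq N$ with $|S|\ge \frac{2n}{k}$ and a pair $T=\{t_1,t_2\}$. Suppose every $i\in S$ has $c_i(Y) > \beta\, c_i(T)$, where $\beta=2+\sqrt5$ and $Y$ is the output of \GCT.

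\textbf{Step 1: reduce to agents that use the deviating pair.} Since $c_i(Y)\le d(a_i,b_i)$ and $\beta\ge 1$, no $i\in S$ can have $c_i(T)=d(a_i,b_i)$. Hence $c_i(T)=d(a_i,T)+d(b_i,T)$, which is strictly smaller than $d(a_i,b_i)$, for every $i\in S$.

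\textbf{Step 2: find a crowded ball around one stop.} The $2|S|$ endpoints of $S$ are split between $t_1$ and $t_2$ according to their nearest stop in $T$. One of them, say $t_1$, is nearest for at least $|S|\ge\lceil 2n/k\rceil$ endpoints ($|S|$ is an integer). Order these endpoints by their distance to $t_1$. Let $r^*$ be the smallest radius at which $B(t_1,r^*)$ contains $\lceil 2n/k\rceil$ of them. Every endpoint $j$ in this ball belongs to an agent $i$ with $d(j,t_1)\le r^*$. Moreover, $r^*\le c_{i^\dagger}(T)$ for the agent $i^\dagger$ owning the farthest endpoint in the ball.

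\textbf{Step 3: use the Greedy Capture invariants.} I would use two standard facts about Greedy Capture.
\begin{enumerate}
\item If $t_1\notin Y$, then by time $r^*$ at least one endpoint $j$ of the ball was deactivated by an opened stop. Otherwise $t_1$ itself would have been opened at radius $r^*$. That opened stop $y$ satisfies $d(j,y)\le r^*$.
\item If instead $t_1\in Y$, then endpoints near $t_1$ are cheap directly.
\end{enumerate}
Every endpoint is eventually deactivated, but that alone gives no useful bound. The needed bound on the partner endpoint $j'$ of the agent owning $j$ comes from the same argument run at $t_2$ or $t_1$. Either $j'$ is close to $t_1$ and then close to $y$ by the triangle inequality, giving $d(j',Y)\le d(j',t_1)+r^*+r^*$. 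Or $j'$ is near $t_2$, and a second crowdedness argument applies, or we fall back on $c_i(Y)\le d(a_i,b_i)$.

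\textbf{Step 4: balance the bounds.} This yields inequalities of the form $c_i(Y)\le c_i(T)+2r^*$ for the agent $i$ whose endpoint was captured. One also has, for the agent $i^\dagger$ attaining the radius, $c_{i^\dagger}(T)\ge r^*$. The main obstacle is choosing \emph{which} agent to charge. Agents with small $c_i(T)$ relative to $r^*$ are charged through the captured endpoint. Agents with large $c_i(T)$ relative to $r^*$ are charged via the triangle inequality to $Y$ through the crowded region. I would introduce the ratio $x=r^*/\min_i c_i(T)$ and bound $c_i(Y)/c_i(T)$ in each case. Equating the worst-case ratios of the two cases should produce the quadratic $x^2-4x-1=0$, whose positive root is $2+\sqrt5$. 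I expect getting this case analysis exactly right to be the hard step. In particular, ensuring the charged agent's \emph{second} endpoint is also bounded is delicate, which is where JR differs from the per-endpoint PF argument of \Cref{thm:clustering_meta}.

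\textbf{Step 5: tightness.} For the lower bound I would build an instance guided by the equality cases of Step 4. A group of $\lceil 2n/k\rceil$ agents has endpoints clustered around $t_1$ and $t_2$ at carefully chosen distances. Decoy candidates placed at distance about $(1+\sqrt5)/2$-scaled radii capture some of these endpoints first, using extra agents whose other endpoints fill up the capture balls. As a result $t_1$ and $t_2$ are never opened, and each group member pays nearly $(2+\sqrt5)\,c_i(T)$ under \GCT. Perturbing the distances by $\varepsilon$ breaks ties in the desired direction.
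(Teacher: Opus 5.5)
\textbf{There is a genuine gap: Steps 3--4 do not prove the upper bound, as you acknowledge.}

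Your radius $r^*$ is defined only at $t_1$. The capture event it certifies therefore controls $d(j,Y)$ for a single $t_1$-side endpoint $j$ and nothing more. To bound the owner's cost you also need $d(j',Y)$ for the partner endpoint $j'$, and none of your branches supplies it:
\begin{itemize}
\item Routing $j'$ via $t_1$ is useless. By Step 1, $j'$ is served by $t_2$, so $d(j',t_1)$ is not controlled by $c_i(T)$.
\item The fallback $c_i(Y)\le d(a_i,b_i)$ is useless. $d(a_i,b_i)$ can be arbitrarily large relative to $c_i(T)$; it is infinite in the tight instance.
\item A ``second crowdedness argument'' at $t_2$ only certifies a capture at some radius that you never relate to $r^*$.
\end{itemize}
So the Step 4 inequality $c_i(Y)\le c_i(T)+2r^*$ is unsupported. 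The quadratic $x^2-4x-1=0$ is just the minimal polynomial of $2+\sqrt5$, not something your inequalities produce.

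\textbf{What is missing is a common two-sided radius and a three-agent comparison.}

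Fix the matching $a_i\mapsto\tau_1$, $b_i\mapsto\tau_2$ for all $i\in S$. Set $r_T=\max_{j\in S}\max\{d(a_j,\tau_1),d(b_j,\tau_2)\}$, attained say at $a_{i^*}$. Then $r_T\le c_{i^*}(T)$. At radius $r_T$, \emph{both} $B(\tau_1,r_T)$ and $B(\tau_2,r_T)$ contain all $|S|\ge\lceil 2n/k\rceil$ matched endpoints. Hence each stop of $T$ missing from $Y$ certifies a capture on its own side within distance $r_T$.

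The cases then go as follows.
\begin{itemize}
\item If \GCT finishes before radius $r_T$, then $c_{i^*}(Y)\le 2r_T$.
\item If only $\tau_1\in Y$, one capture of some $b_{i'}$ gives a two-agent bound of $(3+\sqrt{13})/2$.
\item If $\tau_1,\tau_2\notin Y$, you need \emph{both} captures: $a_{i'}$ by $y'$ and $b_{i''}$ by $y''$, where possibly $i'\neq i''$. Route each agent's two endpoints to $y'$ and $y''$ through $\tau_1$ and $\tau_2$.
\end{itemize}

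In the last case put $x=d(a_{i'},\tau_1)$ and $y=d(b_{i''},\tau_2)$. This gives $c_{i'}(Y)\le c_{i'}(T)+2r_T+(y-x)$, which is not your $c_i(T)+2r^*$. Symmetrically, $c_{i''}(Y)\le c_{i''}(T)+2r_T+(x-y)$, and $c_{i^*}(Y)\le 3r_T+d(b_{i^*},\tau_2)+x+y$. Hence the minimum ratio over $\{i^*,i',i''\}$ is at most $\min\{(2r_T+y)/x,(2r_T+x)/y,3+(x+y)/r_T\}\le\max_{q\ge1}\min\{2q+1,3+2/q\}=2+\sqrt5$.

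Keeping both $i'$ and $i''$ as witnesses matters. With only $i'$ and $i^*$, and $y\le r_T$ adversarial, these inequalities give only $\max_{q\ge1}\min\{3q,4+1/q\}=2+\sqrt7$.

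\textbf{Step 5 is likewise only a description, not an instance.} The paper's instance has $n=7$, $k=4$, and two lines at infinite distance. Decoys $y_1,y_2$ sit at distance $\frac{1+\sqrt5}{2}-\frac{\varepsilon}{8}$ from $\tau_1,\tau_2$, each holding three filler endpoints. It realizes exactly the configuration above, with three distinct agents $i^*,i',i''$ and $r_T/x=\frac{1+\sqrt5}{2}$.
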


\begin{proof}
	Given any TrSP instance $\mathcal{I}$, let $Y \subseteq \mathcal{C}$ be the solution of \GCT under $\mathcal{I}$. 
Assume, for the sake of a contradiction, that $Y$ fails to satisfy $(2+\sqrt{5})$-JR. That is, there exists a group of agents $S\subseteq N$ with $|S|\geq \ceil{\frac{2n}{k}}$ and a pair of transit stops $T =\{\tau_1, \tau_2\} \subseteq \mathcal{C}$ such that $(2+\sqrt{5})\cdot c_j(T) < c_j(Y)$ for every agent $j \in S$. 

Without loss of generality, we assume that every agent $i\in S$ travels from transit stop $\tau_1$ to $\tau_2$, walking from their starting point $a_i$ to $\tau_1$ and from $\tau_2$ to their destination $b_i$. Define
\begin{align*}
	r_T=\max_{j\in S}\{\max\{d(a_j,\tau_1), d(b_j, \tau_2)\}\}
\end{align*}
as the maximum distance between any endpoint of an agent in $S$ and two transit stops in $\{\tau_1, \tau_2\}$. Let $i^\ast \in S$ be the agent that attains this maximum distance. Without loss of generality, assume that this maximum distance is realized at the starting point $a_{i^*}$, i.e, $r_T=d(a_{i^*},\tau_1)$. 

We begin by considering the case in which the distance radius explored by \GCT never reaches $r_T$.
That is, \GCT deactivates all agent endpoints, and in particular both of the endpoints of agent $i^*$, with a distance radius at most $r_T$. In this scenario, we derive that $c_{i^*}(Y) \leq 2 \cdot r_T \leq 2\cdot c_{i^\ast}(T)$, which contradicts the assumption that $(2+\sqrt{5})\cdot c_{i^*}(T) < c_{i^*}(Y)$. 

Henceforth, we focus on the case in which \GCT does consider a distance radius of $r_T$ during its execution. 
We begin by examining the subcase where $Y$ contains one of the stops in $T$. Without loss of generality, we assume that $\tau_1 \in Y$ and $\tau_2 \notin Y$. Since $\tau_1$ is included in $Y$, all starting points $a_i$ of agents in $S$ are deactivated when the distance radius reaches at most $r_T$. 
Note that for stop $\tau_2$, if selected, could deactivate $\ceil{\frac{2n}{k}}$ endpoints with radius $r_T$.
However, since $\tau_2\notin Y$, there must exist some agent $i'$ whose endpoint $b_{i'}$ is already deactivated by another transit stop, denoted $y\in Y$, with a radius at most $r_T$. 
Consequently, we have $ c_{i'}(Y) \leq d(a_{i'},\tau_1) + d(b_{i'}, y) \leq d(a_{i'},\tau_1) + r_T$. With this inequality in hand, we now proceed to establish an upper bound incurred by agent $i^*$ under the solution $Y$. 
\begin{align*}
	c_{i^*}(Y) &= \min_{y_1,y_2\in Y} \{d(a_{i^*},y_1) + d(b_{i^*}, y_2)\} \\
	&\leq d(a_{i^*},\tau_1)+d(b_{i^*}, y) \tag{$\because \tau_1 \in Y, y \in Y$}\\ 
	&\leq r_T + d(b_{i^*}, \tau_2) + d(\tau_2, b_{i'}) + d(b_{i'},y) \tag{$\because$ triangle inequality}\\
	&\leq 3\cdot r_T + d(b_{i'},\tau_2). \tag{$\because d(b_{i^*},\tau_2) \leq r_T, d(b_{i'}, y)\leq r_T$}
\end{align*}
We next consider the minimum multiplicative cost improvement of agent $i'$ and $i^*$ under $T$:
\begin{align*}
	\min\left( \frac{c_{i'}(Y)}{c_{i'}(T)}, \frac{c_{i^*}(Y)}{c_{i^*}(T)} \right)
	&\leq 
	\min \left( \frac{d(a_{i'},\tau_1) + r_T}{d(a_{i'},\tau_1)+d(b_{i'},\tau_2)}, \frac{3\cdot r_T +d(b_{i'},\tau_2)}{r_T} \right)\\
	& \leq \min \left( \frac{r_T}{d(b_{i'},\tau_2)}, \frac{3\cdot r_T + d(b_{i'},\tau_2)}{r_T} \right) \\
	& \leq
	\max_{z\geq 1}(\min(z,3 + 1/z)) = \frac{3+\sqrt{13}}{2}.
\end{align*}
The second inequality holds because $d(b_{i'}, \tau_2)\leq r_T$ and subtracting $d(a_{i'}, \tau_1)$ from numerator and denominator weakly increases the resulting fraction.
Therefore, we have $\frac{3+\sqrt{13}}{2}\cdot c_{i^*}(T) \geq c_{i^*}(Y)$, which contradicts that $(2+\sqrt{5})\cdot c_j(T) < c_j(Y)$ for every agent $j\in S$. 

The remaining subcase is when neither $\tau_1$ nor $\tau_2$ are included in $Y$. 
In this case, since $\tau_1$ and $\tau_2$ are excluded from $Y$, we observe that there exists at least one agent $i' \in S$ such that $a_{i'}$ is deactivated by some $y'\in Y$ with a radius at most $r_T$ and one agent $i'' \in S$ such that $b_{i''}$ is deactivated by some $y''\in Y$ with a radius at most $r_T$. 
We next upper-bound the cost incurred by agents $i^*, i'$, and $i''$ under solution $Y$.

For agent $i^\ast$, we have
\begin{align*}
	c_{i^*}(Y) &\leq  d(a_{i^*},y') + d(b_{i^*},y'') \\
	&\leq d(a_{i^*},\tau_1) + d(\tau_1, a_{i'}) + d(a_{i'},y') 
	+ d(b_{i^*},\tau_2) + d(\tau_2, b_{i''}) + d(b_{i''},y'') \\
	& \leq 3\cdot r_T + d(b_{i^\ast},\tau_2) + d(\tau_1, a_{i'}) + d(\tau_2, b_{i''}).
\end{align*}

For agent $i'$, we have
\begin{align*}
	c_{i'}(Y) & \leq d(a_{i'},y') + d(b_{i'},y'') \\
	& \leq d(a_{i'},y') + d(b_{i'},\tau_2) + d(\tau_2, b_{i''}) + d(b_{i''},y'')\\
	&\leq 2\cdot r_T + d(b_{i'},\tau_2) + d(\tau_2,b_{i''}).
\end{align*}

We obtain the upper bound for agent $i''$ in an analogous fashion to the previous bound:
\begin{align*}
	c_{i''}(Y) &\leq 2\cdot r_T + d(a_{i''},\tau_1) + d(\tau_1, a_{i'}).
\end{align*}
For ease of expression, we denote $x = d(a_{i'},\tau_1)$ and $y = d(\tau_2,b_{i''})$.
With the upper bounds in hand, we consider the minimum multiplicative cost improvement of agents $\{i^*,i',i''\}$ as follows.
\begin{align*}
	\min\left( \frac{c_{i'}(Y)}{c_{i'}(T)}, \frac{c_{i''}(Y)}{c_{i''}(T)}, \frac{c_{i^*}(Y)}{c_{i^*}(T)} \right)
	\leq & \min\Big( \frac{2\cdot r_T + d(b_{i'},\tau_2) + y}{x+d(b_{i'},\tau_2)}, \frac{2\cdot r_T + d(a_{i''},\tau_1) + x}{d(a_{i''},\tau_1) +y}, 
	\frac{3\cdot r_T + d(b_{i^\ast},\tau_2) + x + y}{r_T + d(b_{i^*},\tau_2)} \Big)\\
	\leq & \min\left( \frac{2\cdot r_T  + y}{x}, \frac{2\cdot r_T + x}{y}, \frac{3\cdot r_T + x + y}{r_T} \right).
\end{align*}

We next consider the upper bound of the term by 
\begin{align*}
	\min\left( \frac{2\cdot r_T  + y}{x}, \frac{2\cdot r_T + x}{y}, \frac{3\cdot r_T + x + y}{r_T} \right) \leq \min \left( \frac{2r_T+x}{x},3+\frac{2x}{r_T} \right) \leq \max_{q\geq 1} \left( \min\left(2q+1, 3+\frac{2}{q}\right) \right),
\end{align*}
where we can assume that $x=y$, yielding the first inequality. The reason is that if $x=0$ or $y=0$, our inequality gives a better bound than $2+\sqrt{5}$. So we restrict to $x,y>0$. We claim that $\frac{2r_T+x}{x}\geq \min(\frac{2r_T+y}{x},\frac{2r_T+x}{y})$ for any $x,y >0$. To see it, we divide into two cases. (1). When $y\leq x$, since $x,y>0$, we have $\frac{2r_T+x}{x}-\frac{2r_T+y}{x}=\frac{x-y}{x}\geq 0$; (2). When $y > x$, we immediately have $\frac{2r_T+x}{x}\geq \frac{2r_T+x}{y}\geq \min(\frac{2r_T+y}{x},\frac{2r_T+x}{y})$. For the second inequality, it follows by setting $q=\frac{r_T}{x}\geq 1$ as $r_T\geq x$ by definition. 

Hence, we upper-bound the term by $\max_{q\geq 1} \left( \min\left(2q+1, 3+\frac{2}{q}\right) \right) \leq 2+\sqrt{5}$, which implies that $\min\left( \frac{c_{i'}(Y)}{c_{i'}(T)}, \frac{c_{i''}(Y)}{c_{i''}(T)}, \frac{c_{i^*}(Y)}{c_{i^*}(T)} \right) \leq 2+\sqrt{5}$, contradicting that $(2+\sqrt{5})\cdot c_j(T) < c_j(Y)$ for every agent $j\in S$. 
So we conclude that GC-TrSP (\Cref{alg::greedy_capture_for_bus_stop_selection}) satisfies $(2+\sqrt{5})$-JR. For the complementing lower bound, see Appendix~\ref{app:omitted_proofs}. 
\end{proof}

\subsection{Fairness Lower Bounds}
In this section, we contextualize the JR approximation obtained by \GCT (\Cref{thm:Greedy_Capture_JR_approximation_2_sqrt5}) by establishing lower bounds on JR, both in terms of clustering algorithms and general existence.
In contrast with the line metric, where JR is guaranteed to exist \citep{BEL25a}, we show that a solution satisfying $1.366$-JR is not guaranteed to exist. 
We will begin by giving a reduction from clustering to TrSP which easily shows that JR is not guaranteed to exist.

Given any fair clustering instance $\mathcal{I}' = \langle \mathcal{N}', \mathcal{C}', k' \rangle$, where $|\mathcal{N}'| = n$ and $|\mathcal{C}'| = m$, we construct a corresponding TrSP instance as follows. We create two identical copies of $\mathcal{I}'$, denoted by $\mathcal{I}^a = \langle N^a, \mathcal{C}^a, k' \rangle$ and $\mathcal{I}^b = \langle N^b, \mathcal{C}^b, k' \rangle$, and place them at a sufficiently large distance from each other in a new metric space. Denote $N^a = \{a_1, a_2, \dots, a_n\}$ and $N^b = \{b_1, b_2, \dots, b_n\}$. We define the corresponding TrSP instance as $\mathcal{I}^T = \langle [n], \mathcal{C}^a \cup \mathcal{C}^b, {(a_i, b_i)}_{i \in [n]}, k = 2k' \rangle$.

\begin{restatable}{lemma}{lemmaJrToPf}\label{lem:jr_bus_stop_solution_2_PF_clustering}
	Given a clustering instance $\mathcal{I}' = \langle N', \mathcal{C}', k' \rangle$, if there exists a solution $Y$ satisfying $\beta$-JR in the corresponding TrSP instance $\mathcal{I}^T$ for some $\beta \geq 1$, then there exists a solution $Y^C \subseteq Y, |Y^C| \leq k'$ such that $Y^C$ satisfies $2\beta$-PF in $\mathcal{I}'$. When $N' \cap \mathcal{C}'=\emptyset$, there exists $\varepsilon > 0$ such that $Y^C$ satisfies $(2\beta-\varepsilon)$-PF in $\mathcal{I}'$.
\end{restatable}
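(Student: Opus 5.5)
The plan is to extract $Y^C$ from one of the two copies and show directly that a PF violation in $\mathcal{I}'$ would lift to a JR violation in $\mathcal{I}^T$. Let $Y^a = Y\cap \mathcal{C}^a$ and $Y^b = Y\cap \mathcal{C}^b$. Since $|Y|\le k=2k'$, one of them has size at most $k'$; without loss of generality $|Y^a|\le k'$. Let $Y^C$ be the copy of $Y^a$ in $\mathcal{I}'$, identifying $\mathcal{C}^a$ with $\mathcal{C}'$. Note that the TrSP instance has $n$ agents and $k=2k'$, so the JR threshold is $\frac{2n}{k}=\frac{n}{k'}$, which is exactly the PF threshold in $\mathcal{I}'$.

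Suppose $Y^C$ violates $2\beta$-PF. Then there are $S'\subseteq N'$ with $|S'|\ge n/k'$ and a center $c\in\mathcal{C}'$ with $2\beta\, d(i,c)<d(i,Y^C)$ for all $i\in S'$. Take $S$ to be the corresponding agents in $\mathcal{I}^T$, and take the deviation pair $T=\{c^a,c^b\}$, the copies of $c$ in the two halves. For each $i\in S$ we have $c_i(T)\le d(a_i,c^a)+d(b_i,c^b)=2d(i,c)$, since the copies are identical. For the cost under $Y$, the two copies are very far apart, so walking directly from $a_i$ to $b_i$ is never competitive. Moreover, a stop used near $a_i$ must be in $Y^a$, since a stop in $\mathcal{C}^b$ costs at least the huge separation distance. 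Hence
\[
c_i(Y)\ge d(a_i,Y^a)=d(i,Y^C)>2\beta\, d(i,c)\ge \beta\, c_i(T).
\]
This holds for every agent in $S$, and $|S|\ge 2n/k$, so it contradicts $\beta$-JR. Some care is needed if $Y^a$ is empty: then $d(i,Y^C)=\infty$ in clustering, and correspondingly $c_i(Y)$ is at least the huge separation. The separation distance should be chosen larger than $2\beta$ times every finite distance within a copy, so the inequality chain still goes through.

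For the sharpened claim with $N'\cap\mathcal{C}'=\emptyset$, I would exploit the slack discarded above. The bound $c_i(Y)\ge d(a_i,Y^a)$ ignores the cost $d(b_i,Y^b)$ on the other side, which is strictly positive when datapoints are never centers. Write $\delta=\min_{i,c'} d(i,c')>0$ and $D=\max_{i,c'} d(i,c')$; both are finite because the instance is finite. Then $c_i(Y)\ge d(i,Y^C)+\delta$, while $c_i(T)=2d(i,c)\le 2D$. Now suppose $(2\beta-\varepsilon)\, d(i,c)<d(i,Y^C)$ for all $i\in S'$. This gives $c_i(Y)>(2\beta-\varepsilon)d(i,c)+\delta=\beta c_i(T)-\varepsilon d(i,c)+\delta$. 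Choosing $\varepsilon\le\delta/D$ makes the right-hand side at least $\beta c_i(T)$, which again yields a JR violation.

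The main obstacle I expect is making the ``sufficiently large distance'' formal. This means defining the combined metric so that the triangle inequality holds, for example by setting the cross distance to $d(p,q)+L$ between copies with $L$ huge. It also means rigorously arguing that no agent benefits from walking directly or from using a far-side stop. Beyond that, the rest is bookkeeping.
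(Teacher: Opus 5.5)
Your proposal is correct and follows the paper's argument, phrased as a contrapositive. As in the paper, you keep the copy $Y^a$ with $|Y^a|\le k'$, test the mirrored pair $\{c^a,c^b\}$ against $\beta$-JR to get $c_i(Y)\ge d(i,Y^C)$ versus $\beta\,c_i(T)\le 2\beta\,d(i,c)$, and for the sharpened claim recover the discarded term $d(b_i,Y^b)\ge\delta$ with $\varepsilon=\delta/D$, which is the paper's min/max distance ratio. Your extra care about a finite separation distance and the case $Y^a=\emptyset$ is harmless; the paper avoids both by placing the two copies at infinite distance.
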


\begin{proof}
	Given a clustering instance $\mathcal{I}'=\langle \mathcal{N}', \mathcal{C}', k'\rangle$, consider the corresponding TrSP instance $\mathcal{I}^T = \langle [n], \mathcal{C}^a \cup \mathcal{C}^b, {(a_i, b_i)}_{i \in [n]}, k = 2k' \rangle$ that results from the reduction described above. 
	
	Let $Y$ be a $\beta$-JR solution of $\mathcal{I}^T$ and let $Y_a=Y\cap \mathcal{C}^a$ and $Y_b = Y\cap \mathcal{C}^b$. 
	We first observe that, since points in $\mathcal{C}^a$ and $N^a$ are an infinite distance from points in $\mathcal{C}^b$ and $N^b$, it holds for each agent $i$, that $c_i(Y)=d(a_i, Y_a) + d(b_i, Y_b)$. 
	Moreover, as we know that $|Y| \leq 2k'$, it follows from the pigeonhole principle that either $Y_a$ or $Y_b$ has a size of at most $k'$. Without loss of generality, we assume $|Y_a|\leq k'$. 
	
	Consider an arbitrary set of datapoints $S' \subseteq \mathcal{N}'$ with size $|S'| \geq \frac{n}{k'}$ and an arbitrary candidate center $\tau \in \mathcal{C}'$. 
	Let $\tau_a$ and $\tau_b$ denote the copies of $\tau$ in $\mathcal{C}^a$ and $\mathcal{C}^b$, respectively. 
	Let $S$ denote the agents in the TrSP instance $\mathcal{I}^T$ corresponding to the endpoints $S'$ and note that $|S|=|S'|\geq \frac{n}{k'}=\frac{2n}{k}.$
	Since $Y$ satisfies $\beta$-JR, there exists at least one agent $i\in S$ such that $\beta \cdot c_i(\{\tau_a,\tau_b\}) \geq c_i(Y)$, which implies $\beta \cdot (d(a_i,\tau_a) + d(b_i, \tau_b)) \geq d(a_i, Y_a) + d(b_i, Y_b)$. 
	Notice that $d(a_i,\tau_a)=d(b_i, \tau_b)$ due to the construction of our reduction. Consequently, we have
	\begin{align}
		2\cdot \beta d(a_i, \tau_a) \geq d(a_i, Y_a) + d(b_i, Y_b) \geq d(a_i, Y_a).\label{eq:two_pf}
	\end{align}
	It follows that for any arbitrary group of datapoints $S' \subseteq N'$ with size $|S'| \geq \frac{n}{k'}$ and candidate center $\tau \in \mathcal{C}'$, there exists a datapoint $j\in S'$ such that $2\beta \cdot d(j,\tau) \geq d(j, Y_a)$. 
	Thus, $Y_a$ is a $2\beta$-PF solution to the original clustering instance $\mathcal{I}$.
	
	Suppose that $\mathcal{N}\cap \mathcal{C} = \emptyset$. Let $\varepsilon = \frac{\min_{i'\in N', \tau'\in \mathcal{C}'} d(i',\tau')}{\max_{i^*\in N', \tau^*\in \mathcal{C}'} d(i^*,\tau^*)}$ be the minimum ratio between any pair of distances between agents and candidate centers in the clustering instance. 
	Note that these ratios are well-defined and strictly positive for all agent-candidate pairs since all distances are strictly positive by our assumption.
	Then \Cref{eq:two_pf} tells us that $(2\beta-\varepsilon)\cdot d(i,\tau) \geq 2\beta\cdot d(i,\tau) - d(i, Y_b) \geq d(i,Y_a)$, showing that $Y_a$ satisfies $(2\beta-\varepsilon)$-PF for $\varepsilon>0$.
\end{proof}

It follows easily from \Cref{lem:jr_bus_stop_solution_2_PF_clustering} that a solution exactly satisfying JR is not guaranteed to exist\footnote{To see this, assume that there exists an algorithm that always outputs a JR solution. 
	By \Cref{lem:jr_bus_stop_solution_2_PF_clustering}, this would imply the existence of an algorithm that satisfies $(2-\varepsilon)$-PF in clustering for any instance in which $\mathcal{N}\cap \mathcal{C} = \emptyset$. 
	However, \citet{CFLM19a} give a clustering instance in which $\mathcal{N}\cap \mathcal{C} = \emptyset$ and no $(2-\varepsilon)$-PF solution exists for any $\varepsilon>0$, thereby yielding a contradiction.}. 
In fact, we are able to improve on this bound by constructing an instance for which a solution satisfying $(\frac{1+\sqrt{3}}{2}-\varepsilon)$-JR is not guaranteed to exist for any $\varepsilon > 0$ (see \Cref{app:omitted_proofs}).

\begin{restatable}{proposition}{propJrLowerBound}
	\label{prop:JR_existence_lower_bound}
	For any $\varepsilon>0$, there exists a TrSP instance for which no solution satisfies $(\frac{1+\sqrt{3}}{2}-\varepsilon)$-JR.
\end{restatable}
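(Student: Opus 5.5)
I will construct an explicit instance, probably in the spirit of the $(2-\varepsilon)$-PF impossibility of \citet{CFLM19a}, but exploiting the two-endpoint structure directly rather than going through \Cref{lem:jr_bus_stop_solution_2_PF_clustering}. The lemma alone only transfers a factor $2\beta$ to PF, and the known PF lower bound of $2$ then yields only $\beta\ge 1$. The target value $\frac{1+\sqrt3}{2}$ is the positive root of $2\beta^2-2\beta-1=0$, i.e.\ $\beta=1+\frac{1}{2\beta}$. This suggests that the instance should balance two deviation types: one where a violating agent's cost ratio is $\frac{x+1}{x}$-like, and one where it is $2x$-like. I will design distances with a free parameter $x$ and choose $x$ to equalize them.

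\textbf{Construction.} I take $k=2$, so that JR deviations require groups of size $n$. That is too strong, so instead I use $k=3$ or a symmetric cyclic structure. Concretely, I would take three ``hubs,'' each a cluster of endpoints, arranged so that the agents are split into three groups. Each group of size $\ge 2n/k$ travels between two hubs, and each hub-pair has a dedicated good stop pair $T_{\ell}=\{\tau_\ell,\tau'_\ell\}$. The budget $k$ is too small to serve all three groups well. The metric is a weighted graph (shortest-path metric), which keeps the triangle inequality automatic. Agents' endpoints sit at distance $1$ from their ideal stops, so the deviation to $T_\ell$ gives cost about $2$. Alternative "compromise" stops sit at distance $x$ from several hubs. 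Walking cost $d(a_i,b_i)$ is set large so that walking is never preferable.

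\textbf{Case analysis.} Any feasible $Y$ with $|Y|\le k$ must either (a) omit at least one stop needed by some group, or (b) use compromise stops. In case (a), the starved group $S$ has every member paying at least $1+(\text{detour})$ at one endpoint while $T_\ell$ gives $2$. I tune the detour so that the ratio exceeds $\beta$, aiming for a ratio like $\frac{2+2x}{2}\cdot$const. In case (b), a group including agents near a compromise stop deviates to a pair whose cost for them is small. This gives the reciprocal-type ratio in $x$. By symmetry of the cyclic construction, it suffices to check one representative configuration per case.

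\textbf{Main obstacle and parameter choice.} I expect the hardest step to be making the case analysis exhaustive: ruling out every mixed selection, especially solutions that pick stops not adjacent to any hub, or that shift the sharing across groups. I also need the group-size thresholds to work out with integral $n$. I would first try the smallest instance, with $n=3$, $k=3$, and singleton groups of size $\ge 2$ formed by overlapping pairs of agents, and enumerate all $\binom{m}{3}$ solutions by hand. For every solution I exhibit a violating $(S,T)$ whose minimum ratio is at least $\min\{f(x),g(x)\}$. Finally I set $x$ so that $f(x)=g(x)=\frac{1+\sqrt3}{2}$ and add an $\varepsilon$-perturbation of distances to turn weak inequalities into strict ones. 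This gives the claim for every $\varepsilon>0$.
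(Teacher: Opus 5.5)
\textbf{This is a search plan, not a proof, and the design it sketches does not work.} The statement is an existence claim, so its whole content is an explicit instance together with a check that every feasible $Y$ admits a violating $(S,T)$. You never fix an instance.

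The architecture you do describe loses the obstruction under its natural readings:
\begin{itemize}
\item If the three groups with dedicated pairs $T_\ell$ are disjoint and each has at least $2n/k$ agents, then $k\ge 6$. The planner can then open all three dedicated pairs and serve every group optimally.
\item If the groups overlap, as they must at $n=k=3$ (three groups of size $2$ among three agents), some agent would have to travel along two different hub pairs. A single $(a_i,b_i)$ cannot do that when hubs are far apart.
\end{itemize}
Your calibration also misses the target. Equalizing $\frac{x+1}{x}$ with $2x$ gives $x=1$ and $\beta=2$, not the root $\frac{1+\sqrt3}{2}$ of $2\beta^2-2\beta-1=0$.

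\textbf{The missing idea is to make all agents share one corridor.} Take $n=k=3$, so $\lceil 2n/k\rceil=2$. Put $a_1,a_2,a_3$ and stops $\tau_1,\tau_2,\tau_3$ in one region, and $b_1,b_2,b_3$ with mirror stops $\tau_4,\tau_5,\tau_6$ in a second region infinitely far away. Walking is then useless, every finite-cost solution needs a stop on both sides, and with three stops one side gets exactly one.

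Inside each region use a cyclic profile: $d(a_i,\tau_j)$ depends only on $i+j \bmod 3$ and takes the values $1$, $x$, $2+x$. One realization is a hexagon alternating endpoints and stops, with edge lengths alternating $1$ and $x$. Then whichever lone stop is chosen, it is good for one agent, mediocre for another, and bad for the third.

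For $Y=\{\tau_1,\tau_2,\tau_6\}$, the deviation $S=\{2,3\}$, $T=\{\tau_1,\tau_4\}$ has ratios $\frac{3+x}{2x}$ and $\frac{1+x}{2}$. These meet at $x=\sqrt3$ with value $\frac{1+\sqrt3}{2}$, which is the real source of your quadratic.

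The case analysis is then short:
\begin{itemize}
\item By monotonicity in $Y$, only three-stop solutions matter.
\item One-sided solutions give everybody infinite cost.
\item By the region swap and the cyclic symmetry, the two-stop side can be taken to be $\{\tau_1,\tau_2\}$. That leaves the lone stop $\tau_4$, $\tau_5$ or $\tau_6$.
\end{itemize}
These are three genuinely inequivalent cases. The case $\tau_6$ is handled above. For $\tau_4$ use $S=\{1,2\}$, $T=\{\tau_2,\tau_5\}$; for $\tau_5$ use $S=\{1,3\}$, $T=\{\tau_3,\tau_6\}$. No perturbation of distances is needed, since the $-\varepsilon$ in the statement already supplies strictness.
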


\begin{proof}
	Consider a TrSP instance with $3$ agents, $6$ transit candidate stops, and $k=3$. Distances are specified in the following table.
	\begin{table}[!htbp]
		\centering
		\begin{tabular}{ccccccc} 
			\toprule 
			$d(\cdot)$& $a_1$   & $a_2$  & $a_3$    & $b_1$  & $b_2$   & $b_3$   \\ 
			\midrule
			$\tau_1$ &  $2+\sqrt{3}$    & $\sqrt{3}$ & $1$    & $\infty$ & $\infty$ & $\infty$  \\ \hline
			$\tau_2$ & $\sqrt{3}$ &  $1$ & $2+\sqrt{3}$  & $\infty$ & $\infty$ & $\infty$  \\ \hline  
			$\tau_3$ & $1$  &  $2+\sqrt{3}$  &  $\sqrt{3}$  & $\infty$ & $\infty$ & $\infty$  \\ \hline
			$\tau_4$ & $\infty$ & $\infty$ & $\infty$ &   $2+\sqrt{3}$    & $\sqrt{3}$ & $1$    \\ \hline
			$\tau_5$ & $\infty$ & $\infty$ & $\infty$ &  $\sqrt{3}$ &  $1$ & $2+\sqrt{3}$ \\ \hline
			$\tau_6$ & $\infty$ & $\infty$ & $\infty$ & $1$  &  $2+\sqrt{3}$  &  $\sqrt{3}$  \\ \bottomrule
		\end{tabular}
		\caption{An Instance in which all solutions fail $(\frac{1+\sqrt{3}}{2}-\varepsilon)$-JR}
		\label{tab:JR_existence_lower_bound}
	\end{table}
	We first observe that the endpoints and stops respect the triangle inequality and are partitioned into two distinct regions, separated by an infinite distance. 
	Within each region, the internal distance structure remains identical in the metric space. 
	We note that the TrSP solutions $\{\tau_1, \tau_2, \tau_3\}$ and $\{\tau_4, \tau_5, \tau_6\}$ fail to provide any approximation of JR, as all three agents have strong incentives to deviate to any alternative solution that selects at least one transit stop from each region. 
	Such a deviation reduces their cost from infinity to a finite constant.
	
	Due to the identical distance structure in the two regions, without loss of generality, it suffices to consider two TrSP solutions: $Y_1 = \{\tau_1, \tau_2, \tau_6\}$ and $Y_2 = \{\tau_1, \tau_2, \tau_4\}$.\footnote{To see this, note that $\tau_4, \tau_5$, and $\tau_6$ mirror $\tau_1, \tau_2,$ and $\tau_3$, respectively, and that one of these groups will have exactly one selected stop. Thus, we only need to consider two types of solutions: one in which that selected stop is the counterpart of a selected stop on the other side ($Y_2$), and one in which it is not ($Y_1$).}
	Recall that $\lceil \frac{2n}{k} \rceil = \frac{2 \cdot 3}{3} = 2$.
	
	For the solution $Y_1$, consider the deviating coalition $S_1 = \{2, 3\}$ and the alternative set of transit stops $T_1 = \{\tau_1, \tau_4\}$. 
	For agent $2$, we have $c_2(Y_1)=d(a_2, \tau_2) + d(b_2, \tau_6)=1+2+\sqrt{3}=3+\sqrt{3}$ and $c_2(T_1)=d(a_2, \tau_1) + d(b_2, \tau_4)=\sqrt{3}+\sqrt{3}= 2\sqrt{3}$. Hence, we obtain $(\frac{1+\sqrt{3}}{2}-\varepsilon)\cdot c_2(T_1) < c_2(Y_1)$.
	Similarly, for agent $3$, we compute $c_3(Y_1) = 1 + \sqrt{3}$ and $c_3(T_1) = 2$, yielding $(\frac{1+\sqrt{3}}{2}-\varepsilon)\cdot c_3(T_1) < c_3(Y_1)$. Thus, $Y_1$ violates $(\frac{1 + \sqrt{3}}{2} - \varepsilon)$-JR.
	
	Now consider the solution $Y_2$, with deviating coalition $S_2 = \{1, 2\}$ and alternative transit stops $T_2 = \{\tau_2, \tau_5\}$. For agent $1$, we have $c_1(Y_2)=d(a_1,\tau_2)+d(b_1, \tau_4)=\sqrt{3}+2+\sqrt{3}=2+2\sqrt{3}$ and $c_1(T_2)=d(a_1,\tau_2) + d(b_1, \tau_5)=2\sqrt{3}$; For agent $2$, we compute $c_2(Y_2) = 1 + \sqrt{3}$ and $c_2(T_2) = 2$. It follows that 
	\begin{align*}
		\min\left\{\frac{c_1(Y_2)}{c_1(T_2)},\frac{c_2(Y_2)}{c_2(T_2)}\right\}& =\min\left\{\frac{2+2\sqrt{3}}{2\sqrt{3}}, \frac{1+\sqrt{3}}{2}\right\}=\frac{1+\sqrt{3}}{2}.
	\end{align*}
	Therefore, $Y_2$ also fails to satisfy $(\frac{1 + \sqrt{3}}{2} - \varepsilon)$-JR. 
\end{proof}

The lower bound stated in \Cref{prop:JR_existence_lower_bound} leaves open the possibility that other algorithms can significantly outperform \GCT with respect to approximate JR.
Recall that \GCT proceeds by reducing TrSP instances to clustering instances, by reinterpreting all agents' endpoints in TrSP as datapoints in clustering, thus forfeiting information tying points to agents. 
Since our fairness properties ultimately consider \emph{agent} costs, it seems likely that any clustering approach to our problem will leave significant room for improvement. 
The next result formalizes this intuition by showing that no clustering algorithm can achieve an approximation ratio better than $3$ with respect to JR.

\begin{proposition}	\label{prop:clustering_impossibility}
	For any $\varepsilon>0$, there is no clustering algorithm which satisfies $(3-\varepsilon)$-JR for the TrSP problem.
\end{proposition}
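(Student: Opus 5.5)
The plan is to exploit the one piece of information a clustering algorithm cannot see: which endpoints belong to the same agent. A clustering algorithm, applied through the reduction of \Cref{sec:fair_clustering_model}, sees only the induced instance $\mathcal{I}^C=\langle \Theta, \mathcal{C}, k\rangle$. So I will build two TrSP instances $\mathcal{I}_1$ and $\mathcal{I}_2$ on the same metric space, with the same candidate set, the same $k$ and the same multiset of endpoints $\Theta$, differing only in how the endpoints are paired into agents. Then $\mathcal{I}_1^C=\mathcal{I}_2^C$, so any (deterministic) clustering algorithm returns the same solution $Y$ on both. It then suffices to show that for every feasible $Y$, at least one of $\mathcal{I}_1,\mathcal{I}_2$ has a coalition $S$ with $|S|\ge \frac{2n}{k}$ and a pair $T$ such that $(3-\varepsilon)\,c_i(T)<c_i(Y)$ for all $i\in S$. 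A randomized clustering algorithm is handled the same way, since the property must hold for every output in its support.

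For the construction I would take the smallest nontrivial setting, for instance $n=2$ and $k=2$, so that $\frac{2n}{k}=2$ and the coalition is all agents. I would place two ``source'' clusters of points $p_1,p_2$ and two ``destination'' clusters $q_1,q_2$ at a huge mutual distance, so that walking directly is never attractive. Pairing~1 is $\{(p_1,q_1),(p_2,q_2)\}$ and pairing~2 is $\{(p_1,q_2),(p_2,q_1)\}$. The candidates are placed so that, for each pairing, there is a ``good'' pair $T$ serving both agents at a small cost $x$ per endpoint. To get this, each of $p_1,p_2$ is within $x$ of a common candidate near the $p$-side, and similarly on the $q$-side. The same endpoints also have alternative candidates at distance $0$ or very small, which lets a single-stop-per-side solution serve one agent perfectly while the other agent pays about $3x$ per endpoint. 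By symmetry of $\Theta$ under the swap $q_1\leftrightarrow q_2$, I would tune the distances so that each feasible $Y$ (two stops in total) serves one side badly for one of the two pairings.

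The concrete steps are as follows. (i) Fix the point configuration and the distance table, with a parameter $\delta\to 0$, and verify the triangle inequality. (ii) Use the infinite separation to note that $c_i(Y)=d(a_i,Y_p)+d(b_i,Y_q)$, and that any $Y$ missing a side is infinitely bad in both instances. (iii) Reduce, by symmetry, to a few representative solutions with one stop on each side, as in the proof of \Cref{prop:JR_existence_lower_bound}. (iv) For each representative $Y$, exhibit the pairing and the pair $T$ for which both agents' ratio $c_i(Y)/c_i(T)$ tends to $3$ as $\delta\to 0$.

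I expect the main obstacle to be step~(iv) together with the design in step~(i). The difficulty is getting the ratio $3$ to hold simultaneously for both coalition members in the bad pairing, for \emph{every} choice of $Y$, while keeping the triangle inequality. The value $3$ should arise from a triangle-inequality chain: the endpoint is at distance $x$ from the shared stop in $T$, that stop is at distance $x$ from another endpoint, which is at distance $x$ from the stop in $Y$, giving $3x$. My first attempt would be to make one endpoint of each agent coincide with a candidate, so that its cost is $0$ under both $T$ and $Y$ and the ratio is driven entirely by the other endpoint. If two agents do not suffice, I would scale up to more agents and clusters, with $k$ larger, and use a cyclic pairing structure.
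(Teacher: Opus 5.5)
The gap is that the construction you commit to provably cannot give the bound, and your fallback (``scale up \dots\ cyclic pairing'') is where all the real content of the proof lies. Your framing itself is the same as the paper's: a clustering algorithm sees only $\mathcal{I}^C$, so the pairing of endpoints into agents can be chosen adversarially after $Y$ is fixed. You should let that pairing depend on $Y$ freely rather than fixing two pairings in advance.

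\textbf{Why $n=k=2$ fails.} Here $\frac{2n}{k}=n$, so the coalition must be \emph{all} agents. With the $p$-side and $q$-side far apart, only $Y=\{y_p,y_q\}$ and $T=\{t_p,t_q\}$ with one stop per side matter. In either of your pairings, summing $\beta c_i(T)<c_i(Y)$ over both agents gives
\[
\beta\Bigl(\sum_{j=1}^{2} d(p_j,t_p)+\sum_{j=1}^{2} d(q_j,t_q)\Bigr)<\sum_{j=1}^{2} d(p_j,y_p)+\sum_{j=1}^{2} d(q_j,y_q),
\]
and neither side depends on the pairing. A clustering algorithm that outputs a $2$-median of $\Theta$, i.e.\ a $1$-median on each side, makes the right-hand side at most the bracket. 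So no violation exists for any $\beta\ge 1$, whatever distances you choose.

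\textbf{Why your ``first attempt'' mechanism fails.} Suppose each $i\in S$ has an endpoint $a_i$ of cost $0$ under both $Y$ and $T$. Then $a_i$ sits at a stop $s\in T$ with $d(s,Y)=0$. This $s$ must be the same for all $i$, since otherwise both stops of $T$ are dominated by $Y$. Let $t$ be the other stop of $T$. A violation then forces $\beta\, d(b_i,t)<d(b_i,Y)$ for every $i\in S$. These $|S|\ge\frac{2n}{k}=\frac{|\Theta|}{k}$ datapoints all preferring the single center $t$ form a $\beta$-PF violation in $\mathcal{I}^C$. Since GC-TrSP's output is $(1+\sqrt{2})$-PF there, this mechanism can never push GC-TrSP past $1+\sqrt{2}<3$. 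The factor $3$ must therefore come from both endpoints of each agent jointly.

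\textbf{What is missing.} The coalition must be a small subset of many endpoints that the algorithm is forced to serve with too few centers, so the adversary chooses \emph{which} endpoints to pair after seeing $Y$. The paper uses three far-apart stars. Each star has a hub that is a datapoint but not a candidate, at distance $1$ from three candidate leaves. All $12$ points are endpoints and $k=6$, so $\frac{2n}{k}=2$. By pigeonhole there are two cases:
\begin{enumerate}
\item Some star receives no center. Pairing its endpoints with endpoints of another star gives cost $H$ versus $1$.
\item Two stars receive at most two leaves each. Pair the hub of each with an unselected leaf of the other, and let $T$ be the two unselected leaves. Each coalition agent pays $1+2=3$ under $Y$ (hub to a selected leaf, plus free leaf to a selected leaf via the hub) and $1+0=1$ under $T$.
\end{enumerate}
Making the hub a non-candidate is what forces every selected center to be a leaf at distance $2$ from the free leaf.
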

\begin{proof}
	Fix $\varepsilon>0$. 
	To prove the statement, we first define a clustering instance $\mathcal{I}^C=\langle \Theta, \mathcal{C},k \rangle$ and then show that, no matter which solution $Y$ the clustering algorithm returns, there exists a TrSP instance $\mathcal{I}$ for which (1) $Y$ violates $(3-\varepsilon)$-JR and (2) $\mathcal{I}^C$ is the clustering instance induced by $\mathcal{I}$.
	We begin by defining the clustering instance $\mathcal{I}^C$ with $12$ datapoints, $9$ candidate centers, and $k=6$. Specifically, $\Theta=\{x_1,x_2,\dots, x_{12}\}$, $\mathcal{C}=\Theta \setminus \{x_4,x_8,x_{12}\}$, where all the datapoints and centers are partitioned into three groups, each of which are separated from each other by a sufficiently large distance $H$. We represent the instance graphically by \Cref{fig:clustering_impossibility}. 
	\begin{figure}[!htbp]
		\centering
		\resizebox{.6\linewidth}{!}{
			\begin{tikzpicture}[
				scale=1.5,
				every node/.style={font=\Large},
				edge/.style={line width=1.2pt, draw=black!65, line cap=round},
				leaf/.style={
					draw=blue!70!black,
					fill=blue!35,
					rectangle,
					rounded corners=1pt,
					minimum size=5pt,
					inner sep=0pt
				},
				center/.style={
					draw=black,
					fill=black,
					circle,
					minimum size=5pt,
					inner sep=0pt
				},
				leaflabel/.style={text=blue!75!black},
				centerlabel/.style={text=black}
				]
				
				\newcommand{\StarComponent}[5]{
					\begin{scope}[shift={(#1,0)}]
						
						\filldraw[
						fill=gray!6,
						draw=gray!35,
						line width=0.8pt,
						rounded corners=8pt
						] (-1.35,-1.05) rectangle (1.35,1.45);
						
						\coordinate (c) at (0,0);
						\coordinate (a) at (90:0.95);
						\coordinate (b) at (210:0.95);
						\coordinate (d) at (330:0.95);
						
						\draw[edge] (c) -- (a);
						\draw[edge] (c) -- (b);
						\draw[edge] (c) -- (d);
						
						\node[leaf, label={[leaflabel]above:$#2$}] at (a) {};
						\node[leaf, label={[leaflabel]below left:$#3$}] at (b) {};
						\node[leaf, label={[leaflabel]below right:$#4$}] at (d) {};
						\node[center, label={[centerlabel]below:$#5$}] at (c) {};
					\end{scope}
				}
				
				\StarComponent{-3.4}{x_2}{x_1}{x_3}{x_4}
				\StarComponent{0}{x_6}{x_5}{x_7}{x_8}
				\StarComponent{3.4}{x_{10}}{x_9}{x_{11}}{x_{12}}
				
		\end{tikzpicture}}
		\caption{Graphical representation of clustering instance $\mathcal{I}^C$. Each edge in the graph has unit length $1$ and distances between pairs of points are given by the shortest path between them (infinite distance if the pair is not connected). Datapoints which are candidate centers are labeled by blue rectangles.}
		\label{fig:clustering_impossibility}
	\end{figure}
	
	Note that $k=6$ and there are three candidate centers in each separated group. Since the internal structure in each separated group is exactly the same, we can limit our attention to two cases: (1) there is a group with zero candidate centers selected; or (2) each group has at least one selected candidate center, and there are at least two groups with at most two centers selected from each.
	
	\textbf{Case (1).} Without loss of generality, suppose that no center is selected from the first group, i.e., 
	suppose $Y\subseteq \{x_5,x_6,x_7,x_9,x_{10},x_{11}\}$.
	Let $\mathcal{I}$ be a TrSP instance in which $\theta_1=(x_1,x_8)$ and $\theta_2=(x_4,x_5)$ and the remaining datapoints are arbitrarily assigned as the remaining four agents' endpoints.
	It is clear that $\mathcal{I}^C$ is the clustering instance induced by $\mathcal{I}$.
	Let $S=\{1,2\}$. We observe that the agent group $S$ is large enough to deserve two transit stops, i.e., $|S| = 2 = \frac{2n}{k}$. Consider an alternative set of stops $T=\{x_1,x_5\}$. For each agent $i\in S$, it holds that $c_i(Y) = H$ and $c_i(T) = 1$. Therefore $Y$ gives an arbitrarily bad approximation of JR under this instance.
	
	\textbf{Case (2).} Without loss of generality, we assume that the first two groups have at most two centers selected from each and that these centers are $Y=\{x_1,x_2,x_5,x_6\}$. 
	Note that selecting less centers from either group could only help us in finding a deviating coalition so we are analyzing the worst case.
	Also, while $Y$ could contain centers from the third group, this is irrelevant to the present case since we will not consider any endpoints in the third group when constructing our deviating coalition.
	Let $\mathcal{I}$ be a TrSP instance in which $\theta_1=(x_4,x_7)$ and $\theta_2 = (x_3,x_8)$ and the remaining datapoints are arbitrarily assigned as the remaining four agents' endpoints.
	Again, note that $\mathcal{I}^C$ is the clustering instance induced by $\mathcal{I}$.
	Let $S=\{1,2\}$ and $T=\{x_3,x_7\}$. For each agent $i \in S$, $i$ prefers stops in $T$ than $Y$ as $c_i(Y)=2+1=3$ and $c_i(T)=1+0=1$, which gives us $\frac{c_i(Y)}{c_i(T)}=3 > 3 - \varepsilon$. 
	This concludes the proof.
\end{proof}

We close this section by remarking that clustering algorithms, besides exhibiting a JR lower bound of 3, also fail to provide any guarantee with respect to JR under instances with arbitrary transit cost functions.
In the next section, we will propose an algorithm which attains a JR approximation below the lower bound stated in \Cref{prop:clustering_impossibility}, and show that this holds for arbitrary transit cost functions. 

\begin{remark}
	\label{remark:clustering_arbitrary_bad_JR_approximation}
	When allowing for arbitrary transit cost functions $d'(\cdot)$, no clustering algorithm has a constant-factor approximation with respect to JR. 
	To see this, recall the example in the proof of \Cref{prop:clustering_impossibility} and additionally define $d'(y_1,y_2)=H > 0$ and $d'(\tau_1,\tau_2)=0$. 
	For the deviation coalition $S$, as $d'(y_1,y_2)=H$ tends to infinity, it follows immediately that for every agent $i \in S$, we have $\frac{c_i(Y)}{c_i(T)} \to \infty$, which implies that the approximation to JR attained by \GCT is unbounded under arbitrary transit cost functions.
\end{remark}

\section{Expanding Cost Algorithm}
\label{sec:expanding_cost_algorithm}
In the previous section, we leveraged connections with centroid clustering to show that \GCT approximates JR within a $2+\sqrt{5}$ factor. 
Our lower bound on approximate JR existence of $\frac{1+\sqrt{3}}{2}$ then leaves an intriguing gap.
From \Cref{prop:clustering_impossibility} and the ensuing remark, we know that clustering algorithms cannot hope to improve this factor beyond $3$, and furthermore, are not robust to general transit cost functions.
Given the gap and the shortfalls of the clustering approach, a natural question arises: can we design algorithms that achieve better approximations to JR and are robust to non-zero transit cost functions?
We answer this question affirmatively by proposing the novel \emph{Expanding Cost Algorithm} (ECA), which fully utilizes the agent (as opposed to endpoint) information to guarantee a $1+\sqrt{2}\approx 2.414$ approximation to JR under arbitrary transit cost functions.
 
ECA draws inspiration from the Greedy Capture approach of uniformly growing balls. 
However, instead of growing distance-based radii around individual candidate stops, it grows what we refer to as the \textit{cost radius}, centered on \emph{pairs} of stops. 
Specifically, ECA begins by enumerating all possible pairs of candidate transit stops. 
For each pair $T=\{\tau_1,\tau_2\}$, it uniformly expands a ``cost ball'', that is a set that includes all agents whose total cost when using $T$ is at most $r$.\footnote{We note that what we refer to as cost balls are not in fact balls in the geometric sense. 
We use the ball terminology nevertheless as it lends a natural interpretation to ECA, and especially the algorithm we introduce in \Cref{app:hybrid}.}
The algorithm iteratively ``opens'' these cost balls, and adds the associated stop locations into the solution. 
Agents are considered active if they are not yet covered by any previously opened ball. In each iteration, the algorithm selects and opens any ball that covers at least $\ceil{\frac{2n}{k}}$ active agents. We formally describe ECA in \Cref{alg::expanding_cost_algorithm}.
\begin{algorithm}[h!]
	\caption{Expanding Cost Algorithm}
	\label{alg::expanding_cost_algorithm}
	\begin{algorithmic}[1]
		\REQUIRE TrSP instance $\mathcal{I}=\langle N, \mathcal{C}, \{\theta_i\}_{i\in N},k \rangle$.
		\ENSURE Solution $Y$. 
		\STATE Initialize $r\leftarrow 0$, $Y\leftarrow \emptyset$, $\mathcal{N}\leftarrow N$.
		\WHILE{$\mathcal{N} \neq \emptyset$}
		\STATE Smoothly increase $r$.
		\WHILE{$\exists~ i\in N$ such that $c_i(Y) \leq r$}
		\STATE $\mathcal{N} \leftarrow \mathcal{N} \setminus \{i\}$\;
		\ENDWHILE
		\WHILE{$\exists~\{\tau_1,\tau_2\} \subseteq \mathcal{C}, \{\tau_1,\tau_2\} \nsubseteq Y$ and $\exists~ S \subseteq \mathcal{N}, |S| \geq \ceil{\frac{2\cdot n}{k}}$, such that $\forall~j\in S$, \ $c_j(Y\cup\{\tau_1,\tau_2\})\leq r$}
		\STATE $Y \leftarrow Y \cup \{\tau_1,\tau_2\}$\;
		\STATE $\mathcal{N} \leftarrow \mathcal{N} \setminus S$\;
		\ENDWHILE
		\ENDWHILE
		\STATE Return $Y$.
	\end{algorithmic}
\end{algorithm}

We next prove that ECA achieves a tight $(1+\sqrt{2})$-JR guarantee, which holds for any arbitrary transit cost function.
\begin{restatable}{theorem}{thmEcaJrApprox}
	\label{thm:ECA_approximates_JR}
	For any arbitrary transit cost function $d'(\cdot)\geq 0$, ECA satisfies $(1+\sqrt{2})$-JR. However, for any $\varepsilon > 0$, there exists an instance with null transit costs for which ECA violates $(1+\sqrt{2}-\varepsilon)$-JR.
\end{restatable}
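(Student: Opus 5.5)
We prove the upper bound by contradiction, following the template of the proof of \Cref{thm:Greedy_Capture_JR_approximation_2_sqrt5}, but working with agent costs instead of endpoint distances. Let $Y$ be the output of ECA with transit cost $d'$. Suppose there exist $S\subseteq N$ with $|S|\ge\ceil{\frac{2n}{k}}$ and $T=\{\tau_1,\tau_2\}$ such that $(1+\sqrt2)\,c_j(T)<c_j(Y)$ for all $j\in S$.

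First, every $j\in S$ must use the stops in $T$. If some $j$ walked directly under $T$, then $c_j(Y)\le d(a_j,b_j)=c_j(T)$, contradicting the assumption. For each $j$, write $c_j(T)=w_j+d'(\tau_1,\tau_2)$, where $w_j$ is the walking part, with $a_j$ attached to one stop of $T$ and $b_j$ to the other.

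Let $r^*=\max_{j\in S}c_j(T)$, attained by $i^*$. We claim some agent of $S$ is deactivated while the radius is at most $r^*$. Otherwise, at radius $r^*$ all of $S$ is still active. Since $c_j(Y\cup T)\le c_j(T)\le r^*$ for every $j\in S$, the inner loop would open a pair; if $T\subseteq Y$ already, then $c_j(Y)\le c_j(T)$ immediately. Either way some $i'\in S$ is removed at radius at most $r^*$.

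When $i'$ is removed, either $c_{i'}(Y)\le r^*$ directly, or $i'$ is covered by an opened pair. In both cases there are $y_1,y_2\in Y$ with
\[
d(a_{i'},y_1)+d'(y_1,y_2)+d(y_2,b_{i'})\le r^*,
\]
and this bound persists because $c_{i'}(\cdot)$ only decreases as $Y$ grows. So $c_{i'}(Y)\le r^*$.

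Next we bound $c_{i^*}(Y)$ by concatenating routes through $i'$'s endpoints. Suppose $i^*$ attaches $a_{i^*}$ to $\tau_1$. Orient $i'$ so that the endpoint $e_1$ of $i'$ attached to $\tau_1$ and the endpoint $e_2$ attached to $\tau_2$ are used. If $i'$ travels in the opposite direction, we reverse its $Y$-route, using symmetry of $d$ and $d'$. Agent $i^*$ can then take the route
\[
a_{i^*}\to\tau_1\to e_1\to y\ (\text{transit})\ y'\to e_2\to\tau_2\to b_{i^*}.
\]
By the triangle inequality, and dropping the nonnegative term $d'(\tau_1,\tau_2)$,
\[
c_{i^*}(Y)\le w_{i^*}+w_{i'}+r^*\le 2r^*+x,\qquad\text{where } x:=c_{i'}(T)\ge w_{i'}.
\]
Hence
\[
\min\Big(\tfrac{c_{i'}(Y)}{c_{i'}(T)},\tfrac{c_{i^*}(Y)}{c_{i^*}(T)}\Big)\le\min\Big(\tfrac{r^*}{x},\,2+\tfrac{x}{r^*}\Big)\le\max_{z\ge1}\min\big(z,2+\tfrac1z\big)=1+\sqrt2 .
\]
If $x=0$, then $c_{i'}(T)=0$ and the assumed strict inequality already fails for $i'$. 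This contradicts the assumption, since both $i'$ and $i^*$ lie in $S$.

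The step I expect to need the most care is the concatenation. One must check that the triangle inequality is applied only to walking distances $d$. The transit leg $d'(y_1,y_2)$ should be inherited as a whole from $i'$'s covering route, never compared with $d'(\tau_1,\tau_2)$. This is exactly why the bound holds for arbitrary $d'$. The orientation and direct-walk edge cases must also be handled.

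For tightness, we design a null-transit instance in which the extremal ratio $z=1+\sqrt2$ is realized. Take $k$ and $n$ such that $\ceil{\frac{2n}{k}}=2$. Use two agents $i',i^*$ sharing a deviating pair $T$, with $c_{i^*}(T)=1$ and $c_{i'}(T)=\frac{1}{1+\sqrt2}=\sqrt2-1$. Place a decoy pair $\{y_1,y_2\}$ that gives $i'$ together with an auxiliary agent cost $1$. Position $y_1,y_2$ at distance $1$ beyond $i'$'s endpoints, on the far side from $T$, so that the concatenation bound is tight for $i^*$, namely $c_{i^*}(Y)=2+(\sqrt2-1)=1+\sqrt2$. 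Then ECA opens the decoy at radius $1-\varepsilon'$, before $T$ can capture two active agents, and uses its remaining budget elsewhere. Checking that no cheaper pair for $i^*$ gets opened is a routine verification of the distance table, which the appendix would carry out.
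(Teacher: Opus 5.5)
Your upper-bound argument is essentially the paper's proof. You take the maximum-cost agent $i^*$, extract some $i'\in S$ with $c_{i'}(Y)\le r^*$, and concatenate $i^*$'s route through $i'$'s endpoints, inheriting $i'$'s transit leg unchanged, to get $c_{i^*}(Y)\le 2r^*+c_{i'}(T)$. You then balance $\min(z,2+1/z)$. Your decoy blueprint for tightness is also the structure of the paper's instance in Table~\ref{tab:TRSP_instance_ECA_tightness}.

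Three small repairs are needed.

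First, your claim that ``in both cases'' $i'$ has a route through some $y_1,y_2\in Y$ is false when $i'$ walks. In that case bound $c_{i^*}(Y)\le d(a_{i^*},b_{i^*})\le w_{i^*}+w_{i'}+d(a_{i'},b_{i'})$ directly; this is the paper's Case (a).

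Second, if $x=c_{i'}(T)=0$, the assumed inequality does \emph{not} fail for $i'$, since it only asserts $c_{i'}(Y)>0$. The contradiction comes from $i^*$ instead: then $c_{i^*}(Y)\le 2r^*=2c_{i^*}(T)$.

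Third, the tightness instance has to be written down. The decoy must give $i'$ \emph{total} cost $1-\varepsilon'$; distance $1$ per endpoint would let ECA open $T$ first, at radius $1$. One concrete realization uses null transit and $n=k=3$, so that $\lceil 2n/k\rceil=2$. Place two parallel lines at infinite distance. The top line carries $a_{i^*}=-\tfrac12$, $\tau_1=0$, $a_{i'}=\tfrac{\sqrt2-1}{2}$, $y_1=a_{i'}+\tfrac{1-\varepsilon'}{2}$ and $a_u=y_1+\tfrac{1-\varepsilon'}{2}$. The bottom line mirrors it with the $b$'s, $\tau_2$ and $y_2$. Then $\{y_1,y_2\}$ covers $\{i',u\}$ at radius $1-\varepsilon'$. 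By contrast, $T$ needs radius $1$ to cover two agents, and the mixed pairs $\{\tau_1,y_2\}$ and $\{y_1,\tau_2\}$ need about $1.71$. So ECA outputs $\{y_1,y_2\}$, leaving $i^*$ alone with cost $1+\sqrt2-\varepsilon'$. The coalition $S=\{i^*,i'\}$ with pair $T$ then has ratios $1+\sqrt2-\varepsilon'$ and $(1+\sqrt2)(1-\varepsilon')$.
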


\begin{proof}
	Given any TrSP instance $\mathcal{I}$, let $Y\subseteq \mathcal{C}$ be the solution computed by ECA. 
	Suppose, for a contradiction, that $Y$ violates $(1+\sqrt{2})$-JR. 
	It follows that there exists a group of agents $S\subseteq N$ with $|S|\geq \lceil \frac{2n}{k} \rceil$ and a pair of transit stops $T=\{\tau_1,\tau_2\}\subseteq \mathcal{C}$ such that $(1+\sqrt{2})\cdot c_j(T)< c_j(Y)$ for all $j\in S$. 
	We first observe that for each agent $j\in S$, it must hold that $c_j(T) < d(a_j,b_j)$, since otherwise $c_j(Y)\leq c_ j(T)$. 
	In other words, every agent in $S$ uses the transit stops in $T$ for their route.
	Without loss of generality, for each agent $j\in S$, denote $a_j$ as the endpoint that uses stop $\tau_1$ and $b_j$ as the endpoint that uses stop $\tau_2$. 
	
	Let $r_T = \max_{j\in S} c_j(T)$ be the maximum cost to any agent in $S$ incurred by using transit stops $T$ and let $i^*$ be the agent in $S$ that realizes this maximum. 
	That is, $r_T = d(a_{i^*},\tau_1) + d'(\tau_1, \tau_2) + d(b_{i^*},\tau_2)$. 
	If the cost radius considered by ECA never reaches $r_T$, then ECA returns stop placement $Y$ covering all the agents in $N$ with a cost radius smaller than $r_T$, which means that $c_{i^*}(Y)\leq c_{i^*}(T)$, yielding a contradiction.
	
	The other case is that ECA does consider a cost radius of $r_T$ at some point during its execution. 
	In this situation, ECA must add some transit stop pair which gives some agent in $S$ a cost upper bounded by $r_T$ before it continues smoothly increasing the cost radius. To see this, consider that otherwise ECA will add the pair $T=(\tau_1,\tau_2)$ into the solution as $|S|\geq \lceil \frac{2n}{k} \rceil$ and $r_T = \max_{j\in S} c_j(T)$. 
	Therefore, there exists an agent $i\in S$ with $c_i(Y)\leq r_T$. 
	We next prove an upper bound on the cost of $i^\ast$ under solution $Y$. In particular, we show $c_{i^\ast}(Y) \leq 2\cdot r_T + c_i(T)$ by case analysis on agent $i$'s cost under $Y$. 
	
	\textbf{Case (a).} $c_i(Y) = d(a_i,b_i)$, i.e., agent $i$ walks under $Y$.
	We prove the statement as follows: 
	\begin{align*}
		c_{i^\ast}(Y)
		&\leq d(a_{i^\ast}, b_{i^\ast}) \\
		& \leq d(a_{i^*},\tau_1) + d(\tau_1, a_i) + d(a_i, b_i) + d(b_i, \tau_2) + d(\tau_2, b_{i^*}) \tag{Triangle inequality} \\
		& \leq \big(d(a_{i^*},\tau_1) + d'(\tau_1,\tau_2) + d(\tau_2, b_{i^*})\big) + \big(d(\tau_1, a_i) + d'(\tau_1, \tau_2) + d(b_i, \tau_2)\big) + d(a_i, b_i) \tag{$d'(\tau_1,\tau_2)\geq 0$} \\
		& = c_{i^\ast}(T) + c_i(T) + d(a_i,b_i) \tag{$c_{i}(T)=d(a_{i},\tau_1) + d'(\tau_1, \tau_2) + d(\tau_2, b_{i}),~\forall~i \in S$} \\
		& = c_{i^\ast}(T) + c_i(T) +c_i(Y) \tag{$c_i(Y)=d(a_i,b_i)$} \\
		& \leq 2 \cdot r_T + c_i(T). \tag{$c_{i^\ast}(T)=r_T, c_i(Y) \leq r_T$}
	\end{align*}
	
	\textbf{Case (b).} $c_i(Y) < d(a_i, b_i)$, i.e., agent $i$ uses the transit system under $Y$. Let $(y_1,y_2)$ denote the pair of transit stops that agent $i$ uses for minimizing her traveling cost, i.e., $(y_1, y_2)=\arg\min_{y,y'\in Y} d(a_{i},y) + d'(y,y') + d(b_{i},y')$. 
	The upper bound follows from a similar argument to the previous case. 
	
	\begin{align*}
		c_{i^\ast}(Y)
		&\leq d(a_{i^*}, y_1) + d'(y_1,y_2) + d(b_{i^*},y_2) \\
		&\leq \big( d(a_{i^*}, \tau_1) + d(\tau_1, a_i) + d(a_i,y_1)\big) + \big(d(b_{i^*},\tau_2) + d(\tau_2, b_i) + d(b_i, y_2)\big) + d'(y_1,y_2) \tag{Triangle inequality} \\ 
		&\leq c_{i^\ast}(T) + c_i(T)  + (d(a_i,y_1)+d(b_i, y_2)) + d'(y_1,y_2) \tag{$c_{i}(T)\leq d(a_{i},\tau_1) + d(\tau_2, b_{i}),~\forall~i \in S$} \\
		&= c_{i^\ast}(T) + c_i(T)  + c_i(Y) \tag{$c_i(Y)=d(a_i,y_1)+ d'(y_1,y_2) + d(b_i, y_2)$}\\
		& \leq 2\cdot r_T +  c_i(T)  \tag{$c_{i^\ast}(T)=r_T, c_i(Y) \leq r_T$}.
	\end{align*}
	
	Lastly, with the upper bound of $c_{i^*}(Y)$ in hand, we consider the minimum multiplicative cost improvement of agents $i$ and $i^*$ under $T$:
	\begin{align*}
		\min\left( \frac{c_i(Y)}{c_i(T)}, \frac{c_{i^*}(Y)}{c_{i^*}(T)} \right) \leq 
		\min\left( \frac{r_T}{c_i(T)}, \frac{2\cdot r_T + c_i(T)}{r_T} \right)
		\leq
		\max_{z\geq 1}(\min(z,2 + 1/z))
		\leq 1+\sqrt{2},
	\end{align*}
	which contradicts that $(1+\sqrt{2})\cdot c_j(T)< c_j(Y)$ for all $j\in S$. This completes the upper bound proof.
\end{proof}

In light of the theoretical limitations of clustering algorithms (see \Cref{prop:clustering_impossibility} and \Cref{remark:clustering_arbitrary_bad_JR_approximation}), \Cref{thm:ECA_approximates_JR} establishes two clear advantages of ECA over all clustering algorithms: a superior approximation to JR, and robustness of this approximation factor to arbitrary transit cost functions.
To do so, ECA explicitly considers pairs of stops at a time, and in this way, assigns agents to routes as it goes, rather than simply assigning endpoints to stops.
While this approach outperforms clustering algorithms in the sense of satisfying coalitions who all desire the same pair of stops, those guarantees do not extend to coalitions who prefer to deviate to larger sets of stops.
Indeed, as we will now show, the approach of ECA is too myopic to guarantee any bounded approximation to the core.
\begin{restatable}{proposition}{propEcaFailsCore} 
	\label{prop:ECA_fails_gamma_rho_core}
	For any $\gamma, \rho \geq 1$, there exists a TrSP instance in which ECA fails $(\gamma, \rho)$-core.			
\end{restatable}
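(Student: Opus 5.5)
To prove \Cref{prop:ECA_fails_gamma_rho_core} I would use a ``hub-and-spoke'' construction under null transit cost. The intuition is that ECA only ever evaluates pairs of stops. A large coalition whose members share one hub stop but have many distinct destinations can then never certify a single pair at a small cost radius. ECA eventually opens one pair at a large radius, covers everybody with a positive cost, and terminates having used only a few of its $k$ stops. The coalition, by contrast, can deviate to the hub together with all of its destinations and obtain cost $0$, so the ratio $c_i(Y)/c_i(T)$ is unbounded and no $\rho$ can repair it.

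\textbf{Construction.} Fix $\gamma,\rho\geq 1$ and take a hub candidate $h$ and destination candidates $d_1,\dots,d_t$, with pairwise distance $d(d_j,d_l)=D$ and $d(h,d_j)=D$. Agent $i$ in group $j$ has $a_i=h$ and $b_i=d_j$, and each group has $q$ agents. I would choose $q<\lceil 2n/k\rceil$, so that at radius $0$ no pair $\{h,d_j\}$ covers enough active agents. I would also check that for every radius $r<D$ no pair covers $\lceil 2n/k\rceil$ active agents, using that an agent of group $l$ has cost at least $\min\{d(h,b_i),D\}=D$ under any pair not containing $d_l$.

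\textbf{Behaviour of ECA.} At radius $D$ the algorithm opens some pair, say $\{h,d_1\}$. Every remaining agent then has $c_i(Y)\leq D$ (walking or routing via $d_1$), so all agents are deactivated and ECA returns $Y$ with $|Y|=2$. Each agent outside group $1$ has cost exactly $D>0$ under $Y$. The coalition is $S=$ all agents in groups $2,\dots,t$ with deviation $T=\{h,d_2,\dots,d_t\}$; under $T$ every member has cost $0$, so $\rho\cdot c_i(T)=0<c_i(Y)$ for all $i\in S$.

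\textbf{Main obstacle: the size condition.} We need $(t-1)q\geq \gamma\, t\cdot n/k$ with $n=tq$, that is $k\geq \gamma t^2/(t-1)\approx\gamma t$. At the same time we need $q<2n/k$, that is $k<2t$. These two requirements are compatible only when $\gamma<2$. To break this tension, I would first add many \emph{dummy} agents whose endpoints coincide at a single candidate stop $z$ far from everything. Each dummy has $d(a_i,b_i)=0$ and is captured at radius $0$ without consuming budget. This inflates $n$, and hence the threshold $2n/k$, independently of the coalition's size. Since the core threshold $\gamma|T|n/k$ also grows with $n$, the dummies alone do not suffice. I would instead exploit the fact that the actual number of stops ECA uses ($2$) is far below $k$, and re-tune the group sizes so that $t$ can grow much faster than $k/\gamma$ while each group remains below $2n/k$.

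I expect the main work to be this balancing of $q$, $t$, $k$ and auxiliary agents. If it fails, my fallback is to let the destinations of one group be spread over several nearby candidates, so that pairs are weak but the full set $T$ is strong and the required group size is obtained with $|T|$ small.
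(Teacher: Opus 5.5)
There is a genuine gap. The obstruction you ran into ($\gamma<2$) is structural, not a tuning problem, so no choice of $q$, $t$, $k$, dummy agents or auxiliary groups can rescue a hub-and-spoke deviation. Take any violating pair $(S,T)$, and let $P_i\subseteq T$ be the pair of stops that agent $i\in S$ uses under $T$, so $c_i(P_i)=c_i(T)$. Suppose some pair $P$ were used by at least $\lceil 2n/k\rceil$ members of $S$. Applying the paper's guarantee that ECA satisfies $(1+\sqrt{2})$-JR (\Cref{thm:ECA_approximates_JR}) to those agents and $P$ gives one of them with $c_i(Y)\le(1+\sqrt{2})\,c_i(T)$. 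This contradicts $\rho\, c_i(T)<c_i(Y)$ whenever $\rho\ge 1+\sqrt{2}$, and for every $\rho$ when $c_i(T)=0$, as in your instance. So each used pair carries fewer than $2n/k$ members of $S$. If $\pi$ is the number of distinct pairs used, then $\gamma|T|n/k\le|S|<\pi\cdot 2n/k$, i.e.\ $\pi>\gamma|T|/2$. In a star every used pair contains the hub, so $\pi\le|T|-1$, which forces $\gamma<2$. Neither ECA's capture threshold nor the core threshold depends on how many stops ECA actually opens, so the observation that ECA opens far fewer than $k$ stops buys nothing. Spreading a group's destinations over several candidates only enlarges $|T|$ and keeps the star shape. (A minor point: in your instance ECA returns $\emptyset$, because at radius $D$ all agents are deactivated by walking before any pair is examined. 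This is harmless.)

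The missing idea is that the stops of $T$ must be shared by many routes: the graph of pairs used by $S$ on vertex set $T$ needs average degree larger than $\gamma$. The paper realizes this with $T$ equal to the $z$ vertex stops of $K_z$. Each of the $\binom{z}{2}$ edges carries $r-1$ vertex-to-vertex agents (one short of $r=\lceil 2n/k\rceil$), plus one decoy agent and two interior stops. The interior pair therefore captures all $r$ agents of the edge at cost radius $2$, before the vertex pair could (radius $4$). With $k=z(z-1)$, ECA spends its whole budget on interior pairs. The vertex agents pay $2$ under $Y$ versus $0$ under $T$, and $|S|/(|T|\,n/k)=(r-1)(z-1)/r\ge\gamma$ for $z$ large. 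Your uniform-metric idea can also be repaired directly by replacing the star with a clique. Take $z$ pairwise equidistant candidate stops, one agent between every pair of them, and $k=z(z-1)-1$. Then $\lceil 2n/k\rceil=2$, every pair covers a single agent below radius $D$, and ECA returns $\emptyset$. Hence $S=N$ and $T=\mathcal{C}$ witness a $(\gamma,\rho)$-core violation once $z\ge\gamma+2$.
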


\begin{proof}
	Fix arbitrary $\gamma, \rho \geq 1$ and fix an integer $r\geq 2$. 
	We define $z = \ceil{\frac{r}{r-1}\cdot \gamma + 1}$ and construct a TrSP instance $\mathcal{I}=\langle N, \mathcal{C}, \{\theta_i\}_{i\in N}, k \rangle$ where
	\begin{align*}
		|N|=\frac{(z^2-z)\cdot r}{2}, |\mathcal{C}|=z^2, \text{ and } k=z^2-z.
	\end{align*}
	The instance is based on a complete graph $K_z$ with $z$ vertices, where each vertex $i\in [z]$ represents a candidate transit stop $\tau_i$. For every pair of distinct vertices $i,j\in [z]$, the distance between $\tau_i$ and $\tau_j$ is assumed to be infinite. Along each of the edges $(i,j)$ of $K_z$, there are $r$ agents whose endpoints lie on that edge. Additionally, there are two extra candidate stops, denoted by $\tau_{i,j}$ and $\tau_{j,i}$ located on the edge. To illustrate, consider the edge between $\tau_1$ and $\tau_2$, pictured in \Cref{fig:example_ECA_fails_any_approx_core}. 
	\begin{figure}[!htbp]
		\centering
		\begin{tikzpicture}[
			x=0.9cm, y=0.9cm,
			every node/.style={font=\large},
			mainline/.style={
				line width=1.1pt,
				draw=black!70,
				line cap=round
			},
			gapline/.style={
				densely dotted,
				line width=1pt,
				draw=black!55
			},
			redpt/.style={
				draw=red!75!black,
				fill=red!55,
				rectangle,
				rounded corners=0.5pt,
				minimum size=5pt,
				inner sep=0pt
			},
			bluept/.style={
				draw=blue!70!black,
				fill=blue!55,
				circle,
				minimum size=4.2pt,
				inner sep=0pt
			},
			redlabel/.style={
				text=red!75!black,
				font=\large
			},
			bluelabel/.style={
				text=blue!75!black,
				font=\large
			},
			dist/.style={
				font=\normalsize,
				fill=white,
				inner sep=1.2pt,
				text=black
			}
			]
			
			\coordinate (A1) at (0,0);
			\coordinate (T21) at (2,0);
			\coordinate (Ar) at (4,0);
			\coordinate (Br) at (8,0);
			\coordinate (T12) at (10,0);
			\coordinate (B1) at (12,0);
			
			\draw[mainline] (0,0) -- (5.3,0);
			\draw[gapline]  (5.3,0) -- (6.7,0);
			\draw[mainline] (6.7,0) -- (12,0);
			
			\node[redpt] at (A1) {};
			\node[redpt] at (Ar) {};
			\node[redpt] at (Br) {};
			\node[redpt] at (B1) {};
			
			\node[bluept] at (0.14,0) {};
			\node[bluept] at (T21) {};
			\node[bluept] at (T12) {};
			\node[bluept] at (11.86,0) {};
			
			\node[redlabel, below=9pt] at (A1) {$a_1,\dots,a_{r-1}$};
			\node[redlabel, below=9pt] at (Ar) {$a_r$};
			\node[redlabel, below=9pt] at (Br) {$b_r$};
			\node[redlabel, below=9pt] at (B1) {$b_1,\dots,b_{r-1}$};
			
			\node[bluelabel, above left=4pt and -1pt] at (0.14,0) {$\tau_1$};
			\node[bluelabel, above=5pt] at (T21) {$\tau_{21}$};
			\node[bluelabel, above=5pt] at (T12) {$\tau_{12}$};
			\node[bluelabel, above right=4pt and -1pt] at (11.86,0) {$\tau_2$};
			
			\node[dist, above=5pt] at (1,0) {$1$};
			\node[dist, above=5pt] at (3,0) {$1$};
			\node[dist, above=5pt] at (9,0) {$1$};
			\node[dist, above=5pt] at (11,0) {$1$};
			
		\end{tikzpicture}
		\caption{One edge in the complete graph $K_z$ with endpoint vertices $1$ and $2$}
		\label{fig:example_ECA_fails_any_approx_core}
	\end{figure}
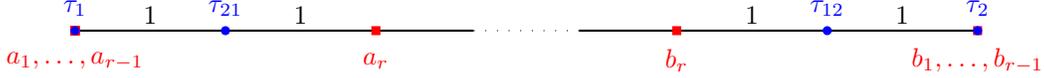
	For the edge between $\tau_1$ and $\tau_2$, there are two extra candidate stops $\tau_{12}$ and $\tau_{21}$. All of these $4$ candidate stops are marked as blue circles. On this edge, there are $r$ agents with travel endpoints located along the line. Specifically, for agents $1,2,\dots, r-1$, the starting point is at $\tau_1$, and terminal point is at $\tau_2$. Agent $r$ has a starting point at $a_r$ and a terminal point at $b_r$. All endpoints are marked as red squares.
	
	The total number of agents is $n=r \cdot \frac{z(z-1)}{2}$, since there are $\frac{z(z-1)}{2}$ edges in $K_z$, with $r$ agents placed on each edge. 
	The total number of candidate transit stops is $z + z(z-1)=z^2$, which includes the $z$ vertices of the complete graph and two additional stops per edge. 
	Given $k=z^2-z$, we have $\ceil{\frac{2n}{k}}=r$. 
	According to the execution procedures of ECA, when the cost radius reaches $2$, each cost ball centered around a stop pair $\{\tau_{ij},\tau_{ji}\}$ will capture all $r$ agents on the corresponding edge $(i,j)$. 
	Therefore, ECA outputs the solution $Y= \bigcup_{i,j\in [z]: i\neq j}\{\tau_{ij}, \tau_{ji}\}$, which contains $2\cdot \frac{z(z-1)}{2}=z^2-z=k$ stops and is thus a feasible solution. 
	Now, consider the subset of agents $S$ whose endpoints are on the vertices of $K_z$, that is, excluding agent $r$ on each edge. 
	We have $|S|=\frac{z(z-1)}{2}\cdot (r-1)$. 
	Let $T=\{\tau_1,\tau_2,\dots, \tau_z\}$ be a deviation with $|T|=z$. 
	For each agent $i\in S$, since both endpoints coincide with stops in $T$, it holds that$c_i(T)=0$. 
	On the other hand, under ECA solution $Y$, all agents in $S$ incur cost $c_i(Y)=2$. 
	Recalling that $z = \ceil{\frac{r}{r-1}\cdot \gamma + 1}$, we see that
	\begin{align*}
		\frac{|S|}{|T|\cdot \ceil{\frac{n}{k}}}=\frac{\frac{z(z-1)\cdot (r-1)}{2}}{z\cdot \frac{r}{2}}=\frac{r-1}{r}\cdot (z-1) \geq \gamma.
	\end{align*}
	Therefore, in this instance, under the ECA solution $Y$, there exists a group of agents $S$ and a solution $T$ such that $|S| \geq \gamma\cdot |T| \cdot \frac{n}{k}$, and for every agent $i\in S$, $\rho\cdot c_i(T)=0 < c_i(Y)$, which implies that ECA fails to satisfy $(\gamma, \rho)$-core for any $\gamma, \rho \geq 1$. 
\end{proof}

Intuitively, we construct an instance based on the complete graph $K_z$, where each vertex corresponds to a candidate stop, and each edge contains two additional internal candidate stops and $r$ agents. By setting the size $k=z(z-1)$, ECA's capture threshold is exactly $\frac{2n}{k}=r$. For each edge, the two internal edge-stops capture exactly the $r$ agents on that edge, so ECA selects all internal edge-stops and exhausts its entire budget. However, the vertex-agents across all edges together form a sufficiently large coalition that can deviate to the $z$ vertex stops, reducing their cost from $2$ to $0$, thereby violating any $(\gamma, \rho)$-approximation of core. Despite guaranteeing the best-known approximation to JR, ECA performs arbitrarily poorly with respect to core, while on the other hand, \GCT obtains a worse JR approximation but guarantees a constant-factor core approximation. So we present a parameterized algorithm, termed $\lambda$-Hybrid algorithm, which effectively navigates the tradeoff between JR and core delineated by ECA and \GCT in \Cref{app:core_analysis_of_lambda_hybrid}.

\section{$\lambda$-Hybrid: Balancing Core and JR Approximations}	\label{app:hybrid}
As we saw in Section~\ref{sec:fair_clustering_model} and Section~\ref{sec:expanding_cost_algorithm}, ECA guarantees the best-known approximation to JR despite performing arbitrarily poorly with respect to core, while \GCT obtains a worse approximation to JR but guarantees a constant-factor approximation to core.
In this section, we present an algorithm, parameterized by $\lambda\in [0,1]$, which effectively navigates the tradeoff between JR and core delineated by ECA and \GCT.\footnote{The tradeoff we remark on here is purely between these two algorithms. Theoretical evidence of such a tradeoff, for example showing the impossibility of algorithms which guarantee $\alpha$-JR and $(\beta,\gamma)$-core for some $\alpha,\beta,\gamma$, is an interesting direction for future research.}
Intuitively, this algorithm integrates the decision-making principles of both \GCT and ECA by concurrently simulating both algorithms and considering both individual transit stop candidates and pairs of stops.
We call it the $\lambda$-Hybrid algorithm and give a formal description in \Cref{alg::hybrid_ECA}.
In essence, the parameter $\lambda$ allows tuning between ECA and \GCT by controlling the rates of growth of the respective ``radii'' of each algorithm relative to each other.
Specifically, $\lambda$ encodes the ratio between the rate of growth of the distance radius (\GCT radius) and the cost radius (ECA radius). If the rates of growth are close to equal ($\lambda$ close to 1), the algorithm is closer to \GCT since as the distance radius will likely dominate stop selection. On the other hand, as $\lambda$ approaches 0, the distance radius grows much slower than the cost radius, and the algorithm moreso mimics the behavior of ECA.

We remark that the $0$-hybrid algorithm is not equivalent to ECA. 
To see this, when $\lambda = 0$, note that when given instances where $2n/k$ endpoints are located at the same position, the $0$-hybrid algorithm will deactivate these endpoints immediately, as they are already within distance radius zero. 
In contrast, this location is not guaranteed to be selected by ECA when no pair of stops with $0$ cost exists. 

\begin{algorithm}[h!]
	\caption{$\lambda$-Hybrid}
	\label{alg::hybrid_ECA}
	\begin{algorithmic}[1]
		\REQUIRE TrSP instance $\mathcal{I}=\langle N, \mathcal{C}, \{\theta_i\}_{i\in N}, k \rangle$, $\lambda \in [0,1]$. 
		\ENSURE $Y$.
		\STATE Initialize $r\leftarrow 0$, $Y\leftarrow \emptyset$.
		\STATE Let $\Theta$ be a multiset including all the endpoints of agents in $N$. 
		\WHILE{$\Theta \neq \emptyset$}
		\STATE Smoothly increase $r$.
		\WHILE{$\exists~ \{a_i, b_i\} \subseteq \Theta$ such that $c_i(Y) \leq r$ or $\exists~ e_j \in \Theta$ such that $d(e_j,Y)\leq \lambda \cdot r$}
		\STATE $\Theta \leftarrow \Theta \setminus \{a_i,b_i\}$ or $\Theta \leftarrow \Theta \setminus \{e_j\}$\;
		\ENDWHILE
		\WHILE{$\exists~(\tau_1,\tau_2) \in \mathcal{C}^2\setminus Y^2$ and $\exists S \subseteq \mathcal{N}, |S| \geq \ceil{\frac{2\cdot n}{k}}$, such that $\forall~j\in S$, (1) $\{a_i,b_i\}\subseteq \Theta$; and (2) $c_j(Y\cup\{\tau_1,\tau_2\})\leq r$} \label{alg::hybrid_ECA:eca_loop_start}
		\STATE $Y \leftarrow Y \cup \{\tau_1,\tau_2\}$\;
		\STATE $\Theta \leftarrow \Theta \setminus \{a_i, b_i\}_{i\in S}$\;
		\ENDWHILE \label{alg::hybrid_ECA:eca_loop_end}
		\WHILE{$\exists~ \tau_3 \in \mathcal{C}\setminus Y$ and $\exists~ E\subseteq \Theta, |E|\geq \ceil{\frac{2n}{k}}$, such that $\forall~e\in E, \ d(e, Y\cup \{\tau_3\})\leq \lambda \cdot r$} \label{alg::hybrid_ECA:gc_loop_start}
		\STATE $Y \leftarrow Y \cup \{\tau_3\}$\;
		\STATE $\Theta \leftarrow \Theta \setminus E$\;
		\ENDWHILE \label{alg::hybrid_ECA:gc_loop_end}
		\ENDWHILE
		\STATE Return $Y$.
	\end{algorithmic}
\end{algorithm}

In the remainder of this section, we will show that the $\lambda$-Hybrid algorithm offers a way of smoothly navigating the tradeoff between JR and core created by ECA and \GCT.
Specifically, we show that $\lambda$-Hybrid satisfies
\begin{align*}
	\frac{\lambda+3+\sqrt{\lambda^2+10\lambda+9}}{2}\text{-JR and }(2,\frac{\sqrt{\lambda^2 + 6\lambda + 1} + \lambda + 1}{2\lambda})\text{-core}
\end{align*}
where the JR approximation upper bound holds for $\lambda\in[0,1]$ and the core approximation ratio holds for all $\lambda\in(0,1].$
\Cref{fig:plot_hybrid} plots the fairness approximations obtained as a function of $\lambda$, showing that, as $\lambda$ increases, the approximation ratio of JR worsens while that of core improves, as expected.

\begin{figure}[!htbp]
	\centering
	\includegraphics[width=.5\textwidth]{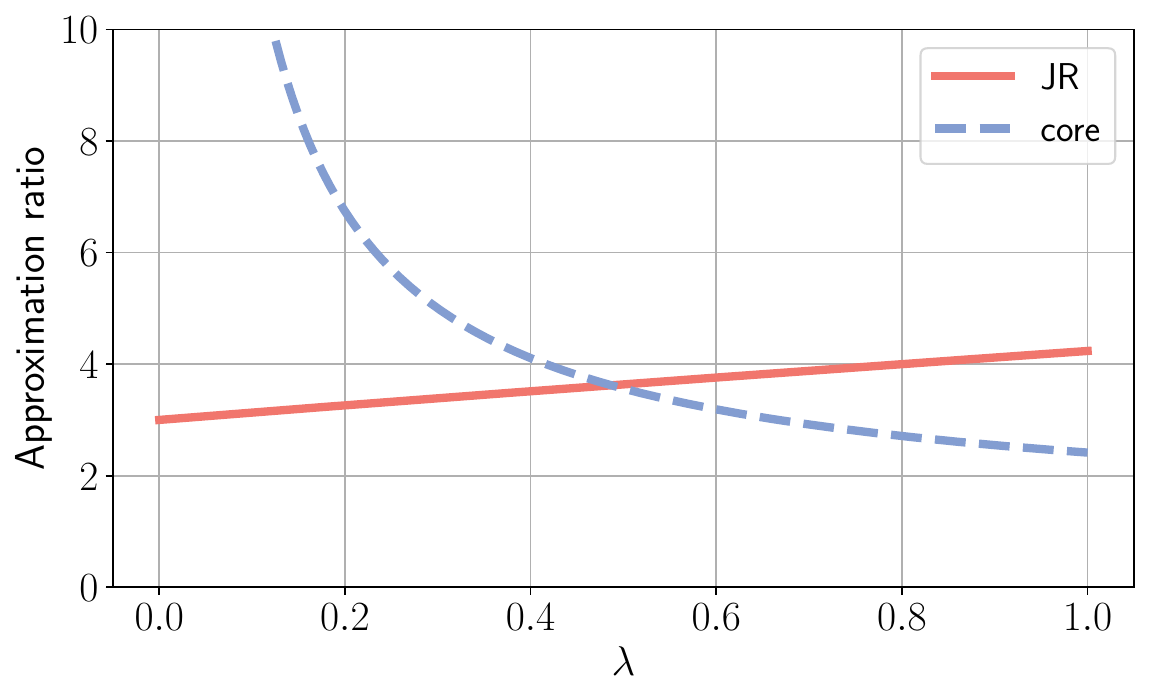}
	\caption{Parameter $\lambda \in [0,1]$. Red solid line represents the JR approximation ratio of  $\frac{\lambda+3+\sqrt{\lambda^2+10\lambda+9}}{2}$ and blue dashed line represents the  parameterized function $\frac{\sqrt{\lambda^2 + 6\lambda + 1} + \lambda + 1}{2\lambda}$ of core approximation.}
	\label{fig:plot_hybrid}
\end{figure}

\subsection{JR analysis of $\lambda$-Hybrid}
Parameterized by $\lambda \in [0,1]$, we begin by analyzing the extent to which the $\lambda$-Hybrid algorithm approximates JR. Building on the ideas underlying the JR analysis of the \GCT algorithm and ECA, we establish that the $\lambda$-Hybrid algorithm achieves a $\frac{\lambda+3+\sqrt{\lambda^2+10\lambda+9}}{2}$-approximation of JR. As $\lambda$ approaches $0$, the $\lambda$-Hybrid algorithm aligns more closely with ECA, thereby attaining a stronger approximation of JR. Conversely, as $\lambda$ approaches $1$, the algorithm shifts towards the behavior of \GCT, which yields a weaker JR approximation but, in turn, provides a stronger guarantee with respect to core approximation (see the next subsection).
For ease of exposition, for the remainder of the section, we will refer to the selection process in lines 8-11 as the ``ECA loop'' and the selection process in lines 12-15 as the ``\GCT loop''.
\begin{restatable}{theorem}{thmHybridJrApprox}
	\label{thm:JR_approximation_hybrid_ECA}
	$\lambda$-Hybrid satisfies $\frac{\lambda+3+\sqrt{\lambda^2+10\lambda+9}}{2}$-JR, where $\lambda \in [0, 1]$, and this bound is tight.
\end{restatable}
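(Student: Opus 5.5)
We argue by contradiction, as in the proofs of \Cref{thm:Greedy_Capture_JR_approximation_2_sqrt5} and \Cref{thm:ECA_approximates_JR}. Write $\beta_\lambda=\frac{\lambda+3+\sqrt{\lambda^2+10\lambda+9}}{2}$. This is the positive root of $z^2=(\lambda+3)z+\lambda$, i.e.\ the value of $\max_{z\geq 1}\min\{z,\ \lambda+3+\lambda/z\}$. Our goal is therefore to reduce every case to a pair of agents whose ratios are bounded by $z$ and by $(\lambda+3)+\lambda/z$ respectively, or by something smaller.

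Suppose $S$ with $|S|\geq\lceil 2n/k\rceil$ and $T=\{\tau_1,\tau_2\}$ witness a violation of $\beta_\lambda$-JR. As in ECA, every $j\in S$ routes through $T$, entering at $a_j$ via $\tau_1$ and leaving at $b_j$ via $\tau_2$. Let $r_T=\max_{j\in S}c_j(T)$, attained by $i^*$. If the radius never reaches $r_T$, then $i^*$ is either deactivated as a pair with $c_{i^*}(Y)\le r_T$, or both endpoints are within $\lambda r_T\le r_T$ of $Y$; either way $c_{i^*}(Y)\le 2r_T$, a contradiction. Otherwise, at radius $r_T$ the ECA loop would open $T$ unless some agent of $S$ has already lost an endpoint. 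We split on how this happened.

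\textbf{Case A: an agent $i\in S$ was removed as a pair with $c_i(Y)\le r_T$.} The two-case triangle-inequality argument from \Cref{thm:ECA_approximates_JR} carries over verbatim and gives $c_{i^*}(Y)\le 2r_T+c_i(T)$. Hence the minimum ratio is at most $1+\sqrt2\le\beta_\lambda$.

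\textbf{Case B: some endpoint of an agent $i'\in S$, say $a_{i'}$, was removed by the distance rule or the \GCT loop.} Then $d(a_{i'},Y)\le\lambda r_T$, witnessed by some $y'\in Y$. Routing $i^*$ through $y'$ gives $d(a_{i^*},y')\le r_T+x+\lambda r_T$, where $x=d(a_{i'},\tau_1)$. For the $b$-side we argue symmetrically. Either some $b_{i''}$ was removed by distance, giving $d(b_{i^*},Y)\le d(b_{i^*},\tau_2)+y+\lambda r_T$ with $y=d(\tau_2,b_{i''})$, or some agent was removed as a pair, which gives a stop near $b$ within the cost budget. We then mirror the three-agent computation of \Cref{thm:Greedy_Capture_JR_approximation_2_sqrt5}, with the \GCT radius $r_T$ replaced by $\lambda r_T$ on the deactivation terms. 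The expected result is a bound of the form
\[
\min\Big(\tfrac{c_{i'}(Y)}{c_{i'}(T)},\tfrac{c_{i^*}(Y)}{c_{i^*}(T)}\Big)\le\min\Big(\tfrac{r_T}{x},\ \lambda+3+\tfrac{\lambda x}{r_T}\Big)\le\beta_\lambda .
\]
To get the first term, we bound $c_{i'}(Y)$ by the larger of $r_T$ and the cost at the radius where $i'$ was removed, i.e.\ $c_{i'}(Y)\le r_T$, and use $c_{i'}(T)\ge x$. As in \Cref{thm:Greedy_Capture_JR_approximation_2_sqrt5}, we reduce to the symmetric case $x=y$, which is justified by the same min-comparison lemma.

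The main obstacle is Case B. There the endpoints of $S$ may be deactivated by a mixture of the distance mechanism and the pair mechanism, at different radii. We must make sure that every endpoint of $i^*$ ends up at distance at most roughly $r_T+(\text{offset})+\lambda r_T$ from $Y$, and that the offsets $x,y$ appear with coefficient $\lambda$ rather than $1$ in the worst-ratio expression; otherwise we recover only the \GCT bound $2+\sqrt5$. I would first handle the purely-distance subcase, which is the one that should produce exactly $z=\lambda+3+\lambda/z$, then show that the mixed subcases are dominated by the ECA bound.

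For tightness, we adapt the GC-TrSP lower-bound instance from \Cref{app:omitted_proofs}. We place two agents whose endpoints realize $x=r_T/z^*$ with $z^*=\beta_\lambda$, and fill each ball with $\lceil 2n/k\rceil$ decoy endpoints. The decoys make the \GCT loop open stops at distance exactly $\lambda r_T$ before the ECA loop can open $T$, so that both deviating agents improve by a factor of $\beta_\lambda-\varepsilon$.
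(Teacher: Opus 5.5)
Your skeleton matches the paper's: the contradiction set-up with $r_T$ and $i^*$, the case where $r$ never reaches $r_T$, and Case A via the ECA triangle-inequality argument. However, Case B, which carries all the $\lambda$-dependence, is not proved, and the ingredients you sketch for it do not hold.

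\textbf{The $b$-side bound.} The failure to open $T$ at parameter $r_T$ forces only \emph{one} removal among the endpoints of $S$. Nothing places a $b$-endpoint of $S$ within $\lambda r_T$ of $Y$. The paper's own tightness instance shows this. There, $a_4$ is captured at distance $\lambda(1-\varepsilon/2)$, but $b_2$ is captured by $y_2$ only at distance $\approx r_T$ and parameter $\approx r_T/\lambda$. For $\lambda<1$ that is neither of your two alternatives.

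\textbf{The other estimates.} The bound $c_{i'}(Y)\le r_T$ has no basis: $i'$ lost only one endpoint, and that bound applies only to pair-removed agents. The displayed bound $\min\{r_T/x,\ \lambda+3+\lambda x/r_T\}$ does not follow from your per-agent estimates, which give at best $1+2\lambda+(x+y)/r_T$ for $i^*$. It is reverse-engineered from $z^2=(\lambda+3)z+\lambda$. Moreover, $x$ enters through the triangle inequality with coefficient $1$, not $\lambda$.

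\textbf{The paper's route.} The paper's argument is asymmetric. The $a$-side is within $\lambda r_T$. For the $b$-side, when $\tau_2\notin Y$, it uses that the GC-TrSP loop would open $\tau_2$ once $\lambda r$ reaches $r_T$, so some $b_{j'}$ lies within $r_T$ of a selected stop. The three agents $j,j',i^*$ then give $\min\{1+(\lambda+1)q,\ \lambda+2+2/q\}$, whose maximum over $q$ is $\beta_\lambda$; the case $\tau_2\in Y$ is treated separately. Your tightness sketch, with both stops at distance $\lambda r_T$, has the same symmetry error.

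\textbf{The deferred subcase fails.} The subcase you defer, an agent of $S$ removed as a pair and ``dominated by the ECA bound,'' cannot be closed. The paper's $b$-side step has the same hole. An agent of $S$ may leave $\Theta$ through the cost rule (e.g.\ by walking) at any parameter in $(r_T,r_T/\lambda)$, before $\tau_2$ is captured. Then neither $c(Y)\le r_T$ nor a stop within $r_T$ of its $b$-endpoint is available.

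This breaks the claimed bound for small $\lambda$. Take $\lambda=1/10$ and points $W,P,R,U,V$ on a line in this order, with consecutive gaps $100$, $1/2$, $4$, $1/2$. Let the candidates be $\{P,R,U\}$ and $n=k=3$, so the threshold is $2$. Let the agents be $1=(P,V)$, $2=(R,U)$, $3=(P,W)$.
\begin{itemize}
\item The GC-TrSP loop opens $P$ immediately, since two endpoints lie at distance $0$.
\item After that only agent $2$ is fully active, so the ECA loop never fires.
\item Agent $2$ is removed by walking at $r=4$. This happens before $a_2$ enters the distance rule and before $U$ can capture two active endpoints, both of which would occur at $r=5$.
\item The remaining endpoints are removed by the distance rule.
\end{itemize}
The output is $Y=\{P\}$. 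With $S=\{1,2\}$ and $T=\{R,U\}$ we get $c_1(Y)=5$, $c_1(T)=1$, $c_2(Y)=4$, $c_2(T)=0$. So $Y$ violates $\beta$-JR for every $\beta<5$, whereas $\beta_{1/10}\approx 3.13$.

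Hence no completion of your Case B can succeed for all $\lambda\in[0,1]$. The statement needs a restriction on $\lambda$ or a change to the algorithm.
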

\begin{proof}
	Let $Y \subseteq \mathcal{C}$ be the transit stop solution returned by $\lambda$-Hybrid. 
	Note that $\frac{\lambda+3+\sqrt{\lambda^2+10\lambda+9}}{2}\geq 3$ on the interval $[0,1]$.
	Suppose for a contradiction that $Y$ violates $\frac{\lambda+3+\sqrt{\lambda^2+10\lambda+9}}{2}$-JR.
	That is, there exists a group of agents $S\subseteq N$, $|S|\geq \ceil{\frac{2n}{k}}$, and pair of transit stops $T \subseteq \mathcal{C}$ such that for every agent $i\in S$, $\frac{\lambda+3+\sqrt{\lambda^2+10\lambda+9}}{2} \cdot c_i(T) < c_i(Y).$
	Denote the two stops in $T$ by $\tau_1$ and $\tau_2$. 
	For each agent $i \in S$, there is a matching between their endpoints and stops $\tau_1,\tau_2$. 
	Without loss of generality, for each agent $i\in S$, denote $a_i$ as the endpoint that uses stop $\tau_1$ and $b_i$ as the endpoint that uses stop $\tau_2$.
	Let $r_T=\max_{i\in S}c_i(T)$ be the maximum cost of any agent in $S$ when using transit stop pair $T=\{\tau_1, \tau_2\}$. Let $i^\ast$ be an agent in $S$ that realizes this maximum distance, i.e., $r_T=c_{i^*}(T)$.
	
	We first consider the case when the parameter $r$ never reaches $r_T$. 
	That is, all endpoints are deactivated either by the \GCT loop with a distance radius at most $\lambda \cdot r_T$ or by the ECA loop with a cost radius at most $r_T$. 
	Consequently, we have $c_{i^\ast}(Y) \leq \max (2 \lambda \cdot r_T, r_T) \leq  2\cdot c_{i^\ast}(T)$, contradicting that $\frac{\lambda+3+\sqrt{\lambda^2+10\lambda+9}}{2} \cdot c_{i^*}(T) < c_{i^*}(Y)$. 
	
	We next consider the case in which the parameter $r$ reaches $r_T$ (i.e., the \GCT loop reaches distance radius $\lambda \cdot r_T$ and the ECA loop reaches cost radius $r_T$). 
	Notice that the algorithm reaches radius parameter $r_T$ but does not select $(\tau_1, \tau_2)$. This implies that either (a) an agent in $S$, or (b) an endpoint of an agent in $S$, has already been deactivated during the execution of the algorithm.
	
	\textbf{Case (a).} There exists an agent $j\in S$ such that $j$ is deactivated by some pair of transit stops $\{y_1,y_2\} \subseteq Y$ in the ECA loop with a cost radius of at most $r_T$. 
	Therefore, it holds that $c_j(Y) \leq r_T$. 
	For agent $i^\ast$, we have 
	\begin{align*}
		c_{i^*}(Y) &\leq d(a_{i^*},y_1) + d(b_{i^*}, y_2) \\
		&\leq d(a_{i^*},\tau_1) + d(\tau_1, a_j) + d(a_j,y_1) + d(b_{i^*},\tau_2) + d(\tau_2,b_j)+d(b_j,y_2) \\
		&\leq 2\cdot r_T + c_j(T).
	\end{align*}
	The minimum multiplicative cost improvement for agents $i^*$ and $j$ is: 
	\begin{align*}
		\min\left(\frac{c_j(Y)}{c_j(T)}, \frac{c_{i^*}(Y)}{c_{i^*}(T)}\right) \leq \min\left( \frac{r_T}{c_j(T)}, \frac{2\cdot r_T+c_j(T)}{r_T} \right) \leq \max_{z\geq 0}\left( \min \left(z, 2+\frac{1}{z}\right) \right) \leq 1 + \sqrt{2},
	\end{align*}
	again contradicting that $\frac{\lambda+3+\sqrt{\lambda^2+10\lambda+9}}{2} \cdot c_i(T) < c_i(Y)$ for all $i\in S.$
	
	\textbf{Case (b).} There exists an endpoint of an agent $j \in S$ which is deactivated by some singleton transit stop candidate $y_1$ in the \GCT loop with a distance radius at most $\lambda \cdot r_T$.
	The tougher subcase is that in which $\tau_2\notin Y$. 
	Notice that $\tau_2$ can deactivate $|S| \geq \frac{2n}{k}$ terminal endpoints for agents in $S$ {with radius at most $r_T$}, but is not selected by the algorithm. 
	Hence, there must exist some selected transit stop $y_2\in Y$ which deactivates at least one terminal endpoint of some agent in $S$ {with radius no greater than $r_T$}. 
	Denote the corresponding agent and endpoint by $j'$ and $b_{j'}$.
	We first consider the cost of agent $j$ under $Y$,
	\begin{align*}
		c_j(Y) &\leq d(a_j, y_1) + d(b_j,y_2) \\
		& \leq \lambda\cdot r_T + d(b_j, \tau_2) + d(\tau_2, b_{j'}) + d(b_{j'}, y_2) \\
		& \leq (\lambda +1)\cdot r_T +  d(b_j, \tau_2) + d(\tau_2, b_{j'}).
	\end{align*}
	Similarly, we consider the cost of agent $j'$ under $Y$,
	\begin{align*}
		c_{j'}(Y) &\leq d(a_{j'},y_1) + d(b_{j'}, y_2) \\
		&\leq d(a_{j'},\tau_1) + d(\tau_1, a_j) + d(a_j,y_1) + d(b_{j'},y_2)\\
		&\leq (\lambda+1)\cdot r_T + d(\tau_1, a_j) + d(a_{j'}, \tau_1).
	\end{align*}
	We next focus on agent $i^*$ and show the following upper bound:
	\begin{align*}
		c_{i^*}(Y) &\leq d(a_{i^*}, y_1) + d(b_{i^*}, y_2) \\
		&\leq d(a_{i^*}, \tau_1) + d(\tau_1, a_j) + d(a_j,y_1) + d(b_{i^*},\tau_2) + d(\tau_2, b_{j'}) + d(b_{j'}, y_2)\\
		&\leq r_T + d(\tau_1, a_j) + \lambda\cdot r_T + d(\tau_2, b_{j'}) + r_T\\
		&\leq (\lambda+2)\cdot r_T + d(\tau_1, a_j) + d(\tau_2, b_{j'}).
	\end{align*}
	To clarify the expression, let $d(\tau_1,a_j)$ be $x$ and $d(b_{j'},\tau_2)$ be $y$.
	With these upper bounds in hand, we derive the minimum multiplicative cost improvement of agents $\{i^*,j,j'\}$ under $T$.
	\begin{align*}
		\min\left(\frac{c_j(Y)}{c_j(T)}, \frac{c_{j'}(Y)}{c_{j'}(T)}, \frac{c_{i^*}(Y)}{c_{i^*}(T)}\right)
		& \leq \min\left( \frac{(\lambda+1)\cdot r_T +d(b_j,\tau_2)+y}{x+d(b_j,\tau_2)}, \frac{(\lambda+1)\cdot r_T + x + d(a_{j'},\tau_1)}{d(a_{j'},\tau_1)+y}, \frac{(\lambda+2)\cdot r_T + x + y}{r_T} \right)\\
		&\leq \min\left(\frac{(\lambda+1)\cdot r_T +y}{x}, \frac{(\lambda+1)\cdot r_T + x}{y}, \frac{(\lambda+2)\cdot r_T + x + y}{r_T} \right). 
	\end{align*}
	To upper-bound this expression, we consider
	\begin{align*}
		{\min}\left( \frac{(\lambda+1)\cdot r_T +y}{x},\frac{(\lambda+1)\cdot r_T+x}{y}, (\lambda+2) + \frac{x+y}{r_T} \right)
		&\leq {\min}\left( 1+\frac{(\lambda+1)\cdot r_T}{x},  (\lambda+2) + \frac{2x}{r_T} \right)\\
		& \leq \max\left(\min_{q\geq 1}\left(1+(\lambda+1)q, (\lambda+2)+\frac{2}{q}\right)\right)\\
		&= \frac{\lambda+3+\sqrt{\lambda^2+10\lambda+9}}{2},
	\end{align*}
	where it holds with equality when $q=\frac{\lambda+1+\sqrt{\lambda^2+10\lambda+9}}{2(\lambda+1)}$. 
	Therefore, we conclude that {for some agent in $i'\in \{j,j',i^*\}\subseteq S$}, we have $c_{i'}(Y) \leq \frac{\lambda+3+\sqrt{\lambda^2+10\lambda+9}}{2} \cdot c_{i'}(T)$, contradicting to the assumption that for every agent $i\in S$, $\frac{\lambda+3+\sqrt{\lambda^2+10\lambda+9}}{2}\cdot c_{i}(T) < c_i(Y)$.
	
	Lastly, we just need to handle the easier subcase in which $\tau_2$ already belongs to the stop placement, i.e. $\tau_2\in Y.$
	Here we need only consider the multiplicative improvement of agents $j$ and $i^*$:
	\begin{align*}
		\min\left(\frac{c_j(Y)}{c_j(T)}, \frac{c_{i^*}(Y)}{c_{i^*}(T)} \right) 
		&\leq \min\left( \frac{\lambda\cdot r_T + d(b_j, Y)}{d(a_j, \tau_1) + d(b_j, \tau_2)}, \frac{d(a_{i^*}, \tau_1) + d(a_j, \tau_1) + \lambda\cdot r_T + d(b_{i^*}, Y)}{c_{i^*}(T)} \right) \\
		&\leq \min\left( \frac{\lambda\cdot r_T + d(b_j, \tau_2)}{d(a_j, \tau_1) + d(b_j, \tau_2)}, \frac{c_{i^*}(T) + d(a_j, \tau_1) + \lambda\cdot r_T }{r_T} \right) \\
		&\leq \min\left( \frac{\lambda\cdot r_T}{d(a_j, \tau_1)}, \frac{(\lambda + 1) r_T + d(a_j, \tau_1)}{r_T} \right) \\
		&\leq \max\left( \min_{q\geq 1} \left( \lambda\cdot q, 1 + \lambda + 1/q \right) \right) \\
		&= \frac{\lambda + 1 + \sqrt{\lambda^2 + 6\lambda + 1}}{2}.
	\end{align*}
	Since this value never exceeds $1+\sqrt{2}$ on the unit interval, this also provides a contradiction and concludes our upper bound proof.
\end{proof}

\subsection{Analysis of Core Fairness of the $\lambda$-Hybrid Algorithm}\label{app:core_analysis_of_lambda_hybrid}

We next analyze the core approximation ratio of the $\lambda$-Hybrid algorithm. Our analysis leverages \Cref{thm:clustering_meta}, which provides the crucial link between proportional fairness in clustering and core in TrSP. 
We will demonstrate that the $\lambda$-Hybrid algorithm satisfies $(\frac{\sqrt{\lambda^2 + 6\lambda + 1} + \lambda + 1}{2\lambda})$-PF in the induced clustering instance, which immediately implies that it satisfies $(2,\frac{\sqrt{\lambda^2 + 6\lambda + 1} + \lambda + 1}{2\lambda})$-core.
\begin{lemma} \label{thm:hybrid_core_upper_bdd}
	For any $\lambda \in (0,1]$ and any TrSP instance $\mathcal{I}$, the $\lambda$-Hybrid solution under $\mathcal{I}$ guarantees $\frac{\sqrt{\lambda^2 + 6\lambda + 1} + \lambda + 1}{2\lambda}$-PF in the induced clustering instance $\mathcal{I}^C$.
\end{lemma}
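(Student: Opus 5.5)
We argue by contradiction, following the Greedy Capture style analysis of PF and the structure of the proof of \Cref{thm:JR_approximation_hybrid_ECA}. Set $\rho := \frac{\sqrt{\lambda^2+6\lambda+1}+\lambda+1}{2\lambda}$, and note that $\rho$ is the positive root of $\lambda q^2-(\lambda+1)q-1=0$. Hence
\[
\rho = \max_{p>0}\min\Bigl(p,\ \tfrac{\lambda+1}{\lambda}+\tfrac{1}{\lambda p}\Bigr).
\]
Also, $\rho$ is decreasing in $\lambda$ and equals $1+\sqrt{2}$ at $\lambda=1$, so $\rho\geq 1+\sqrt{2}$ on $(0,1]$.

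Suppose $Y$ is not $\rho$-PF in $\mathcal{I}^C$. Then there are a set $E\subseteq\Theta$ of endpoints with $|E|\geq \ceil{\frac{2n}{k}}$ and a center $c$ such that $\rho\, d(e,c)<d(e,Y)$ for all $e\in E$. Let $r_c=\max_{e\in E} d(e,c)$, attained at $e^*$.

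If the run ends before the parameter reaches $r_c/\lambda$, then every endpoint, and in particular $e^*$, is deactivated earlier. This gives $d(e^*,Y)$ within a constant of $r_c$: either by the \GCT loop, giving $d(e^*,Y)\le r_c$, or by the ECA loop, in the same way as the next paragraph. This contradicts the assumption with room to spare.

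Otherwise the parameter reaches $r_c/\lambda$. At that moment the \GCT loop could open $c$ capturing all of $E$ at distance radius $\lambda\cdot(r_c/\lambda)=r_c$. So either $c\in Y$, which immediately contradicts the assumption, or some $e'\in E$ was already deactivated with parameter at most $r_c/\lambda$. Write $x=d(e',c)\le r_c$. There are two subcases.
\begin{enumerate}
\item The endpoint $e'$ was removed as a singleton, in the \GCT loop or the singleton deactivation rule. Then $d(e',Y)\le r_c$ and $d(e^*,Y)\le r_c+x+r_c$. This yields
\[
\min\Bigl(\tfrac{r_c}{x},\ 2+\tfrac{x}{r_c}\Bigr)\le 1+\sqrt{2}\le\rho .
\]
\item The endpoint $e'$ was removed together with its agent $j$, via the ECA loop or the rule $c_j(Y)\le r$. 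Then $d(e',Y)\le c_j(Y)\le r_c/\lambda$ under null transit cost, and $d(e^*,Y)\le r_c+x+r_c/\lambda$. Putting $p=r_c/(\lambda x)$, we get
\[
\min\Bigl(\tfrac{r_c/\lambda}{x},\ 1+\tfrac{1}{\lambda}+\tfrac{x}{r_c}\Bigr)=\min\Bigl(p,\ \tfrac{\lambda+1}{\lambda}+\tfrac{1}{\lambda p}\Bigr)\le\rho .
\]
\end{enumerate}
In both subcases one of $e',e^*$ satisfies $\rho\, d(e,c)\ge d(e,Y)$, which is a contradiction.

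The main obstacle is the inequality $d(e',Y)\le c_j(Y)$ in the second subcase. This holds when agent $j$ actually rides the shuttle under $Y$, since then both endpoints are within $c_j(Y)$ of $Y$. It fails if $j$ was deactivated because walking was cheap ($d(a_j,b_j)\le r$), because then $d(e',Y)$ is not controlled.

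To handle this, I would first check whether such agents can be charged differently. If $c_j(Y)=d(a_j,b_j)$, the other endpoint $e''$ of $j$ is within $r_c/\lambda$ of $e'$. One could try to bound $d(e^*,Y)$ through the other endpoint, or argue that the walking-deactivated endpoints can be discounted from the count of $E$ as in the threshold argument of \Cref{thm:clustering_meta}. Failing that, I would restrict attention to the algorithm's behaviour in which deactivations in the ECA loop are realized through a selected pair. There the bound $d(e',y)\le c_j(Y)$ holds for the stop $y$ that $j$ uses, and the computation above then goes through as written.
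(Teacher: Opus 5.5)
Your argument is the paper's own. It runs by contradiction from a violating pair $(E,c)$ and uses the threshold $r_c/\lambda$ at which the \GCT loop could open $c$. It then takes a witness $e'\in E$ deactivated at parameter at most $r_c/\lambda$ with $d(e',Y)\le r_c/\lambda$, and finishes with the same optimization (the paper's $q$ is your $1/(\lambda p)$). Splitting singleton from agent-level removals only sharpens the first subcase to $1+\sqrt{2}$.

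The problem is the step you single out: $d(e',Y)\le c_j(Y)$ for an endpoint removed together with its agent, which your early-termination case also relies on. This is a genuine gap and it cannot be closed. \Cref{alg::hybrid_ECA} removes $\{a_i,b_i\}$ as soon as $c_i(Y)\le r$, where $c_i$ includes the walking option, and with that rule the lemma is false. Take null transit cost, $n=2$, $k=4$ (so $\lceil 2n/k\rceil=1$), agent $1$ with $a_1=b_1=p$, agent $2$ with $d(a_2,b_2)=1$, and $\mathcal{C}=\{c,a_2,b_2\}$ with $d(p,c)=1$. Set all distances between $\{p,c\}$ and $\{a_2,b_2\}$ equal to some $L>\rho$. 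For any $\lambda\in(0,1]$, agent $1$ is deactivated as soon as the run starts, since its cost is $d(a_1,b_1)=0\le r$. This happens long before the \GCT loop could open $c$, which needs $\lambda r\ge 1$. The ECA loop then opens $\{a_2,b_2\}$ for agent $2$, and the output is $Y=\{a_2,b_2\}$. In $\mathcal{I}^C$ the size threshold is $2n/k=1$, and the datapoint $a_1$ with center $c$ gives $\rho\, d(a_1,c)=\rho<L=d(a_1,Y)$. So $Y$ is not $\rho$-PF; taking $0<d(a_1,b_1)<1$ instead changes nothing.

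Consequently none of your remedies can work.
\begin{itemize}
\item The other endpoint of a walking agent is removed in the same step and is no closer to $Y$; here both endpoints are at distance $L$.
\item PF quantifies over every set of at least $2n/k$ datapoints, including sets consisting only of walking-removed endpoints, so nothing can be discounted.
\item Restricting to pair-realized deactivations analyzes a different algorithm. The ECA loop itself is harmless: an agent still active there has $d(a_j,b_j)>r$ (otherwise the deactivation step would already have removed it), so condition (2) forces it to ride. The culprit is only the walking case of the agent-level deactivation rule.
\end{itemize}

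The paper's proof contains exactly the same unjustified step. It asserts $d(e^*,Y)\le c_{i^*}(Y)$ and $d(e',y)\le c_{i'}(Y)$ for agent-level removals, and assumes every deactivation comes with a selected center $y$. So your proposal is as complete as the paper's, but neither proves the stated lemma. The example does not contradict \Cref{cor:hybrid_core_upper}, since agent $1$ pays $0$ under every solution; what fails is the route through PF.

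One repair is to trigger agent-level deactivation, both in the deactivation step and in condition (2) of the ECA loop, only by the transit part $\min_{y_1,y_2\in Y}[d(a_i,y_1)+d(y_2,b_i)]\le r$. Then every such removal at parameter $r$ certifies $d(a_i,Y),d(b_i,Y)\le r$. Your computation, together with $1/\lambda\le\rho$ for the early-termination case, then goes through verbatim.
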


\begin{proof}
	Given any TrSP instance $\mathcal{I}=\langle N, \mathcal{C}, k, \{\theta_i\}_{i\in N} \rangle$, let $Y$ be the TrSP solution returned by $\lambda$-Hybrid under $\mathcal{I}$. 
	We show that $Y$ satisfies $\frac{\sqrt{\lambda^2 + 6\lambda + 1} + \lambda + 1}{2\lambda}$-PF for the induced clustering instance $\mathcal{I}^C=\langle \Theta, \mathcal{C},k \rangle$. 
	Suppose, for the sake of contradiction, that there exists a set of datapoints $\theta'\subseteq \Theta$ and candidate center $c\in \mathcal{C}$ such that $|\theta'|\geq 2\cdot \lceil \frac{n}{k}\rceil$ and $\frac{\sqrt{\lambda^2 + 6\lambda + 1} + \lambda + 1}{2\lambda}\cdot d(e,c) < d(e, Y)$ for each $e\in\theta'$.
	
	Let $r_T = \max_{e\in \theta'} d(e,c)$, let $e^*$ denote the point in $\theta'$ that attains this maximum, and let $i^*$ denote the agent which $e^*$ is an endpoint of in the original TrSP instance, i.e., $e^*\in \theta_{i^*}$. 
	We first consider the case in which the parameter $r$ never reaches $r_T/\lambda$ during the execution of $\lambda$-Hybrid.
	If $e^*$ was deactivated during the \GCT loop, then $d(e^*,Y) \leq \lambda (r_T/\lambda) = r_T = d(e^*, c)$, providing a contradiction. 
	The other sub-case is that $e^*$ was instead deactivated during the ECA loop, in which case it follows that $d(e^*, Y) \leq c_{i^*}(Y) \leq r_T/\lambda = (1/\lambda)\cdot d(e^*,c).$
	It can be verified that $\frac{1}{\lambda} \leq \frac{\sqrt{\lambda^2 + 6\lambda + 1} + \lambda + 1}{2\lambda}$ for all $\lambda\in(0,1]$, meaning this also contradicts our assumption.
	
	We now restrict our attention to the case in which the parameter $r$ reaches $r_T/\lambda$ during the execution of $\lambda$-Hybrid.
	Note that when $r=r_T/\lambda$, if all endpoints in $\theta'$ remain active, the ball of radius $\lambda\cdot(r_T/\lambda) = r_T$ centered at $c$ captures at least $2\cdot \lceil n/k \rceil$ endpoints. 
	Since $c$ is not selected, there must be at least one datapoint which was deactivated when the parameter $r$ was at most $r_T/\lambda$.
	We term this endpoint $e'$, denote the agent it belongs to by $i'$, and denote the candidate center selected in the round $e'$ is deactivated by $y$.
	
	We claim that $d(e',y)\leq r_T/\lambda.$
	To see this, note that if $y$ was selected in a \GCT round, it holds that $d(e',y) \leq \lambda \cdot r_T/\lambda \leq r_T/\lambda$. Otherwise, $y$ was selected in an ECA round, and it follows that $d(e',y)\leq c_{i'}(Y) \leq r_T/\lambda.$
	Using this, we now obtain a contradiction by considering the minimum multiplicative improvement attained by endpoints $e^*$ and $e'$ from $c$:
	\begin{align*}
		\min\left( \frac{d({e^*},Y)}{d(e^*,c)}, \frac{d({e'},Y)}{d(e',c)} \right) &\leq \min\left( \frac{d(e^*,c) + d(c,e') + d(e',y)}{r_T}, \frac{r_T/\lambda}{d(e',c)} \right ) \\
		&\leq \min \left( \frac{(1+1/\lambda) r_T + d(e',c)}{r_T}, \frac{r_T/\lambda}{d(e',c)} \right)\\
		&\leq \max_{q>0} \left( \min ( q + 1 + 1/\lambda, 1/(\lambda\cdot q) )\right) \\
		&= \frac{\sqrt{\lambda^2 + 6\lambda + 1} + \lambda + 1}{2\lambda}.
	\end{align*}
	where the final equality holds because the maximum in the penultimate expression is obtained when $q=(\sqrt{\lambda^2 + 6\lambda + 1} - \lambda - 1)/(2\lambda)$.
\end{proof}

Combining \Cref{thm:clustering_meta} and \Cref{thm:hybrid_core_upper_bdd} yields the following corollary, which gives an upper bound on the core approximation guaranteed by the $\lambda$-Hybrid rules.
\begin{corollary}	\label{cor:hybrid_core_upper}
	For every $\lambda\in (0,1]$, the $\lambda$-hybrid algorithm satisfies $(2,\frac{\sqrt{\lambda^2 + 6\lambda + 1} + \lambda + 1}{2\lambda})$-core.
\end{corollary}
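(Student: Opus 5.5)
The corollary follows by composing two earlier results, so the plan is a direct two-step argument with no new case analysis.

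First, fix $\lambda\in(0,1]$ and a TrSP instance $\mathcal{I}$, and let $Y$ be the output of $\lambda$-Hybrid on $\mathcal{I}$. By \Cref{thm:hybrid_core_upper_bdd}, $Y$ satisfies $\rho_\lambda$-PF in the induced clustering instance $\mathcal{I}^C$, where $\rho_\lambda := \frac{\sqrt{\lambda^2 + 6\lambda + 1} + \lambda + 1}{2\lambda}$.

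Second, I would apply \Cref{thm:clustering_meta} to $Y$ with $\rho=\rho_\lambda$, which requires checking two hypotheses.
\begin{itemize}
\item \emph{Feasibility of $Y$.} $\lambda$-Hybrid adds at most two stops per ECA round, each of which removes $\ceil{2n/k}$ agents, i.e.\ $2\ceil{2n/k}$ endpoints. It adds one stop per \GCT round, which removes at least $\ceil{2n/k}$ endpoints. So every selected stop accounts for at least $\ceil{2n/k}\ge 2n/k$ deactivated endpoints. Since there are only $2n$ endpoints, $|Y|\le k$. This is the same accounting that makes \GCT feasible, and it is implicitly assumed in \Cref{thm:hybrid_core_upper_bdd}.
\item \emph{$\rho_\lambda\ge 1$.} For $\lambda\in(0,1]$ we have $\sqrt{\lambda^2+6\lambda+1}\ge \lambda+1$, hence $\rho_\lambda\ge (\lambda+1)/\lambda\ge 1$.
\end{itemize}

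With both hypotheses verified, \Cref{thm:clustering_meta} yields that $Y$ is $(2,\rho_\lambda)$-core for $\mathcal{I}$. Since $\mathcal{I}$ was arbitrary, this proves the corollary.

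The only step needing any care is the feasibility check, because the PF-to-core reduction presupposes that $Y$ is a feasible solution. I expect it to be routine via the counting argument above.
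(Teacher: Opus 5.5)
Your proposal is correct and matches the paper, which obtains the corollary by combining \Cref{thm:hybrid_core_upper_bdd} with \Cref{thm:clustering_meta} and nothing more. Your explicit checks of feasibility and of $\rho_\lambda\ge 1$ supply hypotheses the paper leaves implicit, and both are correct. The feasibility count works because each ECA round adds at most two stops and removes \emph{at least} $\lceil 2n/k\rceil$ agents with both endpoints active.
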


We give an almost tight lower bound which, when taken together with \Cref{rmk:hybrid_core_tightness}, gives a bound between that of ECA and that of \GCT, as one would expect.
\begin{restatable}{proposition}{propHybridCore} \label{prop:hybrid_core_tightness}
	Given any $\lambda\in(0,1]$ and $\delta,\varepsilon>0$, there is an instance for which $\lambda$-hybrid does not satisfy $(2-\delta, 
	\frac{\sqrt{4 \lambda^2 + 12\lambda + 1} + 2\lambda + 1}{4\lambda}-\varepsilon)$-core.
\end{restatable}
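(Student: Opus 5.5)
The plan is to construct an explicit family of instances, parameterized by a free distance $x>0$ and by an integer $r$ that controls $\lceil 2n/k\rceil$. I would pick $x$ so that two competing constraints balance, which should produce the quadratic $2\lambda\beta^2-(2\lambda+1)\beta-1=0$. Its positive root is exactly $\beta^\ast=\frac{\sqrt{4\lambda^2+12\lambda+1}+2\lambda+1}{4\lambda}$, since $(2\lambda+1)^2+8\lambda=4\lambda^2+12\lambda+1$. Equivalently $\beta^\ast$ satisfies $2\lambda\beta^\ast=2\lambda+1+1/\beta^\ast$. I will design the instance so that this identity is the indifference condition between ``the deviating coalition is captured early'' and ``the coalition's improvement ratio''.

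\textbf{Construction.} I take the deviation to be a single stop, $T=\{c\}$. The coalition $S$ consists of roughly $(2-\delta)\frac{n}{k}$ agents whose two endpoints both lie at distance $1$ from $c$, so that $c_i(T)=2$ for every $i\in S$. The endpoints of $S$ should be spread in the metric so that $d(a_i,b_i)\ge 2$, which prevents walking from being cheaper. This clearly meets the size requirement $|S|\ge(2-\delta)|T|\frac nk$. The difficulty is that $S$ contributes about $4n/k$ endpoints near $c$, whereas the \GCT loop opens $c$ as soon as $\lceil 2n/k\rceil$ active endpoints fall within radius $\lambda r$.

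So I add a ``decoy'' gadget that deactivates endpoints of $S$ before $r$ reaches $1/\lambda$. Concretely, it consists of a few extra agents plus a stop pair $\{y_1,y_2\}$ (or a single stop $y$). These are placed so that the ECA loop, at some cost radius $r_0<1/\lambda$, opens the pair and captures the extra agents. The pair also lies within distance $\lambda r$ of most of the coalition's endpoints, so those endpoints become inactive. The placement must leave fewer than $\lceil 2n/k\rceil$ active endpoints in the ball around $c$, so $c$ is never opened. Finally, the remaining budget $k$ is exhausted by far-away clusters of agents with zero-cost stop pairs (or the algorithm terminates having deactivated everything), so nothing closer to $S$ is ever opened.

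\textbf{Computing the coalition's cost.} Each agent of $S$ then pays roughly the distance from its endpoints to $y_1,y_2$. I would tune that distance through $x$ so that, by the triangle inequality (passing through $c$ and an ECA-captured endpoint, as in the proof of \Cref{thm:hybrid_core_upper_bdd}), it equals about $1+1/\lambda+x$ per endpoint. The ECA-capture constraint is of the form $\text{cost}\le 1/(\lambda x)$ for the stealing agents. Equating the two constraints, namely ratio $\frac{2(1+1/\lambda)+\dots}{2}$ against $\frac{1}{\lambda\cdot(\cdot)}$, should yield $\beta^\ast$. The asymmetry with the upper bound, where $\lambda$ effectively becomes $2\lambda$, comes from the fact that a whole agent (two endpoints) must be captured in the ECA loop, whereas the \GCT loop reasons per endpoint.

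\textbf{Main obstacle.} I expect the hardest step to be making the gadget simultaneously (i) open in the ECA loop strictly before $c$ opens in the \GCT loop, (ii) deactivate enough of $S$'s endpoints, and (iii) not give $S$'s agents a cheaper route through $y_1,y_2$. Points (ii) and (iii) pull against each other. I would first try a symmetric design in which the coalition endpoints sit on two sides of $c$, each side matched to one decoy stop. I would then check all triangle inequalities and take $r\to\infty$ so that the rounding in $\lceil 2n/k\rceil$ costs only the $\delta$ slack. Sending the decoy agents' slack to zero costs only the $\varepsilon$ slack.
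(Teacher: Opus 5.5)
Your construction fails at its first step, because in TrSP a single-stop deviation can never block. For $T=\{c\}$ the only route through $T$ is $a_i\to c\to b_i$ with $d'(c,c)=0$. So $c_i(T)=\min\{d(a_i,b_i),\,d(a_i,c)+d(c,b_i)\}=d(a_i,b_i)$ by the triangle inequality, while $c_i(Y)\le d(a_i,b_i)$ for every $Y$; this is the paper's own remark that no agent benefits from one stop. In your instance, $d(a_i,c)=d(b_i,c)=1$ forces $d(a_i,b_i)\le 2$. Combined with your requirement $d(a_i,b_i)\ge 2$, this gives $d(a_i,b_i)=2$, hence $c_i(Y)\le 2=c_i(T)$ wherever the decoy sits. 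The cost of about $1+1/\lambda+x$ per endpoint that you plan to charge is never realized, because walking caps it. You have transplanted the single-center deviation from the PF analysis of the induced clustering instance in \Cref{thm:hybrid_core_upper_bdd}. That is legitimate for the upper bound, via \Cref{thm:clustering_meta}, but a lower bound must be a blocking coalition in the TrSP instance itself. It needs $|T|\ge 2$, with each member's two endpoints far apart and served by different stops of $T$.

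Moving to $|T|\ge 2$ is not cosmetic, because you must then also stop the ECA loop from opening stops of $T$ on the coalition's behalf. With $|T|=2$, $T$ opens as soon as $r$ reaches $\max_{i\in S}c_i(T)$ unless roughly half of $S$ is already deactivated, and that deadline does not grow with $1/\lambda$. The paper handles this with $|T|=4$, using $n=4H$, $k=8$, $H=\lceil 2n/k\rceil$. There are four far-apart zones, each containing a deviation stop $\tau_j$ and a decoy $c_j$. The coalition consists of ``far'' groups $N_1,N_2$ (distance $1$ from the $\tau_j$) and ``near'' groups $N_3,N_4$ (distance $q$), each of size $H-1$, with crossing zone pairs. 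No pair of stops then serves $H$ coalition agents, so the coalition alone can trigger a $\tau_j$ only through the \GCT loop, at $r=1/\lambda$. Two extra agents sharing the zone pair of $N_3$ (resp.\ $N_4$), with endpoints near the decoys, let $\{c_1,c_2\}$ and $\{c_3,c_4\}$ capture $H+1$ agents at cost radius $\frac{1}{\lambda}-2q\varepsilon$, just before that. This per-agent capture is where your ``$\lambda$ becomes $2\lambda$'' comes from: each near endpoint sits at distance about $\frac{1}{2\lambda}$ from its decoy.

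Your algebra for $\beta^\ast$ is correct and is exactly the balance $1+q+\frac{1}{2\lambda}=\frac{1}{2\lambda q}$ between far and near agents, with $|S|=4H-4\ge(2-\delta)\cdot 4\cdot\frac{n}{k}$ once $H\ge 2/\delta$.

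One more obstacle belongs on your list: the extra agents must be far from the deviation stops. With $d(z_j,\tau_j)=1+q$ as in the paper's table, $\{\tau_1,\tau_2\}$ serves $N_3\cup N_5$ at cost radius $2+2q$. Using $2\lambda(1+q)=\frac{1-q}{q}$, this is below $\frac{1}{\lambda}$ precisely when $q>\frac12$, i.e.\ $\lambda<\frac13$. For such $\lambda$ (and small $\varepsilon$) the ECA loop opens $\{\tau_1,\tau_2\}$ and $\{\tau_3,\tau_4\}$ before any decoy, giving $Y\supseteq T$. So $z_j$ should be moved away from $\tau_j$. For example, within each zone take the path metric $x_j$--$\tau_j$--$y_j$--$c_j$--$z_j$ with edge lengths $1,\,q,\,\frac{1}{2\lambda}-q\varepsilon,\,\frac{1}{2\lambda}-q\varepsilon$; then the decoys open first for every $\lambda\in(0,1]$.
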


The lower bound implied by \Cref{prop:hybrid_core_tightness} for $\lambda\geq \sqrt{2}/2$ is weaker than the lower bound proved for \GCT in \Cref{prop:gc_trsp_core_approximation}.
Given this, we complement  \Cref{prop:hybrid_core_tightness} by strengthening the core lower bound of $\lambda$-hybrid for the case of $\lambda\geq 1/2$ in the following remark.
\begin{remark} \label{rmk:hybrid_core_tightness}
	Given any $\lambda\geq 1/2$, there is an instance for which $\lambda$-hybrid does not satisfy $(2-\delta, 1+\sqrt{2}-\varepsilon)$-core.
	This follows from the exact same example and argument used to prove \Cref{prop:gc_trsp_core_approximation}.
	In particular, due to the symmetry of the instance given by \Cref{tab:distance_tightness_example_GC_core}, as long as $\lambda \geq 1/2$, the ECA loop will not select any candidates in the execution of $\lambda$-hybrid. This means the algorithm will execute identically to \GCT on this instance.
\end{remark}

Given \Cref{rmk:hybrid_core_tightness}, it is likely that the lower bound in \Cref{prop:hybrid_core_tightness} is not tight. 
If it is the case that our upper bound is indeed tight, this suggests that the core approximation of $\lambda$-Hybrid is thanks to the algorithm's approximation of PF (rather than any consideration involving cost). 
Again, we observe that while taking a clustering approach naively ignores agent-specific cost information, it serves as a very useful tool in the pursuit of core approximation.  

\section{Experiments}
We complement our theoretical contributions by evaluating the empirical performance of the \GCT algorithm, the Expanding Cost algorithm, and the $\lambda$-Hybrid algorithm on a real world dataset. 
All source code is available in the repository\footnote{\url{https://github.com/Yohager/Fair-Transit-Stop-Placement}}. 

\paragraph{Experimental Setup} We use resident travel route data from the City of Helena Capital Transit service, comprising 10,282 distinct travel routes between 3,075 unique spatial points. 
The initial dataset specifies only pick-up and drop-off locations. 
Using OpenStreetMap data~\citep{openstreetmap} and the open-source \textit{Valhalla} routing engine~\citep{valhalla}, we compute route-level travel times for both walking and public transit, which serve as the corresponding cost metrics. 
For the JR and core experiments, we randomly sample $400$ and $40$ agents\footnote{The core test uses a small sample size because verifying core membership is computationally intensive and requires solving large scale integer programs.}, respectively, together with their associated routes from the full dataset. 
We define the candidate stop set as the union of all observed locations among the sampled agents. 
For the target number of stops, we evaluate JR over $k$ in $[20,100]$ and core over $k$ in $[5,20]$. 
Finally, because the Helena dataset exhibits substantial disparities between walking and transit costs, we additionally rescale transit costs over a broad range to assess algorithmic performance under varying degrees of separation between transit and walking scales. We sample $50$ rounds for each parameter combination.

\paragraph{JR Evaluation.} We evaluate the approximation performance with respect to JR by simulating \GCT, ECA, and $\frac{1}{2}$-Hybrid across a range of stop selection sizes and transit cost scales.
\begin{figure}[!htbp]
	\centering
	\includegraphics[width=.7\linewidth]{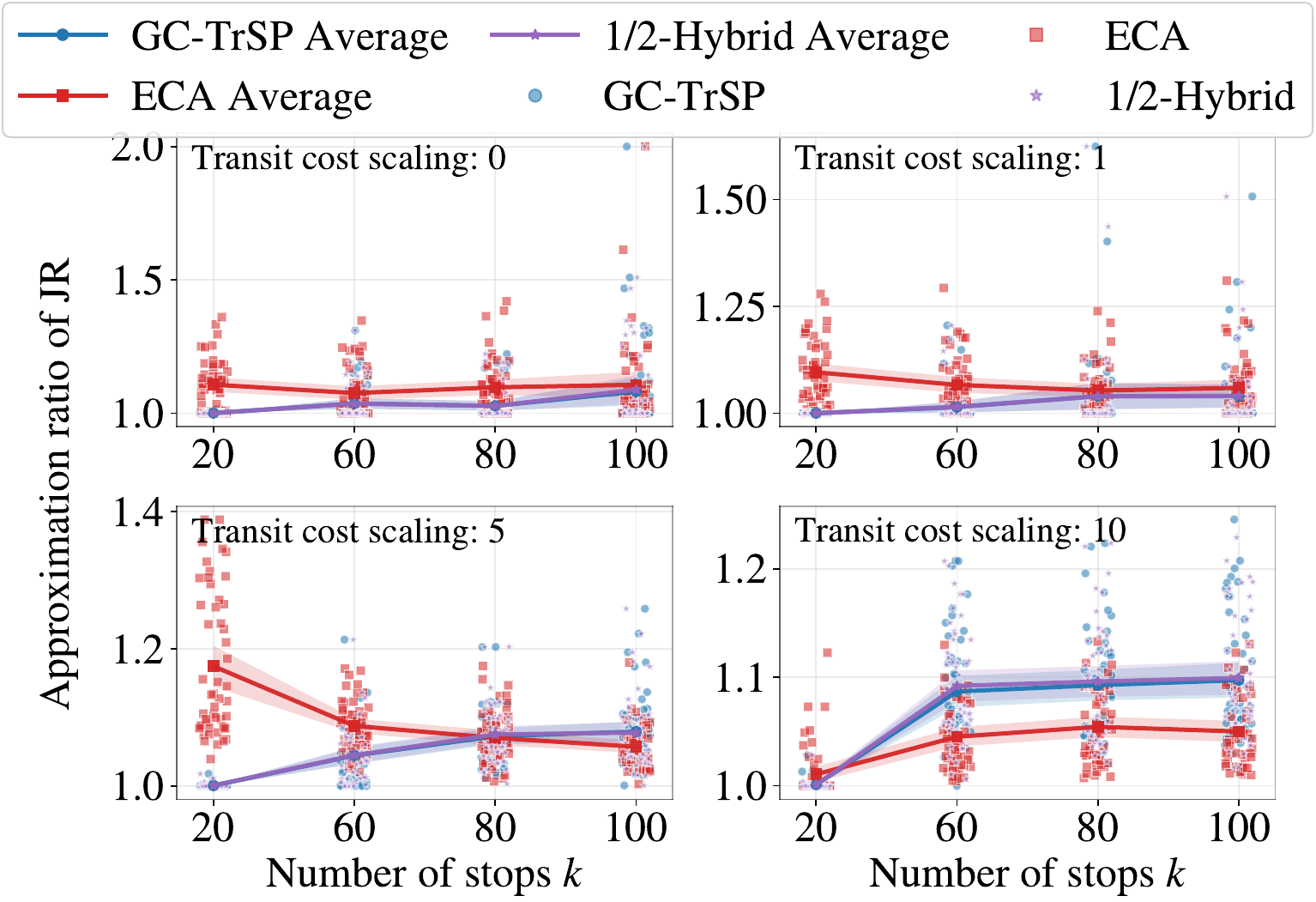}
	\caption{JR approximation evaluation with $400$ agents. Stop selection size ranges from $20$ to $100$, and the transit cost scale ranges from $0$ to $10$. Distribution of instance approximation ratios and the mean approximation ratio with $95\%$ confidence intervals}
	\label{fig:experiment_JR}
\end{figure}

In \Cref{fig:experiment_JR}, we first observe that for both \GCT and ECA, the approximation ratio of JR under random sampling is close to $1$, suggesting that our theoretical lower bounds may not be borne out in practice. 
Moreover, despite our result that ECA admits a stronger worst case bound for JR, \GCT delivers approximation ratios that remain close to one across all stop selection sizes. 
ECA exhibits slightly larger ratios and greater variability when the transit costs remain low, with the most pronounced separation around $k=20$, followed by improvement as $k$ increases. 
The trend shifts when the transit cost is scaled by a larger factor, representing scenarios where walking (or other modes) may be a feasible alternative for a significant number of routes. 
In this regime, the performance of \GCT deteriorates as $k$ grows, whereas ECA is stable and becomes competitive, achieving better approximation ratios for larger sizes.

\paragraph{Core Evaluation.}  We fix the size relaxation parameter ($\alpha$) to $2$ and evaluate the cost approximation ratios ($\beta$) of both algorithms and exhibit the performance of \GCT and ECA under the zero transit cost (transit cost scaling $=0$) and regular transit cost settings (transit cost scaling $=1$) in \Cref{tab:experimental_core_results_regular_cost}. More results under other transit cost scalings can be found in the Appendix~\ref{sec:experiment_appendix}. Across all stop selection sizes and transit cost scales, \GCT attains better core approximation ratios than ECA. 
While ECA displays higher means and greater variability in several regimes, its approximation ratios remain close to $1$ for most instances, which indicates that ECA is still practically effective in our experiments, despite admitting an arbitrarily poor worst-case guarantee.

\begin{table}[!htbp]
	\centering 
	
	\resizebox{1\linewidth}{!}{
		\begin{tabular}{lcccccccccccc}
			\toprule
			\multicolumn{13}{c}{\textbf{Zero Transit Cost (Scaling $=0$)}} \\
			\midrule
			\multirow{2}{*}{\textbf{Results}}
			& \multicolumn{3}{c}{\textbf{$k=5$}}
			& \multicolumn{3}{c}{\textbf{$k=10$}}
			& \multicolumn{3}{c}{\textbf{$k=15$}}
			& \multicolumn{3}{c}{\textbf{$k=20$}} \\
			\cmidrule(lr){2-4}
			\cmidrule(lr){5-7}
			\cmidrule(lr){8-10}
			\cmidrule(lr){11-13}
			& \textbf{GC} & \textbf{ECA} & \textbf{$\frac{1}{2}$-Hybrid}
			& \textbf{GC} & \textbf{ECA} & \textbf{$\frac{1}{2}$-Hybrid}
			& \textbf{GC} & \textbf{ECA} & \textbf{$\frac{1}{2}$-Hybrid}
			& \textbf{GC} & \textbf{ECA} & \textbf{$\frac{1}{2}$-Hybrid} \\
			\midrule
			\textbf{Average} & 1.016 & 1.114 & 1.105 & 1.095 & 1.430 & 1.148 & 1.155 & 1.365 & 1.217 & 1.074 & 1.269 & 1.175 \\
			\textbf{Min}     & 1.000 & 1.000 & 1.000 & 1.000 & 1.074 & 1.000 & 1.000 & 1.081 & 1.011 & 1.000 & 1.005 & 1.025 \\
			\textbf{Max}     & 1.263 & 1.301 & 1.266 & 1.556 & 2.095 & 1.575 & 1.533 & 1.743 & 1.433 & 1.241 & 1.675 & 1.337 \\
			\bottomrule
		\end{tabular}
	}
	
	\vspace{0.4em}
	
	\resizebox{1\linewidth}{!}{
		\begin{tabular}{lcccccccccccc}
			\toprule
			\multicolumn{13}{c}{\textbf{Zero Transit Cost (Scaling $=0$)}} \\
			\midrule
			\multirow{2}{*}{\textbf{Results}}
			& \multicolumn{3}{c}{\textbf{$k=5$}}
			& \multicolumn{3}{c}{\textbf{$k=10$}}
			& \multicolumn{3}{c}{\textbf{$k=15$}}
			& \multicolumn{3}{c}{\textbf{$k=20$}} \\
			\cmidrule(lr){2-4}
			\cmidrule(lr){5-7}
			\cmidrule(lr){8-10}
			\cmidrule(lr){11-13}
			& \textbf{GC} & \textbf{ECA} & \textbf{$\frac{1}{2}$-Hybrid}
			& \textbf{GC} & \textbf{ECA} & \textbf{$\frac{1}{2}$-Hybrid}
			& \textbf{GC} & \textbf{ECA} & \textbf{$\frac{1}{2}$-Hybrid}
			& \textbf{GC} & \textbf{ECA} & \textbf{$\frac{1}{2}$-Hybrid} \\
			\midrule
			\textbf{Average} & 1.007 & 1.072 & 1.063 & 1.053 & 1.454 & 1.120 & 1.067 & 1.386 & 1.121 & 1.058 & 1.190 & 1.117 \\
			\textbf{Min}     & 1.000 & 1.000 & 1.000 & 1.000 & 1.138 & 1.000 & 1.000 & 1.041 & 1.000 & 1.000 & 1.000 & 1.000 \\
			\textbf{Max}     & 1.082 & 1.256 & 1.256 & 1.240 & 1.889 & 1.369 & 1.241 & 1.787 & 1.253 & 1.257 & 1.499 & 1.338 \\
			\bottomrule
		\end{tabular}
	}
	\caption{Core approximation evaluation with $40$ agents under zero and regular transit costs. We consider both zero and regular transit costs. GC denotes \GCT. For each setting, we report the average, minimum, and maximum values over $50$ sampled rounds.}
	\label{tab:experimental_core_results_regular_cost}
\end{table}

\section{Conclusion}
In this paper, we introduce a model of transit stop placement in general metric spaces.
Pursuing fair stop placements, we first explored the extent to which fair clustering algorithms guarantee fair solutions for instances of our problem. 
We then introduce the Expanding Cost Algorithm (ECA), which performs better than all clustering-based algorithms with respect to JR, and is robust to the incorporation of transit times between stops, but provides no guarantee with respect to core.
We then combined our algorithm with the Greedy Capture algorithm from clustering to introduce a hybrid class of algorithms which navigate a tradeoff between JR and core.

Our central algorithmic result, showing that ECA satisfies $(1+\sqrt{2})$-JR, extends to arbitrary transit cost metrics. 
In contrast, our core approximation upper bounds relied on a connection to proportional fairness in clustering, and thus necessarily hold only under null transit costs. 
It remains to be seen whether it is possible to give constant-factor approximations to core that also apply under broader classes of transit cost metrics. 

In one sense, the transit stop problem can be viewed as a generalization of the centroid selection in which each agent has two points rather than one. 
One possible extension of the transit stop problem is to allow each agent to have more than two location points or multiple pairs of points that need to be traversed.
The first challenge of such an extension would be defining a sensible cost function for agents.
It would then be interesting to see how well the results of this paper extend to such a setting.

\section*{Acknowledgements}
The authors thank Vinayak Dixit and Taha Hossein Rashidi for valuable comments, and the City of Helena for facilitating access to the records used in our experimental evaluation. The UNSW authors are supported by the NSF-CSIRO grant on “Fair Sequential Collective Decision-Making” (RG230833). A portion of this work was completed while Jeremy Vollen was a student at UNSW.

\section*{Impact Statement}
This paper studies algorithmic approaches to fair transit stop placement, with the goal of helping planners make more transparent and equitable decisions when allocating limited mobility infrastructure. Such methods could better represent groups of riders with similar travel needs, especially in shuttle, community transit, or fixed-route settings.

The model focuses on stop placement and treats transit times between stops as exogenous, so it is not intended for full-scale public transportation network design where route selection substantially affects riders’ costs. It is better suited to settings where routes are fixed in advance or where walking/access time dominates total travel time. In deployment, fairness guarantees also depend on representative travel-demand data and do not capture all relevant concerns, such as accessibility, privacy, safety, environmental justice, or the needs of small vulnerable groups. Thus, these algorithms should be used as supporting tools alongside privacy-preserving data practices, robustness checks, and community input for decisions.

\bibliographystyle{abbrvnat}
\bibliography{fair-transit}

\begin{thebibliography}{39}
\providecommand{\natexlab}[1]{#1}
\providecommand{\url}[1]{\texttt{#1}}
\expandafter\ifx\csname urlstyle\endcsname\relax
  \providecommand{\doi}[1]{doi: #1}\else
  \providecommand{\doi}{doi: \begingroup \urlstyle{rm}\Url}\fi

\bibitem[Alkan and Gale(1990)]{AlGa90a}
A.~Alkan and D.~Gale.
\newblock The core of the matching game.
\newblock \emph{Games and Economic Behavior}, 2\penalty0 (3):\penalty0
  203--212, 1990.

\bibitem[Aziz and Savani(2016)]{AzSa15a}
H.~Aziz and R.~Savani.
\newblock Hedonic games.
\newblock In F.~Brandt, V.~Conitzer, U.~Endriss, J.~Lang, and A.~D. Procaccia,
  editors, \emph{Handbook of Computational Social Choice}, chapter~15, pages
  356--376. Cambridge University Press, 2016.

\bibitem[Aziz et~al.(2017)Aziz, Brill, Conitzer, Elkind, Freeman, and
  Walsh]{ABC+17a}
H.~Aziz, M.~Brill, V.~Conitzer, E.~Elkind, R.~Freeman, and T.~Walsh.
\newblock Justified representation in approval-based committee voting.
\newblock \emph{Social Choice and Welfare}, 48\penalty0 (2):\penalty0 461--485,
  2017.

\bibitem[Aziz et~al.(2020)Aziz, Bogomolnaia, and Moulin]{ABM20a}
H.~Aziz, A.~Bogomolnaia, and H.~Moulin.
\newblock Fair mixing: the case of dichotomous preferences.
\newblock \emph{ACM Transactions on Economics and Computation (TEAC)},
  8\penalty0 (4):\penalty0 18:1--18:27, 2020.

\bibitem[Aziz et~al.(2023)Aziz, Micha, and Shah]{AMS23a}
H.~Aziz, E.~Micha, and N.~Shah.
\newblock Group fairness in peer review.
\newblock In A.~Oh, T.~Naumann, A.~Globerson, K.~Saenko, M.~Hardt, and
  S.~Levine, editors, \emph{Proceedings of the 36th Annual Conference on Neural
  Information Processing (NeurIPS)}, pages 64885--64895, New Orleans, LA, USA,
  2023. Curran Associates, Inc.

\bibitem[Aziz et~al.(2024)Aziz, Lee, Chu, and Vollen]{ALMV24}
H.~Aziz, B.~E. Lee, S.~M. Chu, and J.~Vollen.
\newblock Proportionally representative clustering.
\newblock In M.~Mavronicolas, Q.~Qi, and G.~Schoenebeck, editors,
  \emph{Proceedings of the 20th International Conference on Web and Internet
  Economics (WINE)}, pages 155--171, Edinburgh, UK, 2024. Springer.

\bibitem[Bullinger et~al.(2025)Bullinger, Elkind, and Latifian]{BEL25a}
M.~Bullinger, E.~Elkind, and M.~Latifian.
\newblock Towards fair and efficient public transportation: {A} bus stop model.
\newblock In S.~Das, A.~Now{\'{e}}, and Y.~Vorobeychik, editors,
  \emph{Proceedings of the 24th International Conference on Autonomous Agents
  and Multiagent Systems (AAMAS)}, pages 427--435, Detroit, MI, USA, 2025.
  International Foundation for Autonomous Agents and Multiagent Systems /
  {ACM}.

\bibitem[Caragiannis et~al.(2024)Caragiannis, Micha, and Shah]{CMS24a}
I.~Caragiannis, E.~Micha, and N.~Shah.
\newblock Proportional fairness in non-centroid clustering.
\newblock In A.~Globersons, L.~Mackey, D.~Belgrave, A.~Fan, U.~Paquet, J.~M.
  Tomczak, and C.~Zhang, editors, \emph{Proceedings of the 37th Annual
  Conference on Neural Information Processing (NeurIPS)}, pages 19139--19166,
  Vancouver, BC, Canada, 2024. Curran Associates, Inc.

\bibitem[Ceder(2016)]{Cede16a}
A.~Ceder.
\newblock \emph{Public transit planning and operation: Modeling, practice and
  behavior}.
\newblock CRC press, 2016.

\bibitem[Ceder and Wilson(1986)]{CeWi86a}
A.~Ceder and N.~H. Wilson.
\newblock Bus network design.
\newblock \emph{Transportation Research Part B: Methodological}, 20\penalty0
  (4):\penalty0 331--344, 1986.

\bibitem[Ceder et~al.(2015)Ceder, Butcher, and Wang]{CBW15a}
A.~A. Ceder, M.~Butcher, and L.~Wang.
\newblock Optimization of bus stop placement for routes on uneven topography.
\newblock \emph{Transportation Research Part B: Methodological}, 74\penalty0
  (C):\penalty0 40--61, 2015.

\bibitem[Chaudhury et~al.(2022)Chaudhury, Li, Kang, Li, and Mehta]{CLK+22a}
B.~R. Chaudhury, L.~Li, M.~Kang, B.~Li, and R.~Mehta.
\newblock Fairness in federated learning via core-stability.
\newblock In S.~Koyejo, S.~Mohamed, A.~Agarwal, D.~Belgrave, K.~Cho, and A.~Oh,
  editors, \emph{Proceedings of the 35th Annual Conference on Neural
  Information Processing (NeurIPS)}, pages 5738--5750, New Orleans, LA, USA,
  2022. Curran Associates, Inc.

\bibitem[Chaudhury et~al.(2024)Chaudhury, Murhekar, Yuan, Li, Mehta, and
  Procaccia]{CMY24a}
B.~R. Chaudhury, A.~Murhekar, Z.~Yuan, B.~Li, R.~Mehta, and A.~D. Procaccia.
\newblock Fair federated learning via the proportional veto core.
\newblock In R.~Salakhutdinov, Z.~Kolter, K.~A. Heller, A.~Weller, N.~Oliver,
  J.~Scarlett, and F.~Berkenkamp, editors, \emph{Proceedings of the 41st
  International Conference on Machine Learning (ICML)}, pages 42245--42257,
  Vienna, Austria, 2024. Curran Associates, Inc.

\bibitem[Chen et~al.(2019)Chen, Fain, Lyu, and Munagala]{CFLM19a}
X.~Chen, B.~Fain, L.~Lyu, and K.~Munagala.
\newblock Proportionally fair clustering.
\newblock In K.~Chaudhuri and R.~Salakhutdinov, editors, \emph{Proceedings of
  the 36th International Conference on Machine Learning (ICML)}, pages
  1032--1041, Long Beach, California, USA, 2019. PMLR.

\bibitem[Chhabra et~al.(2021)Chhabra, Masalkovait{\.e}, and Mohapatra]{CMM21a}
A.~Chhabra, K.~Masalkovait{\.e}, and P.~Mohapatra.
\newblock An overview of fairness in clustering.
\newblock \emph{IEEE Access}, 9:\penalty0 130698--130720, 2021.

\bibitem[Chien* and Qin(2004)]{ChQi04a}
S.~I. Chien* and Z.~Qin.
\newblock Optimization of bus stop locations for improving transit
  accessibility.
\newblock \emph{Transportation planning and Technology}, 27\penalty0
  (3):\penalty0 211--227, 2004.

\bibitem[Cookson et~al.(2025)Cookson, Shah, and Yu]{CSY25a}
B.~Cookson, N.~Shah, and Z.~Yu.
\newblock Unifying proportional fairness in centroid and non-centroid
  clustering.
\newblock In D.~Belgrave, C.~Zhang, H.~Lin, R.~Pascanu, P.~Koniusz,
  M.~Ghassemi, and N.~Chen, editors, \emph{Proceedings of the 38th Annual
  Conference on Neural Information Processing (NeurIPS)}, pages 35349--35386,
  San Diego, California, USA, 2025. Curran Associates, Inc.

\bibitem[Desaulniers and Hickman(2007)]{DeHi07a}
G.~Desaulniers and M.~D. Hickman.
\newblock Public transit.
\newblock In C.~Barnhart and G.~Laporte, editors, \emph{Transportation},
  Handbooks in Operations Research and Management Science, chapter~2, pages
  69--127. Elsevier, 2007.

\bibitem[Fain et~al.(2016)Fain, Goel, and Munagala]{FGM16a}
B.~Fain, A.~Goel, and K.~Munagala.
\newblock The core of the participatory budgeting problem.
\newblock In Y.~Cai and A.~Vetta, editors, \emph{Proceedings of the 12th
  International Conference on Web and Internet Economics (WINE)}, pages
  384--399, Montreal, Canada, 2016. Springer-Verlag.

\bibitem[Fain et~al.(2018)Fain, Munagala, and Shah]{FMS18a}
B.~Fain, K.~Munagala, and N.~Shah.
\newblock Fair allocation of indivisible public goods.
\newblock In {\'{E}}.~Tardos, E.~Elkind, and R.~Vohra, editors,
  \emph{Proceedings of 19th ACM Conference on Economics and Computation (EC)},
  pages 575--592, Ithaca, NY, USA, 2018. {ACM}.

\bibitem[He et~al.(2024)He, Botan, Lang, Saffidine, Sikora, and
  Workman]{HBL+24}
Z.~He, S.~Botan, J.~Lang, A.~Saffidine, F.~Sikora, and S.~Workman.
\newblock Fair railway network design.
\newblock Technical report, arXiv abs/2409.02152, 2024.

\bibitem[Hossein~Rashidi et~al.(2016)Hossein~Rashidi, Rey, Jian, and
  Waller]{HRJ+16a}
T.~Hossein~Rashidi, D.~Rey, S.~Jian, and T.~Waller.
\newblock A clustering algorithm for bi-criteria stop location design with
  elastic demand.
\newblock \emph{Computer-Aided Civil and Infrastructure Engineering},
  31\penalty0 (2):\penalty0 117--131, 2016.

\bibitem[Kellerhals and Peters(2024)]{KePe24a}
L.~Kellerhals and J.~Peters.
\newblock Proportional fairness in clustering: A social choice perspective.
\newblock In A.~Globersons, L.~Mackey, D.~Belgrave, A.~Fan, U.~Paquet, J.~M.
  Tomczak, and C.~Zhang, editors, \emph{Proceedings of the 37th Annual
  Conference on Neural Information Processing (NeurIPS)}, pages 111299--111317,
  Vancouver, BC, Canada, 2024. Curran Associates, Inc.

\bibitem[Lackner and Skowron(2023)]{LaSk23a}
M.~Lackner and P.~Skowron.
\newblock \emph{Multi-Winner Voting with Approval Preferences}.
\newblock Springer Briefs in Intelligent Systems. Springer, 2023.

\bibitem[Martens(2016)]{Mart16a}
K.~Martens.
\newblock \emph{Transport Justice: Designing fair transportation systems}.
\newblock Routledge, 2016.

\bibitem[Martens and Lucas(2018)]{MaLu18a}
K.~Martens and K.~Lucas.
\newblock Perspectives on transport and social justice.
\newblock In \emph{Handbook on Global Social Justice}, chapter~26, pages
  351--370. Edward Elgar Publishing, 2018.

\bibitem[Matl et~al.(2018)Matl, Hartl, and Vidal]{MHV18a}
P.~Matl, R.~F. Hartl, and T.~Vidal.
\newblock Workload equity in vehicle routing problems: A survey and analysis.
\newblock \emph{Transportation Science}, 52\penalty0 (2):\penalty0 239--260,
  2018.

\bibitem[Megiddo and Supowit(1984)]{MeSu84a}
N.~Megiddo and K.~J. Supowit.
\newblock On the complexity of some common geometric location problems.
\newblock \emph{SIAM Journal on Computing}, 13\penalty0 (1):\penalty0 182--196,
  1984.

\bibitem[Micha and Shah(2020)]{MiSh20a}
E.~Micha and N.~Shah.
\newblock Proportionally fair clustering revisited.
\newblock In A.~Czumaj, A.~Dawar, and E.~Merelli, editors, \emph{Proceedings of
  the 47th International Colloquium on Automata, Languages, and Programming
  (ICALP)}, pages 85:1--85:16, Saarbr{\"{u}}cken, Germany (Virtual Conference),
  2020. Schloss Dagstuhl - Leibniz-Zentrum f{\"{u}}r Informatik.

\bibitem[Miller et~al.(2016)Miller, de~Barros, Kattan, and Wirasinghe]{MeA+16a}
P.~Miller, A.~G. de~Barros, L.~Kattan, and S.~Wirasinghe.
\newblock Public transportation and sustainability: A review.
\newblock \emph{KSCE Journal of Civil Engineering}, 20\penalty0 (3):\penalty0
  1076--1083, 2016.

\bibitem[Najmi et~al.(2023)Najmi, Waller, Memarpour, Nair, and
  Rashidi]{ATM+23a}
A.~Najmi, T.~Waller, M.~Memarpour, D.~Nair, and T.~H. Rashidi.
\newblock A human behaviour model and its implications in the transport
  context.
\newblock \emph{Transportation Research Interdisciplinary Perspectives},
  18:\penalty0 100800, 2023.

\bibitem[{OpenStreetMap contributors}(2026)]{openstreetmap}
{OpenStreetMap contributors}.
\newblock Openstreetmap.
\newblock [Data set], 2026.
\newblock Available under ODbL. Retrieved January 24, 2026, from
  https://www.openstreetmap.org.

\bibitem[Peleg and Sudh{\"o}lter(2007)]{PeSu07a}
B.~Peleg and P.~Sudh{\"o}lter.
\newblock \emph{Introduction to the Theory of Cooperative Games}.
\newblock springer, 2nd edition, 2007.

\bibitem[Peters(2025)]{Pete25a}
D.~Peters.
\newblock The core of approval-based committee elections with few seats.
\newblock In J.~Kwok, editor, \emph{Proceedings of the 34th International Joint
  Conference on Artificial Intelligence (IJCAI)}, pages 4014--4022, Montreal,
  Canada, 2025. ijcai.org.

\bibitem[Pierczy{\'n}ski and Skowron(2022)]{PiSk22a}
G.~Pierczy{\'n}ski and P.~Skowron.
\newblock Core-stable committees under restricted domains.
\newblock In K.~A. Hansen, T.~X. Liu, and A.~Malekian, editors,
  \emph{Proceedings of the 18th International Conference on Web and Internet
  Economics (WINE)}, pages 311--329, Troy, NY, USA, 2022. Springer-Verlag.

\bibitem[Shapley and Scarf(1974)]{ShSc74a}
L.~S. Shapley and H.~Scarf.
\newblock On cores and indivisibility.
\newblock \emph{Journal of Mathematical Economics}, 1\penalty0 (1):\penalty0
  23--37, 1974.

\bibitem[Tedjopurnomo et~al.(2022)Tedjopurnomo, Bao, Choudhury, Luo, and
  Qin]{TBC+22a}
D.~Tedjopurnomo, Z.~Bao, F.~Choudhury, H.~Luo, and A.~K. Qin.
\newblock Equitable public bus network optimization for social good: A case
  study of singapore.
\newblock In C.~Isbell, editor, \emph{Proceedings of the 5th ACM Conference on
  Fairness, Accountability, and Transparency (FAccT)}, pages 278--288, Seoul,
  Korea, 2022. Association for Computing Machinery.

\bibitem[Teodorovic and Jani{\'c}(2016)]{TeJa16a}
D.~Teodorovic and M.~Jani{\'c}.
\newblock \emph{Transportation engineering: Theory, practice and modeling}.
\newblock Butterworth-Heinemann, 2016.

\bibitem[{Valhalla contributors}(2026)]{valhalla}
{Valhalla contributors}.
\newblock Valhalla: Open source routing engine for openstreetmap.
\newblock GitHub repository, 2026.
\newblock URL \url{https://github.com/valhalla/valhalla}.
\newblock Version 3.6.2, released January 15, 2026. Accessed January 24, 2026.

\end{thebibliography}

\clearpage
\appendix

\section{Minimizing Total Travel Cost}
\label{app:minimizing_total_travel_cost}

\propCostMinimization*

\begin{proof}
	We prove the statement by a reduction from the canonical $k$-median problem in general metric space, which is known to be NP-hard~\citep{MeSu84a}. 
	
	\medskip
	\noindent\textbf{$k$-median decision problem.}  
	Given a metric space $(\calX, d)$, a set $N$ of datapoints $\{x_i\}_{i\in N}$, a set $\mathcal{C'}$ of candidate centers, an integer $k'$, and a bound $B$, the decision problem asks whether there exists a set $Y \subseteq \mathcal{C'}$ with $|Y|\leq k'$ such that 
	\[
	\sum_{i\in N} d(x_i, Y) \leq B.
	\]
	
	\medskip
	\noindent\textbf{TrSP decision problem (with null transit times).}  
	Let $\mathcal{I}=\langle N, \mathcal{C}, k, {\theta_i}{i\in N} \rangle$ be a TrSP instance with null transit times in metric space $(\calX,d)$. 
	The $\tau$-TrSP decision problem asks whether there exists a set $Y \subseteq \mathcal{C}$ with $|Y| \leq k$ such that 
	\[
	\sum_{i\in N} c_i(Y) \leq \tau.
	\]
	
	Given an arbitrary $k$-median instance in metric space $(\calX, d)$, we construct a corresponding TrSP instance as follows.
	We augment the metric space by adding additional points, $x'$ and $c'$, which are both located at a distance $B$ from every other point in the metric space and distance $0$ from each other, i.e., $\forall x\in\calX\setminus\{x',c'\}, d(x',x)=d(c',x)=B$ and $d(x',c')=0$. 
	We then let $\theta_i = (x_i,x')$ be the endpoints of each agent $i\in N$, and let $\mathcal{C}=\mathcal{C}'\cup \{c'\}$.
	Lastly, we set the number of desired transit stops to $k=k'+1$ and let $\tau=B$. 
	
	Suppose the $k$-median instance is a YES instance, i.e., there exists a solution $Y \subseteq \mathcal{C}'$  with $|Y|\leq k'=k-1$ such that $\sum_{i\in N}d(x_i,Y)\leq B$. 
	We show that $Y^*=Y\cup\{c'\}$ is a YES solution for the constructed TrSP instance:
	\begin{align*}
		\sum_{i\in N}c_i(Y^*)&=\sum_{i\in N}\min\{d(x_i,x'), \min_{y_1,y_2\in Y^*} d(x_i,y_1)+d(x',y_2)\} \\
		&= \sum_{i\in N} \min\{B, \min_{y_1\in Y}d(x_i,y)\} \tag{$c'\in Y^*, d(x',c')=0$}\\
		&\leq \sum_{i\in N} d(x_i, Y)\\
		&\leq B.
	\end{align*}
	
	Conversely, suppose the TrSP instance is a YES instance, i.e., there exists $Y$ with $|Y|\leq k$ such that $\sum_{i\in N} c_i(Y) \leq B$. 
	First observe that $c'\in Y$, since otherwise every agent's cost will be at least $B$.
	Also note that $\min_{y\in Y} d(x_i, y)\leq B$ for every $i\in N$.
	Now consider the set $Y^*=Y\setminus\{c'\}$. 
	Since $|Y^*|=|Y|-1 \leq k-1 = k'$, we know that $Y^*$ is a feasible solution to the $k$-median instance.
	We see that $Y^*$ is a certificate that the $k$-median instance is a YES instance by noting that
	\begin{align*}
		\sum_{i\in N} d(x_i,Y^*) &= \sum_{i\in N} \min_{y\in Y^*} d(x_i,y) \\
		&= \sum_{i\in N} \min \{d(x_i,x'), \min_{y\in Y^*} d(x_i, y)\} \tag{$\min_{y\in Y} d(x_i, y)\leq B$} \\
		&= \sum_{i\in N} \min \{d(x_i,x'), \min_{y_1,y_2\in Y} d(x_i, y_1) + d(x',y_2) \} \\
		&= \sum_{i\in N} c_i(Y) \\
		&\leq B.
	\end{align*}
	The third transition follows because $c'\in Y$ and thus choosing $y_2$ to be $c'$ incurs no additional cost.
	Since the above reduction is polynomial time, the TrSP decision problem is NP-hard.
\end{proof}

\section{Omitted Proofs}
\label{app:omitted_proofs}
Some of the proofs contained herein correspond only to the lower bound portion of the statements. In these cases, the proof of the upper bound portion can be found in the main body. 

\subsection{Omitted Proof of \Cref{prop:gc_trsp_core_approximation}}\label{app:gc_trsp_core_approx}
\propGcCoreApprox*
\begin{proof}
	Since \citet{CFLM19a} has shown that the Greedy Capture algorithm achieves a $(1+\sqrt{2})$-PF guarantee in general fair clustering, it follows that for any TrSP instance $\mathcal{I}$, the \GCT algorithm satisfies $(1+\sqrt{2})$-PF in the induced clustering instance $\mathcal{I}^C$. By \Cref{thm:clustering_meta}, this implies that \GCT satisfies the $(2, 1+\sqrt{2})$-core.
	
	To prove the bound is tight, let $H=\ceil{1/(3\cdot \delta)}$. We construct an instance with $n = 15\cdot H$ agents $N$, $m=7$ candidate transit stops $\mathcal{C}=\{c_1,c_2,\dots, c_7\}$, and $k=5$. 
	The agent set $N$ is partitioned into three subsets, each consisting of agents located at distinct but internally identical locations. Specifically, we define $N = N_1\cup N_2\cup N_3$, where $|N_1|=|N_2|=6\cdot H - 1$, and $|N_3|=3\cdot H + 2$. For each group $N_i$, all agents share the same travel locations, denoted $(a_i, b_i)$. The distance from transit stops $\{\tau_1,\dots, \tau_4\}$ to locations $\{a_1,b_1,a_2,b_2,a_3,b_3\}$ are specified in \Cref{tab:distance_tightness_example_GC_core}. For the remaining stops $\{\tau_5, \tau_6, \tau_7\}$, we assign a large constant distance to each endpoint in $\{a_1,b_1,a_2,b_2,a_3,b_3\}$.
	\begin{table}[!ht]
	\centering
	\begin{tabular}{ccccccc} 
		\toprule 
		$d(\cdot)$& $a_1$   & $a_2$  & $a_3$   & $b_1$ & $b_2$ & $b_3$   \\ 
		\midrule
		$\tau_1 $ &  $1$           & $\sqrt{2}-1$    & $1$    & $\infty$ & $\infty$ & $\infty$  \\ \hline
		$\tau_2 $ & $1+\sqrt{2} - \varepsilon$   & $1-(\sqrt{2}-1)\varepsilon$  & $1-(\sqrt{2}-1)\varepsilon$     & $\infty$ & $\infty$ & $\infty$  \\ \hline  
		$\tau_3 $ & $\infty$  &  $\infty$  &   $\infty$     & $1$           & $\sqrt{2}-1$    & $1$\\ \hline
		$\tau_4 $ & $\infty$  &  $\infty$  &   $\infty$     & $1+\sqrt{2} - \varepsilon$  & $1-(\sqrt{2}-1)\varepsilon$   & $1-(\sqrt{2}-1)\varepsilon$ \\ \bottomrule
	\end{tabular}
	\caption{An instance in which \GCT fails $(2-\delta, 1+\sqrt{2}-\varepsilon)$-core.}
	\label{tab:distance_tightness_example_GC_core}
\end{table}

Keeping in mind that $\ceil{\frac{2n}{k}}= 6\cdot H$, we describe the execution of \GCT on this instance. 
The minimum radius ball that captures $6\cdot H$ endpoints is centered at $\tau_2$ (and $\tau_4$) with radius $1-(\sqrt{2}-1)\varepsilon$. Thus, \GCT selects $\{\tau_2, \tau_4\}$, and deactivates all of the endpoints of the agents in $N_2\cup N_3$. After that, there is no candidate which can capture $6\cdot H$ endpoints with a distance radius less than $\infty$, so endpoints located at $a_1$ ($b_1$) are deactivated by $\tau_2$ ($\tau_4$). As a result, \GCT returns solution $Y=\{\tau_2,\tau_4\}$.

Now consider the set of agents $S = N_1 \cup N_2$ and candidate stop set $T=\{\tau_1,\tau_3\}$. 
For each agent $i \in N_1$, it holds that $\frac{c_i(Y)}{c_i(T)}=\frac{2(1+\sqrt{2}-\varepsilon)}{2}=1+\sqrt{2} - \varepsilon$. 
For each agent $i \in N_2$, it holds that $\frac{c_i(Y)}{c_i(T)}= \frac{2(1-(\sqrt{2}-1) \varepsilon)}{2(\sqrt{2} - 1)} = 1+\sqrt{2} -\varepsilon$. 
Furthermore, we have 
\begin{align*}
	|S| = 12\cdot H - 2 \geq (2 - \frac{1}{3\cdot H}) \cdot \frac{2n}{k} \geq (2-\delta)\cdot \frac{|T|\cdot n}{k}.
\end{align*}
It follows that there exists such a blocking coalition $S$ and a candidate stop subset $T$ with size $|S| \geq (2-\delta)\cdot \frac{|T|\cdot n}{k}$ such that for every agent $i\in S$, we have $(1+\sqrt{2}-\varepsilon)\cdot c_i(T) < c_i(Y)$. This implies that the \GCT algorithm violates $(2-\delta, 1+\sqrt{2}-\varepsilon)$-core.
\end{proof}

\thmGcJrApprox*
\begin{proof}
	To show the tightness of the analysis, we provide the following instance in which the Greedy Capture for TrSP (\Cref{alg::greedy_capture_for_bus_stop_selection}) fails to achieve $(2+\sqrt{5}-\varepsilon)$-JR. Consider the TrSP instance pictured in \Cref{fig:example:tightness_for_GC_JR_approximation}.
	\begin{figure}[!htbp]
		\centering
		\begin{tikzpicture}[
			scale=2.5,
			every node/.style={font=\large},
			line/.style={
				line width=1.1pt,
				draw=black!65,
				line cap=round
			},
			redpt/.style={
				draw=red!70!black,
				fill=red!55,
				rectangle,
				rounded corners=0.7pt,
				minimum size=4.8pt,
				inner sep=0pt
			},
			bluept/.style={
				draw=blue!70!black,
				fill=blue!55,
				circle,
				minimum size=3.2pt,
				inner sep=0pt
			},
			redlabel/.style={
				text=red!75!black,
				font=\large
			},
			bluelabel/.style={
				text=blue!75!black,
				font=\large
			},
			dist/.style={
				font=\normalsize,
				fill=white,
				inner sep=1.5pt
			},
			infarrow/.style={
				<->,
				>=Stealth,
				line width=0.9pt,
				draw=black!60
			},
			bgbox/.style={
				fill=gray!8,
				draw=gray!35,
				line width=0.6pt,
				rounded corners=6pt
			}
			]
			
			\coordinate (T0) at (-.2,1.5);
			\coordinate (T1) at (1,1.5);
			\coordinate (T2) at (1.5,1.5);
			\coordinate (T3) at (2.5,1.5);
			
			\coordinate (B0) at (-.2,0);
			\coordinate (B1) at (1,0);
			\coordinate (B2) at (1.5,0);
			\coordinate (B3) at (2.5,0);
			
			\filldraw[bgbox] (-.6,1.12) rectangle (3,1.90);
			\filldraw[bgbox] (-.6,-0.38) rectangle (3,0.40);
			
			\draw[line] (T0) -- (T3);
			\draw[line] (B0) -- (B3);
			
			\draw[infarrow] (1.25,0.35) -- (1.25,1.15)
			node[midway, right=3pt, font=\large] {$\infty$};
			
			\node[redpt, label={[redlabel]above left:$y_3$}] at (T0) {};
			\node[redpt, label={[redlabel]above:$\tau_1$}] at (T1) {};
			\node[redpt, label={[redlabel]above right:$y_1$}] at (T3) {};
			
			\node[redpt, label={[redlabel]above left:$y_4$}] at (B0) {};
			\node[redpt, label={[redlabel]above:$\tau_2$}] at (B1) {};
			\node[redpt, label={[redlabel]above right:$y_2$}] at (B3) {};
			
			\node[bluept, label={[bluelabel]below:$a_1$}] at (-0.15,1.5) {};
			\node[bluelabel, below=5pt] at (T1) {$a_2,a_3$};
			\node[bluept, label={[bluelabel]below:$a_4$}] at (T2) {};
			\node[bluelabel, below=5pt] at (T3) {$a_5,a_6,a_7$};
			
			\node[bluept, label={[bluelabel]below:$b_3$}] at (-0.15,0) {};
			\node[bluelabel, below=5pt] at (B1) {$b_1,b_4$};
			\node[bluept, label={[bluelabel]below:$b_2$}] at (B2) {};
			\node[bluelabel, below=5pt] at (B3) {$b_5,b_6,b_7$};
			
			\node[dist, above=4pt] at (0.4,1.5) {$1$};
			\node[dist, above=4pt] at (1.25,1.5) {$\frac{\sqrt{5}-1}{2}$};
			\node[dist, above=4pt] at (2.0,1.5) {$1-\frac{\varepsilon}{8}$};
			
			\node[dist, above=4pt] at (0.4,0) {$1$};
			\node[dist, above=4pt] at (1.25,0) {$\frac{\sqrt{5}-1}{2}$};
			\node[dist, above=4pt] at (2.0,0) {$1-\frac{\varepsilon}{8}$};
			
		\end{tikzpicture}
		\caption{An instance where \GCT algorithm fails $(2+\sqrt{5}-\varepsilon)$-JR for any $\varepsilon >0$.}
		\label{fig:example:tightness_for_GC_JR_approximation}
	\end{figure}
	
	We consider $n=7$ agents, whose endpoints are represented by blue circles. There are $6$ candidate transit stops, marked as red squares, and $k=4$. 
	The distances are specified in the figure, and can be considered as two lines separate by an infinite distance, i.e., the distance between any pair of points lying on the same line is the sum of the distances of the intervals between them.. 
	We note that $\ceil{\frac{2n}{k}}=\ceil{\frac{14}{4}}=4$. 
	It follows that, once a candidate stop can deactivate $4$ active endpoints, it is selected by \GCT and the corresponding endpoints are deactivated. 
	We observe that $y_1$ and $y_2$ are the first two candidate stops selected by \GCT. 
	Afterward, no further candidate stop is selected, as no candidate stop can deactivate at least $4$ active endpoints within a distance radius of at most $2+\frac{\sqrt{5}-1}{2}-\frac{\varepsilon}{8}$. 
	Once the radius reaches $2+\frac{\sqrt{5}-1}{2}-\frac{\varepsilon}{8}$, all the endpoints are deactivated by the selected stops $Y=\{y_1,y_2\}$, which forms the final solution produced by the \GCT algorithm. Under this solution $Y=\{y_1,y_2\}$, the costs incurred by agents $\{1,2,3,4\}$ are computed as follows:
	\begin{align*}
		c_1(Y)=c_3(Y) = 2 + \sqrt{5}-\frac{\varepsilon}{4}; \ c_2(Y)=c_4(Y) = \frac{3+\sqrt{5}}{2}-\frac{\varepsilon}{4}.
	\end{align*}
	Notice that $S=\{1,2,3,4\}$ forms a deviation coalition that prefers the alternative stop pair $T=\{\tau_1, \tau_2\}$. Under solution $T$, the costs are
	\begin{align*}
		c_1(T)=c_3(T)=1, c_2(T)=c_4(T)=\frac{\sqrt{5}-1}{2}.
	\end{align*}
	Thus, for each agent $i\in \{1,3\}$, we have $\frac{c_i(Y)}{c_i(T)}=2+\sqrt{5}-\frac{\varepsilon}{4} > 2 + \sqrt{5} - \varepsilon$ and for agent $i\in \{2,4\}$, we have
	\begin{align*}
		\frac{c_i(Y)}{c_i(T)} = \frac{\frac{3+\sqrt{5}}{2}-\frac{\varepsilon}{4}}{\frac{\sqrt{5}-1}{2}} =  2+\sqrt{5} - \frac{(\sqrt{5}+1)\varepsilon}{8} > 2 + \sqrt{5} - \varepsilon.
	\end{align*}
	Therefore, we derive that for any agent $i\in S$, $(2+\sqrt{5}-\varepsilon)\cdot c_i(T) < c_i(Y)$, implying that solution $Y$ by \GCT violates $(2+\sqrt{5}-\varepsilon)$-JR.
\end{proof}

\thmEcaJrApprox*
\begin{proof}
	To show the tightness, we provide an instance in which ECA fails to achieve $(1+\sqrt{2}-\varepsilon)$-JR. 
	Fix $\epsilon>0$ and consider a TrSP instance with $4$ agents, $4$ candidate stops, and $k=3$. The transit cost function satisfies $d'(i,j)=0$ for any $i,j\in \mathcal{C}$. For each $i\in\{1,2,3,4\}$, agent $i$ travels from $a_i$ to $b_i$. 
	The distances between candidate stops and endpoints are specified in \Cref{tab:TRSP_instance_ECA_tightness}. 
	\begin{table}[!ht]
	\centering
	\begin{tabular}{ccccccc} 
		\toprule 
		$d(\cdot)$& $a_1$   & $a_2,a_3$  & $a_4$    & $b_1$ & $b_2,b_3$ & $b_4$ \\ 
		\midrule
		$\tau_1 $ &  $1$           & $\sqrt{2}-1$    & $1 + \sqrt{2}$    & $\infty$ & $\infty$ & $\infty$  \\ \hline
		$\tau_2 $ & $1+\sqrt{2}$   & $1-\frac{\varepsilon}{4}$  & $1-\frac{\varepsilon}{4}$    & $\infty$ & $\infty$ & $\infty$  \\ \hline  
		$\tau_3 $ & $\infty$  &  $\infty$  &   $\infty$     & $1$           & $\sqrt{2}-1$    & $1 + \sqrt{2}$\\ \hline
		$\tau_4 $ & $\infty$  &  $\infty$  &   $\infty$     & $1+\sqrt{2}$  & $1-\frac{\varepsilon}{4}$  & $1-\frac{\varepsilon}{4}$  \\ \bottomrule
	\end{tabular}
	\caption{A TrSP instance in which ECA fails $(1+\sqrt{2}-\varepsilon)$-JR}
	\label{tab:TRSP_instance_ECA_tightness}
	\end{table}

	We first observe that $\{\tau_2,\tau_4\}$ is the first pair selected by ECA as when the cost radius reaches $2(1-\frac{\varepsilon}{4})$, agents $\{2,3,4\}$ are deactivated by $\{\tau_2,\tau_4\}$. Afterward, no other candidate stop can be selected by ECA. Therefore, ECA returns $\{\tau_2, \tau_4\}$ as the output. However, consider a deviation coalition $S=\{1,2,3\}$ and an alternative pair $T=\{\tau_1,\tau_3\}$. For agent $1$, we have $c_1(Y)=2(1+\sqrt{2})$ and $c_1(T)=2$. Thus, we have $\frac{c_1(Y)}{c_1(T)}=1+\sqrt{2} > 1+\sqrt{2}-\varepsilon$. For agents $2$ and $3$, we have $c_2(Y)=c_3(Y)=2-\frac{\varepsilon}{2}$ and $c_2(T)=c_3(T)=2(\sqrt{2}-1)$. Thus, we have
	\begin{align*}
	\frac{c_2(Y)}{c_2(T)}=\frac{c_3(Y)}{c_3(T)}=\frac{2-\frac{\varepsilon}{2}}{2(\sqrt{2}-1)}=(1+\sqrt{2})-\frac{(\sqrt{2}+1)\varepsilon}{4} > (1+\sqrt{2})-\varepsilon.
	\end{align*} 
	Thus, ECA fails to satisfy $(1+\sqrt{2}-\varepsilon)$-JR for any $\varepsilon > 0$.
\end{proof}

\thmHybridJrApprox*

\begin{proof}
	To show the tightness of this approximation ratio, we slightly modify the instance originally used to prove the tightness of the \GCT algorithm. 
In the example, we have $n=7$ and $k=4$ with candidate stops $\mathcal{C}=\{\tau_1,\tau_2,y_1,y_2\}\cup D$ where $D=\{y_3,y_4,y_5,y_6\}$. For candidate stops in $D$, we assign a large constant distance to each endpoint. The locations of endpoints and their distances (except candidate stops in $D$) are illustrated in \Cref{fig:example:bad_example_for_Hybrid_ECA}.
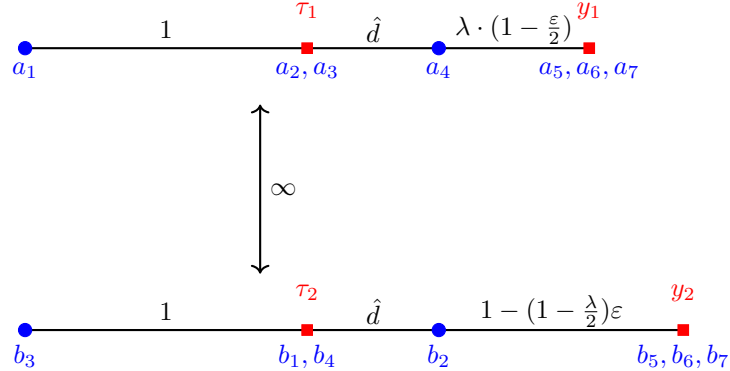
\begin{figure}[!htbp]
	\centering
	\begin{tikzpicture}[
		scale=2.5,
		every node/.style={font=\large},
		line/.style={
			line width=1.1pt,
			draw=black!65,
			line cap=round
		},
		redpt/.style={
			draw=red!70!black,
			fill=red!55,
			rectangle,
			rounded corners=0.7pt,
			minimum size=4.8pt,
			inner sep=0pt
		},
		bluept/.style={
			draw=blue!70!black,
			fill=blue!55,
			circle,
			minimum size=3.2pt,
			inner sep=0pt
		},
		redlabel/.style={
			text=red!75!black,
			font=\large
		},
		bluelabel/.style={
			text=blue!75!black,
			font=\large
		},
		dist/.style={
			font=\normalsize,
			fill=white,
			inner sep=1.5pt
		},
		infarrow/.style={
			<->,
			>=Stealth,
			line width=0.9pt,
			draw=black!60
		},
		bgbox/.style={
			fill=gray!7,
			draw=gray!30,
			line width=0.6pt,
			rounded corners=6pt
		}
		]
		
		\coordinate (T0) at (0,1.5);
		\coordinate (T1) at (1.5,1.5);
		\coordinate (T2) at (2.2,1.5);
		\coordinate (T3) at (3,1.5);
		
		\coordinate (B0) at (0,0);
		\coordinate (B1) at (1.5,0);
		\coordinate (B2) at (2.2,0);
		\coordinate (B3) at (3.5,0);
		
		\filldraw[bgbox] (-0.3,1.1) rectangle (4,1.9);
		\filldraw[bgbox] (-0.3,-.4) rectangle (4,0.4);
		
		\draw[line] (T0) -- (T3);
		\draw[line] (B0) -- (B3);
		
		\draw[infarrow] (1.8,0.35) -- (1.8,1.15)
		node[midway, right=3pt, font=\large] {$\infty$};
		
		\node[redpt, label={[redlabel]above:$\tau_1$}] at (T1) {};
		\node[redpt, label={[redlabel]above right:$y_1$}] at (T3) {};
		
		\node[bluept, label={[bluelabel]below:$a_1$}] at (T0) {};
		\node[bluelabel, below=5pt] at (T1) {$a_2,a_3$};
		\node[bluept, label={[bluelabel]below:$a_4$}] at (T2) {};
		\node[bluelabel, below=5pt] at (T3) {$a_5,a_6,a_7$};
		
		\node[redpt, label={[redlabel]above:$\tau_2$}] at (B1) {};
		\node[redpt, label={[redlabel]above right:$y_2$}] at (B3) {};
		
		\node[bluept, label={[bluelabel]below:$b_3$}] at (B0) {};
		\node[bluelabel, below=5pt] at (B1) {$b_1,b_4$};
		\node[bluept, label={[bluelabel]below:$b_2$}] at (B2) {};
		\node[bluelabel, below=5pt] at (B3) {$b_5,b_6,b_7$};
		
		\node[dist, above=4pt] at (0.75,1.5) {$1$};
		\node[dist, above=4pt] at (1.85,1.5) {$\hat d$};
		\node[dist, above=4pt] at (2.6,1.5) {$\lambda\cdot\left(1-\frac{\varepsilon}{2}\right)$};
		
		\node[dist, above=4pt] at (0.75,0) {$1$};
		\node[dist, above=4pt] at (1.85,0) {$\hat d$};
		\node[dist, above=4pt] at (2.8,0) {$1-\left(1-\frac{\lambda}{2}\right)\varepsilon$};
		
	\end{tikzpicture}
	\caption{TrSP instance where $\lambda$-Hybrid violates $(\frac{\lambda+3+\sqrt{\lambda^2+10\lambda+9}}{2}-\varepsilon)$-JR where the distance $\hat{d}=\frac{\sqrt{\lambda^2+10\lambda+9}-\lambda-1}{4}$, which is in the range of $[\frac{1}{2},\frac{\sqrt{5}-1}{2}]$ for $\lambda\in [0,1]$.}
	\label{fig:example:bad_example_for_Hybrid_ECA}
\end{figure}

Consider the execution of the $\lambda$-Hybrid algorithm, keeping in mind that $\ceil{2n/k}=4$.
Since $\lambda\in [0,1]$, we first observe that $y_1$ is the first selected transit stop as when the parameter $r$ reaches $1-\frac{\varepsilon}{2}$, $y_1$ will deactivate endpoints $\{a_4,a_5,a_6,a_7\}$ with a distance radius of $\lambda \cdot (1-\frac{\varepsilon}{2})$ via the GC-TrSP loop. 
Notice that the pair $\{\tau_1,\tau_2\}$ are not selected because they can deactivate $4$ agents only when the parameter $r$ reaches $1$. 
After the selection of $y_1$, it is not until the parameter $r$ reaches $(1 - (1-\frac{\lambda}{2})\cdot \varepsilon)/\lambda < \frac{1}{\lambda}$ that candidate stop $y_2$ deactivates $4$ endpoints $\{b_2,b_5,b_6,b_7\}$ in a ``\GCT" loop. 
Notice that after the selection of $y_1$ and $y_2$, no other candidate stop or pair is selected by $\lambda$-Hybrid algorithm as all the endpoints will be deactivated when the parameter $r$ reaches $\lambda \cdot (\hat{d}+2-(1-\frac{\lambda}{2})\cdot \varepsilon)$. 
Hence, $\lambda$-Hybrid algorithm finally outputs $Y=\{y_1,y_2\}$. 

Let $T$ denote the pair $\{\tau_1,\tau_2\}$ and consider agent subset $S=\{1,2,3,4\}$.
For agent $1$, the cost of using $Y$ is $c_1(Y)=\frac{\lambda+3+\sqrt{\lambda^2+10\lambda+9}}{2}-\varepsilon$ while the cost of using $T$ is $c_1(T)=1$. Then we have $\frac{c_1(Y)}{c_1(T)}=\frac{\lambda+3+\sqrt{\lambda^2+10\lambda+9}}{2}-\varepsilon$. Similarly, we have $\frac{c_3(Y)}{c_3(T)}=\frac{\lambda+3+\sqrt{\lambda^2+10\lambda+9}}{2}-\varepsilon$ as agent $3$ shares the same cost of using $Y$ and $T$ as agent $1$. 

For agent $2$ (the same for agent $4$), we compute that $c_2(Y)= \hat{d}+\lambda \cdot (1-\frac{\varepsilon}{2})+ 1 - (1-\frac{\lambda}{2})\cdot \varepsilon=1+\hat{d}+\lambda-\varepsilon$ and $c_2(T)=\hat{d}$. Consequently, we have
\begin{align*} 
	\frac{c_2(Y)}{c_2(T)} &= \frac{1+\hat{d}+\lambda-\varepsilon}{\hat{d}} = 1 + \frac{1+\lambda}{\hat{d}} - \frac{\varepsilon}{\hat{d}} \\
	&= 1 + \frac{4(1+\lambda)}{\sqrt{\lambda^2+10\lambda+9}-\lambda-1}- \frac{\varepsilon}{\hat{d}} \\
	&= 1 + \frac{4(1+\lambda) \cdot (\sqrt{\lambda^2+10\lambda+9} + (\lambda+1))}{\lambda^2+10\lambda+9 - (\lambda+1)^2}- \frac{\varepsilon}{\hat{d}}\\
	&=1 + \frac{4(1+\lambda) \cdot (\sqrt{\lambda^2+10\lambda+9} + (\lambda+1))}{8(\lambda+1)}-\frac{\varepsilon}{\hat{d}} \\
	&\leq 1 + \frac{\sqrt{\lambda^2+10\lambda+9} + (\lambda+1)}{2} - \varepsilon\\
	&= \frac{\sqrt{\lambda^2+10\lambda+9} + \lambda+3}{2} - \varepsilon.
\end{align*}
Hence, we now have a group of agents $S$ with size $\ceil{\frac{2n}{k}}$ such that for each agent $i \in S$, it holds that $\frac{c_i(Y)}{c_i(T)} \leq \frac{\sqrt{\lambda^2+10\lambda+9} + \lambda+3}{2} - \varepsilon$. 
This implies that  $\lambda$-Hybrid algorithm fails to satisfy $(\frac{\lambda+3+\sqrt{\lambda^2+10\lambda+9}}{2}-\varepsilon)$-JR.
\end{proof}

\propHybridCore*
\begin{proof}
	Let $H=\ceil{2/\delta}$. 
	We construct an instance with $n = 4\cdot H$ agents, $m=12$ candidate transit stops $\mathcal{C}=\{\tau_1,\tau_2,\tau_3,\tau_4,c_1,\dots, c_8\}$, and $k=8$.
	The agent set $N$ is partitioned into six subsets, denoted $N_i$ for $i\in[6].$ 
	The number of agents in each group and their start and end locations are given in \Cref{tab:partition_tightness_example_hybrid_core}.
	\begin{table}[!ht]
		\centering
		\begin{tabular}{ccc}
			\toprule
			$i$ & $|N_i|$ & $(a_i,b_i)$ \\ \midrule
			$1$ & $H-1$ & $(x_1,x_3)$ \\ \hline
			$2$ & $H-1$ & $(x_2,x_4)$ \\ \hline
			$3$ & $H-1$ & $(y_1,y_2)$ \\ \hline
			$4$ & $H-1$ & $(y_3,y_4)$ \\ \hline
			$5$ & $2$ & $(z_1,z_2)$ \\ \hline
			$6$ & $2$ & $(z_3,z_4)$ \\ \bottomrule
		\end{tabular}
		\caption{The number of agents and endpoint locations of the partition of agents, $N$, in the instance used to prove \Cref{prop:hybrid_core_tightness}.}
		\label{tab:partition_tightness_example_hybrid_core}
	\end{table}

 The distances from the locations mentioned in \Cref{tab:partition_tightness_example_hybrid_core} to the candidate stops $\{\tau_1,\tau_2,\tau_3,\tau_4,c_1,c_2,c_3,c_4\}$ are specified in \Cref{tab:distance_tightness_example_hybrid_core}.
 For simplicity, we use $q\coloneqq (\sqrt{4\lambda^2 + 12\lambda+1} - 2\lambda - 1)/4\lambda$.
 Values in parentheses correspond to each other and any location-candidate pairs not specified are assigned an infinite distance.
 For the remaining stops $\{c_5, c_6, c_7, c_8\}$, we assign a large constant distance to each endpoint.
 Noting that $q\leq 1$ for all $\lambda\in (0,1]$, one can check that the resulting metric space satisfies the triangle inequality.
	\begin{table}[!ht]
		\centering
		\begin{tabular}{ccccc}
		\toprule
		$d(\cdot)$ & $\tau_1 (\tau_3)$ & $c_1 (c_3)$ & $\tau_2 (\tau_4)$ & $c_2 (c_4)$ \\ \midrule
		$x_1 (x_3)$ & $1$ & $1+q+1/(2\cdot \lambda)-\varepsilon$ & $\infty$ & $\infty$ \\ \hline
		$y_1 (y_3)$ & $q$ & $1/(2\lambda) - q\cdot \varepsilon$ & $\infty$ & $\infty$ \\ \hline
		$z_1 (z_3)$ & $1+q$ & $1/(2\lambda) - q\cdot \varepsilon$ & $\infty$ & $\infty$ \\ \hline
		$x_2 (x_4)$ & $\infty$ & $\infty$ & $1$ & $1+q+1/(2\cdot \lambda)-\varepsilon$ \\ \hline
		$y_2 (y_4)$ & $\infty$ & $\infty$ & $q$ & $1/(2\lambda) - q\cdot \varepsilon$ \\ \hline
		$z_2 (z_4)$ & $\infty$ & $\infty$ & $1+q$ & $1/(2\lambda) - q\cdot \varepsilon$ \\
		\bottomrule
		\end{tabular}
		\caption{Distances for an instance in which $\lambda$-hybrid fails $(2-\delta, \frac{\sqrt{4\lambda^2 + 12\lambda+1} + 2\lambda + 1}{4\lambda}-\varepsilon)$-core.}
		\label{tab:distance_tightness_example_hybrid_core}
	\end{table}

In words, there are four ``zones'' where endpoints and candidate stops lie proximal to each other and each of these zones has an identical structure of endpoint distances. 
Each agent has an endpoint in exactly two zones, and notably, $N_1$ and $N_2$ do not share their respective zone pairs with any other agent group, whereas $N_3$ and $N_4$ share their zone pairs with $N_5$ and $N_6$, respectively.
We will now explain the execution of $\lambda$-hybrid and show that it selects $Y=\{c_1,c_2,c_3,c_4\}$.
Note that $\ceil{2n/k} = H$.
Thus, the endpoints located at any single point in the table above are not enough to trigger the \GCT loop.
However, any ball capturing at least two of the points in \Cref{tab:distance_tightness_example_hybrid_core} is sufficient to trigger the \GCT loop.

We begin with the case in which $\lambda> \frac{1}{2(1+\varepsilon)}$.
Observe that, for each $j\in[4]$, there are $H+1$ endpoints within a radius of $1/(2\lambda)-\varepsilon < 1$ of $c_j$ (specifically endpoints located at $y_j$ and $z_j$, whereas $1$ is the minimum radius required to capture at least $H$ endpoints with candidate $\tau_j$ for each $j$. 
Also note that the pair of stops with the smallest cost radius in this case is $c_1$ and $c_2$ ($c_3$ and $c_4$), which capture agents in $N_3\cup N_5$ ($N_4\cup N_6)$ with a cost radius of $1/\lambda-2\cdot q\cdot \varepsilon < 2$.
Thus, regardless of whether it is the \GCT loop or ECA loop which acts first in $\lambda$-hybrid, we can assume that either $c_1$ or $c_1$ and $c_2$ are selected first. 
In the former case, $c_2$, $c_3$, and $c_4$ will remain the stops with the minimum radius which can capture $H$ active endpoints and will thus be selected next by the \GCT loop.
In the latter case, $c_3$ and $c_4$ would be selected afterward by the ECA loop as they can capture $H+1$ agents with an identical cost radius to $c_1$ and $c_2$.
This means that $\lambda$-hybrid will certainly select $Y=\{c_1,c_2,c_3,c_4\}$ first.
At this point, only endpoints belonging to agents in $N_1$ and $N_2$ remain active.
These agents would be deactivated by the stops selected already, since every remaining candidate requires a very large radius to capture both $N_1$ and $N_2$, and $\lambda$-hybrid would return $Y$.

Next, we handle the case in which $\lambda \leq \frac{1}{2(1+\varepsilon)}$.
Here, the $c_j$ stops are not favored by the \GCT loop since $1/(2\lambda)-\varepsilon \geq 1.$
Instead, the \GCT loop would first select $\tau_j$ for some $j$ and it would do so when the $r$ parameter increases so that $r\cdot \lambda = 1 \implies r = 1/\lambda.$
It can be verified that the pair of stops which capture at least $H$ agents with the minimum cost radius is $c_1$ and $c_2$ (or $c_3$ and $c_4$).
This holds precisely because agent groups $N_1$ and $N_2$ are located in distinct zone pairs from $N_3$ and $N_4$ and thus it is impossible for multiple of these groups to benefit from the selection of two stops. 
For example, the selection of $\tau_1$ and $\tau_2$ can capture agents in $N_3$ with a cost radius of $2q$, but incurs infinite cost for agents in $N_1$, $N_2$, and $N_4$.
Note that the agents in $N_3\cup N_5$ all incur a cost of $2(1/(2\lambda) - \varepsilon)$ by using $c_1,c_2$. 
Thus, the radius parameter $r$ required to select $c_1$ and $c_2$ is strictly less than $1/\lambda$ and hence the ECA loop is triggered first and $c_1$ and $c_2$ are selected.
By the same argument, $c_3$ and $c_4$ are selected by the ECA loop as well. 
Thus, $\lambda$-hybrid selects $Y=\{c_1,c_2,c_3,c_4\}$ first and the same argument as was used in the first case applies to show that this is the set returned by $\lambda$-hybrid.

Now consider the set of agents $S = N_1 \cup N_2 \cup N_3 \cup N_4$ and transit stop set $T=\{\tau_1,\tau_2,\tau_3,\tau_4\}$. 
For each agent $i \in N_1\cup N_2$, it holds that 
\begin{align*}
c_i(Y)/c_i(T) = 1+q+1/(2\cdot \lambda)-\varepsilon = \frac{\sqrt{4 \lambda^2 + 12\lambda + 1} + 2\lambda + 1}{4\lambda}-\varepsilon,
\end{align*}
and for each agent $i\in N_3\cup N_4$, it holds that
\begin{align*}
	c_i(Y)/c_i(T) &= \frac{1/(2\lambda) - q\cdot \varepsilon}{q} \\
	&= \frac{2}{\sqrt{4 \lambda^2 + 12\lambda + 1} - 2\lambda - 1} - \varepsilon	\\
	&= \frac{\sqrt{4 \lambda^2 + 12\lambda + 1} + 2\lambda + 1}{4\lambda} - \varepsilon
\end{align*}
where the final equality follows from multiplying the fraction's numerator and denominator by the conjugate of the denominator.
Lastly, we have 
\begin{align*}
	|S| = 4\cdot H - 4 = 2\cdot H(2 - \frac{2}{H}) \geq (2-\delta)\cdot \frac{|T|\cdot n}{k}.
\end{align*}
In summary, there exists such a blocking coalition $S$ and a candidate stop subset $T$ with size $|S| \geq (2-\delta)\cdot \frac{|T|\cdot n}{k}$ such that for every agent $i\in S$, we have $(\frac{\sqrt{4 \lambda^2 + 12\lambda + 1} + 2\lambda + 1}{4\lambda} - \varepsilon) \cdot c_i(T) < c_i(Y)$.
\end{proof}

\section{Missing Details for Experimental Evaluation}\label{sec:experiment_appendix}
In this section, we present additional details of the experimental setup and further experimental results.
\subsection{Detailed Experimental Setup}
\paragraph{Solution Approximation Verification.} The key step regarding the experimental analysis is to compute the approximation ratio of JR and core with respect to the solutions outputted by \GCT and ECA. We generally follow the verification procedures from~\citet{BEL25a} (Appendix C). 

To test whether a given solution satisfies JR and core, we rely on the following idea. Fix a solution and examine whether there exist deviations to pairs of stops that would strictly reduce agents’ costs. Such deviations, if sufficiently widespread, witness a violation of JR or of core stability.

Formally, for any solution $Y \subseteq \mathcal{C}$ and agent $i\in N$, define 
\begin{align*}
	\mathcal{P}_i(Y):=\{T\subseteq \mathcal{C}:|T|=2,c_i(T) < c_i(Y)\},
\end{align*}
that is, the set of stop pairs to which agent $i$ can deviate and obtain a strictly lower cost. Using this notation, we recall the Proposition C.1 of~\citet{BEL25a}. 
\begin{proposition}[\citet{BEL25a}]\label{prop:testing_jr_and_core}
	Consider a solution $Y\subseteq C$. We have,
	\begin{enumerate}
		\item $Y$ satisfies JR if and only if there is no set $T \subseteq \mathcal{C}$ with $|T|=2$ such that $|\{i\in N: T\in \mathcal{P}_i(Y)\}|\geq \frac{2n}{k}$. 
		\item $Y$ is in the core if and only if there is no set $T\subseteq \mathcal{C}$ with $T\neq \emptyset$ such that $|i\in N: \exists~T'\in \mathcal{P}_i(Y), T'\subseteq T| \geq \frac{|T|\cdot n}{k}$.
	\end{enumerate}
\end{proposition}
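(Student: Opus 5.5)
This statement is cited from \citet{BEL25a}; I would reprove both equivalences directly from the definitions of JR and core, using only the definition of $\mathcal{P}_i(Y)$. The key observation is that $c_i(\cdot)$ is monotone non-increasing under set inclusion. Enlarging the stop set only adds options to both the inner minimum and the outer minimum. Moreover, under the cost function $c_i(T)$ is always attained by at most two stops of $T$, namely the minimizing $y_1,y_2$. If $T$ is a singleton or the minimizers coincide, we may take any pair $T'\subseteq T$ containing them. When $|T|=1$, we treat this as the degenerate pair, or allow $|T'|\le 2$; I would note that convention explicitly. Hence $c_i(T)=\min_{T'\subseteq T,|T'|\le 2}c_i(T')$.

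\textbf{Part (i).} Suppose a JR violation exists: a pair $T$ and a set $S$ with $|S|\ge 2n/k$ such that $c_i(T)<c_i(Y)$ for all $i\in S$. Then $T\in\mathcal{P}_i(Y)$ for every $i\in S$, so the counted set has size at least $2n/k$. Conversely, if $|\{i:T\in\mathcal{P}_i(Y)\}|\ge 2n/k$ for some pair $T$, then that set itself is a blocking coalition for $T$. So JR fails. Both directions are immediate.

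\textbf{Part (ii).} Suppose the core is violated by $(S,T)$ with $|S|\ge |T|n/k$ and $c_i(T)<c_i(Y)$ for all $i\in S$. For each $i\in S$, pick the minimizing sub-pair $T'_i\subseteq T$. Then $c_i(T'_i)=c_i(T)<c_i(Y)$, so $T'_i\in\mathcal{P}_i(Y)$ with $T'_i\subseteq T$. Hence the counted set contains $S$ and has size at least $|T|n/k$.

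For the converse, let $S$ be the set of agents having some $T'\in\mathcal{P}_i(Y)$ with $T'\subseteq T$, and suppose $|S|\ge |T|n/k$. By monotonicity, each $i\in S$ satisfies $c_i(T)\le c_i(T')<c_i(Y)$. Therefore $(S,T)$ blocks $Y$.

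The only delicate step is the decomposition $c_i(T)=\min_{T'\subseteq T}c_i(T')$ over pairs. The issue is that $\mathcal{P}_i$ is defined with $|T'|=2$ exactly, so two edge cases need care. First, when $|T|=1$, a singleton only allows walking (or using $y_1=y_2$ with zero transit). I would handle this by noting that the singleton case is covered by allowing $y_1=y_2$, or, if $|\mathcal{C}|\ge 2$, by checking that deviations with $|T|=1$ add nothing beyond walking. Second, the minimizing stops may coincide. In that case I would pad $T'$ with an arbitrary second stop of $T$, which monotonicity permits.
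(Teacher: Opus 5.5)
Your proof is correct. Part (i) is a direct restatement of the JR definition. Part (ii) follows from two facts: $c_i(\cdot)$ is monotone under inclusion, and any strict improvement under $T$ is already realized by a pair inside $T$. The paper itself does not prove this statement; it imports it from \citet{BEL25a} (their Proposition C.1). So there is no in-paper argument to compare against, and your definitional argument is the natural one.

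The one thing to tidy is the edge-case discussion, where you hedge between several options. A single observation settles all of them. For any stop $y$, the route through $y$ alone costs $d(a_i,y)+d'(y,y)+d(y,b_i)\geq d(a_i,b_i)\geq c_i(Y)$, by the triangle inequality for $d$ and $d'\geq 0$. Hence $c_i(\{y\})=c_i(\emptyset)=d(a_i,b_i)\geq c_i(Y)$, so a deviation with $|T|\leq 1$ never gives any agent a strict improvement. This has three consequences:
\begin{enumerate}
\item It disposes of the forward direction of (ii) for such $T$. That includes $T=\emptyset$, which the core definition quantifies over but the right-hand side of (ii) excludes.
\item The converse direction is vacuous for such $T$, because $T$ has no $2$-subset.
\item Whenever $c_i(T)<c_i(Y)$, the minimizing stops $y_1,y_2$ are necessarily distinct. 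So $\{y_1,y_2\}$ is a genuine pair in $\mathcal{P}_i(Y)$, and your padding step is never needed.
\end{enumerate}
Do not resolve the singleton case by a \emph{degenerate pair} convention or by allowing $|T'|\leq 2$ in $\mathcal{P}_i(Y)$, since that changes the statement. The qualifier $|\mathcal{C}|\geq 2$ is also unnecessary.

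As a side benefit, the argument never uses null transit cost, so it holds for arbitrary $d'\geq 0$.
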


Given a solution $Y$, checking whether it satisfies JR can be done in polynomial time. Specifically, one can compute $\mathcal{P}_i(Y)$ for each agent $i\in N$, and then verify the condition in \Cref{prop:testing_jr_and_core} (1) by scanning over stop pairs $T$ and counting how many agents include $T$ in their deviation pairs. Moreover, to compute the approximation ratio, one can perform a binary search over a cost relaxation parameter and, for each candidate value, test whether there exists a pair $T$ that is strictly improving for at least $\frac{2n}{k}$ agents. This procedure runs in polynomial time.

Regarding core testing, applying the same brute force protocol would require considering an exponential number of coalitions $T\subseteq C$. Instead, we test core stability via the following integer program, denoted \textsc{CoreTesting}.
\begin{align*}
	\quad \max &\sum_{i\in N}x_i \\
	\text{s.t.}\quad x_i &\leq \sum_{T'\in \mathcal{P}_i(Y)}y_{T'} && \forall i\in N\\
	y_{T'} &\leq y_s  &&\forall T'\subseteq \mathcal{C},|T'|=2,s\in T' \\
	\sum_{i\in N}x_i &\geq \frac{n}{k}\sum_{s\in \mathcal{C}}y_s \\
	x_i &\in \{0, 1\} && \forall i\in N \\
	y_s &\in \{0, 1\} && \forall s\in \mathcal{C} \\
	y_{T'} & \in \{0, 1\} && \forall T'\subseteq \mathcal{C}, |T'|=2 
\end{align*}

We then recall the Proposition C.3 by \citet{BEL25a} which shows how the integer program checks the core satisfaction for any given solution.
\begin{proposition}[\citet{BEL25a}]\label{prop:coretesting_integer_program}
	A solution $Y$ is in the core if and only if its corresponding integer program (\textsc{CoreTesting}) has an optimal value of $0$.
\end{proposition}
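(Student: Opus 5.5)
We show that $Y$ is in the core exactly when the \textsc{CoreTesting} program has no feasible solution with positive objective value. Since the all-zero assignment is always feasible, this is the same as the optimum being $0$. The argument is a direct translation, in both directions, between feasible integer solutions and blocking pairs $(S,T)$. We rely throughout on Proposition~\ref{prop:testing_jr_and_core}(2), which we take as given. It says that $Y$ fails the core if and only if there is a nonempty $T\subseteq\mathcal{C}$ such that at least $|T|\cdot n/k$ agents $i$ each have some $T'\in\mathcal{P}_i(Y)$ with $T'\subseteq T$.

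\textbf{Direction 1: a core violation gives a positive solution.} Suppose $Y$ is not in the core, and take $T$ as in Proposition~\ref{prop:testing_jr_and_core}(2). Define the integer solution as follows.
\begin{itemize}
\item For each stop, set $y_s=1$ if and only if $s\in T$.
\item For each pair, set $y_{T'}=1$ if and only if $T'\subseteq T$.
\item For each agent, set $x_i=1$ if and only if some $T'\in\mathcal{P}_i(Y)$ satisfies $T'\subseteq T$.
\end{itemize}
We check each constraint.
\begin{itemize}
\item The linking constraint $y_{T'}\le y_s$ holds because $y_{T'}=1$ only when $T'\subseteq T$, and then every $s\in T'$ has $y_s=1$.
\item The constraint $x_i\le\sum_{T'\in\mathcal{P}_i(Y)} y_{T'}$ holds because $x_i=1$ only when some $T'\in\mathcal{P}_i(Y)$ lies inside $T$, and that $T'$ has $y_{T'}=1$.
\item The size constraint $\sum_i x_i\ge \frac{n}{k}\sum_s y_s$ is exactly the counting condition of Proposition~\ref{prop:testing_jr_and_core}(2), since $\sum_s y_s=|T|$.
\end{itemize}
Because $T\neq\emptyset$, the size constraint forces $\sum_i x_i\ge |T|\cdot n/k>0$. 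So the optimum is positive.

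\textbf{Direction 2: a positive solution gives a core violation.} Suppose some feasible integer solution has $\sum_i x_i>0$. Let $T=\{s: y_s=1\}$. We check the three conditions of Proposition~\ref{prop:testing_jr_and_core}(2).
\begin{itemize}
\item $T$ is nonempty. Pick an agent with $x_i=1$. The constraint for $i$ gives some $T'\in\mathcal{P}_i(Y)$ with $y_{T'}=1$. The linking constraints then force $y_s=1$ for both $s\in T'$.
\item Every agent with $x_i=1$ has a pair in $\mathcal{P}_i(Y)$ contained in $T$. The same argument applies: the pair $T'$ found for $i$ has both its stops in $T$.
\item Enough agents are counted. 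The size constraint gives $|\{i: x_i=1\}|=\sum_i x_i\ge |T|\cdot n/k$.
\end{itemize}
Hence $T$ witnesses a core violation.

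\textbf{Main subtlety.} The only delicate step is in Direction 2. A pair variable $y_{T'}$ could in principle be set to $1$ without both of its stops being selected. This would let the program count a deviation that the budget $\sum_s y_s$ does not pay for. The constraints $y_{T'}\le y_s$ rule this out, and they are exactly what make the program's budget accounting match the $|T|\cdot n/k$ threshold in the core definition.

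One further point needs care. We use the characterization in Proposition~\ref{prop:testing_jr_and_core}(2), which reduces arbitrary deviations to pairs inside $T$. This reduction relies on the fact that any improving set $T$ for agent $i$ contains the two stops she would actually use.
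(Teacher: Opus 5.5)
Your proof is correct. The paper gives no argument of its own for this statement; it quotes it from \citet{BEL25a}. Your two-way translation between feasible solutions with positive objective and witnesses $T$ for Proposition~\ref{prop:testing_jr_and_core}(2) is the natural proof. It rests on two facts you correctly identify: the all-zero point is feasible, and the linking constraints $y_{T'}\le y_s$ force every counted pair into $T=\{s:y_s=1\}$.

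If you want the argument self-contained rather than resting on Proposition~\ref{prop:testing_jr_and_core}(2), the only extra fact needed is $\{i:\exists\,T'\in\mathcal{P}_i(Y),\,T'\subseteq T\}=\{i: c_i(T)<c_i(Y)\}$. This follows from two observations:
\begin{enumerate}
\item $c_i$ is monotone in the stop set, which gives the inclusion $\subseteq$.
\item By the triangle inequality, a route with $y_1=y_2$ never beats walking, so any strict improvement under $T$ uses two distinct stops of $T$. This gives the inclusion $\supseteq$.
\end{enumerate}
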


By \Cref{prop:coretesting_integer_program}, we can again use binary search to identify the smallest cost relaxation parameter for which, under that parameter, the instance induced by $Y$ yields an integer program with optimal value $0$.

\paragraph{Data and Sample.} We use trip records from the City of Helena Capital Transit Service dataset, which contains 
$10,282$ unique routes among citizens over $3,075$ distinct spatial points. Each record specifies a pick up point and a drop off point. To obtain the underlying travel costs, we combine these trip endpoints with OpenStreetMap road network data and use the open source Valhalla routing engine to compute shortest path travel costs. This allows us to derive both walking costs and shuttle bus transit costs between any pair of points in the spatial set. Our experiments are designed to compare algorithmic performance across different stop numbers $k$ and cost scaling factors. For each parameter combination, we sample agents together with their associated routes from the dataset pool and take the union of their origin and destination points as the candidate facility location set. We repeat the sampling procedure for many times and present, for each algorithm, the average as well as the minimum and maximum approximation ratios across the sampled instances.

\subsection{Evaluation of Core Approximation under Different Scaling Factors}
To evaluate the core approximation ratios of the three algorithms under different transit cost scaling factors, we conduct the following experiment. We randomly sample 30 agents from the pool and consider selecting $k=2,4,6,8,10$ stops under transit cost scaling factors of $0$ (zero transit cost), $1$ (regular transit cost), $5$, and $10$.
\Cref{fig:experimental_core_GC_ECA_Hybrid} shows the empirical approximation ratio to the $(2,\beta)$-core achieved by \GCT, ECA, and the $\frac{1}{2}$-Hybrid algorithm, as a function of the stop number $k$ and the cost relaxation parameter $\beta$ under different transit cost scaling regimes. Across all panels, the three methods behave almost identically for small $k$ (in particular $k=2,4$), with mean approximation ratios essentially equal to $1$, indicating that the returned solutions typically lie in the exact core. 

As $k$ increases, the methods begin to separate. ECA is consistently the most sensitive to larger $k$, exhibiting a clear upward drift in its average approximation ratio and substantially larger dispersion across instances, especially when transit costs are small. In contrast, \GCT remains highly stable across all $k$, with averages staying very close to $1$. The $\frac{1}{2}$-Hybrid algorithm interpolates between the two, tracking \GCT closely while showing a modest increase for larger $k$, but with much less variance than ECA. while at scaling $10$, all three algorithms are essentially indistinguishable, producing near exact core outcomes across all $k$. Notably, ECA admits an unbounded worst case core approximation ratio, yet in our experiments it remains well behaved, never exceeding an approximation ratio of $2$. 
\begin{figure}[!htbp]
	\centering
	\includegraphics[width=.6\linewidth]{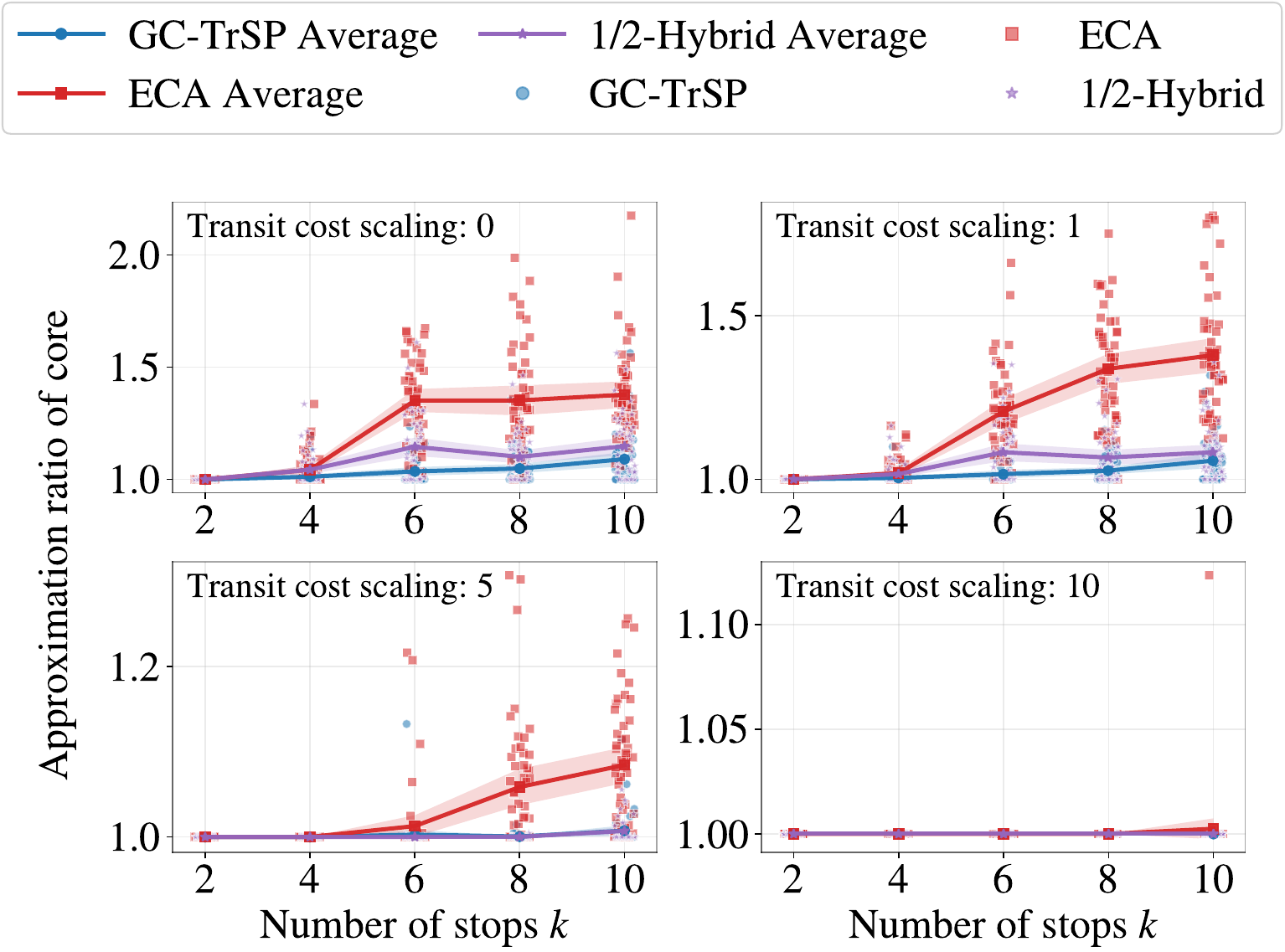}
	\caption{Core approximation evaluation for comparing \GCT, ECA and $\frac{1}{2}$-Hybrid with $30$ agents. Fixing the size relaxation parameter $\alpha$ as $2$. Stop selection size ranges from $2$ to $10$, and the transit cost scale ranges from $0$ to $10$. Distribution of instance approximation ratios and the mean approximation ratio with $95\%$ confidence intervals}
	\label{fig:experimental_core_GC_ECA_Hybrid}
\end{figure}

\section{Transit Stop Placement on a Line}
Although the algorithm by \citet{BEL25a} was originally designed for TrSP, it naturally extends to the clustering problem when the metric space is a line. Specifically, the algorithm processes the data sequentially from the leftmost point, selecting the nearest point on the right to form clusters of size $\ceil{\frac{n}{k}}$. We demonstrate that this algorithm satisfies PF when $\mathcal{C}=\calX$. 

\begin{proposition}\label{prop:bulligner_etal_alg_PF}
	Algorithm 1 by \citet{BEL25a} satisfies PF in clustering on a line when $\mathcal{C}=\calX$.
\end{proposition}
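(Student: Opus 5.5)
We need to prove that the algorithm of \citet{BEL25a}, run as a clustering algorithm on the line with $\mathcal{C}=\calX$, satisfies (exact) PF. The algorithm sorts the datapoints $x_1\le x_2\le\dots\le x_n$ and scans from the left. It forms consecutive blocks $B_1,B_2,\dots$ of $\ceil{n/k}$ points each; the last block may be smaller. For each block it opens a center at a suitable point of the block; I will take the median of the block, or whichever point Algorithm 1 actually specifies. This uses at most $k$ centers. The plan is to argue by contradiction. Suppose there is a set $S'$ with $|S'|\ge n/k$, hence $|S'|\ge\ceil{n/k}$ since sizes are integers, and a location $c\in\calX$ such that $d(i,c)<d(i,P)$ for every $i\in S'$.

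The first step is geometric. For a point $c$ on the line, the set of datapoints strictly closer to $c$ than to every open center is contained in an open interval $(p_\ell,p_r)$. Here $p_\ell$ and $p_r$ are the open centers immediately to the left and right of $c$, with $\pm\infty$ if no such center exists. More precisely, the set lies in $((p_\ell+c)/2,(c+p_r)/2)$. So $S'$ is contained in the datapoints lying strictly between two consecutive open centers. It also lies strictly on the near side of the bisectors with respect to $c$.

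The second step is counting. Take two consecutive open centers $p_\ell$ and $p_r$, chosen inside consecutive blocks $B_j$ and $B_{j+1}$. The datapoints strictly between them are the right part of $B_j$ and the left part of $B_{j+1}$. If the center of each block is its median, then at most about $\ceil{n/k}/2$ points come from each side. In total that is at most $\ceil{n/k}-1$ points, which contradicts $|S'|\ge\ceil{n/k}$. The boundary regions need separate checks. Left of the first center we only have part of $B_1$, so the count is fewer than $\ceil{n/k}$. Right of the last center we have part of the last block, again fewer than $\ceil{n/k}$. Ties must also be handled: points coinciding with a center have distance $0$ and cannot strictly improve. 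The bisector refinement is useful here, because $S'$ actually lies in a half-open window of width $(p_r-p_\ell)/2$. This lets me handle the case where the centers are not medians but, say, the leftmost or rightmost point of the block.

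The main obstacle is matching the exact center-selection rule of Algorithm 1 of \citet{BEL25a}. If the center is an endpoint of the block rather than the median, the simple count can give up to $2\ceil{n/k}-2$ points between consecutive centers. Then I must use the bisector constraint. The points of $S'$ must all be closer to $c$ than to both $p_\ell$ and $p_r$. Using the fact that the block was formed greedily from the nearest points to the right, I will show that one cannot fit $\ceil{n/k}$ such points into the window $((p_\ell+c)/2,(c+p_r)/2)$. The hypothesis $\mathcal{C}=\calX$ enters exactly here. It lets the algorithm place its center at the data-driven point that makes this window argument work, which the remark in the introduction notes fails when candidates do not align with datapoints. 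I would first pin down the rule as the median point of each block, then verify both the interior-gap count and the two boundary cases.
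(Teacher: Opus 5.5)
Your geometric step is right and is the one the paper uses: a datapoint strictly closer to $c$ than to every opened center lies strictly between the two opened centers bracketing $c$, or strictly beyond the extreme one. The gap is that your counting step is never tied to the algorithm in the statement. As the paper describes Algorithm~1 of \citet{BEL25a}, and as its example selecting $c_2,c_4$ confirms, it does not open block medians. It scans left to right and, each time $\lceil n/k\rceil$ further points have been passed, opens the nearest candidate at or to the right of the last of them. Under $\mathcal{C}=\calX$ this is the $\lceil n/k\rceil$-th point of the block itself.

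A median-based count therefore proves PF for a different algorithm. Your treatment of the endpoint rule, which is the rule actually in force, is also mistaken. If every block's center is its last point, the points strictly between two consecutive centers are at most the first $\lceil n/k\rceil-1$ points of the later block, never up to $2\lceil n/k\rceil-2$. The ``main obstacle'' you identify only arises if a left-endpoint center is mixed with a right-endpoint center. The bisector/window argument you propose for it is unspecified, and it could not rescue a genuinely bad count: if that many points sat between two consecutive centers, bunched near the midpoint, a deviation to the midpoint would capture all of them.

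With the correct rule the proof closes by counting alone, exactly as in the paper. Strictly between consecutive centers, and strictly left of the first center, there are at most $\lceil n/k\rceil-1$ points; otherwise the later center would have been opened earlier. Strictly right of the last center there are fewer than $\lceil n/k\rceil$ points; otherwise another center would have been opened, which the budget allows since there are at most $\lfloor n/\lceil n/k\rceil\rfloor\le k$ full blocks. Combined with your geometric step, no set of $\lceil n/k\rceil$ points can all strictly prefer an unselected location.

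The hypothesis $\mathcal{C}=\calX$ does not enter through any window argument. It enters by making the opened center coincide with the block's own last point. Without it, as in the paper's example with candidates $c_1,\dots,c_4$, the opened center can lie beyond the whole block. That leaves $\lceil n/k\rceil$ points strictly on one side of it, free to deviate to an unselected candidate.
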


\begin{proof}
	Assume for contradiction that Algorithm 1 by \citet{BEL25a} does not satisfy PF, that is, for the solution $P$ by the algorithm, there exists a subset of agents $S\subseteq N$ of size $|S| \geq \lceil \frac{n}{k} \rceil$ and some center $y \in \mathcal{C}$ such that $d(i,y) < d(i,P)$ for every $i\in S$. Obviously, we have $y \notin P$. Now consider $y$ lies between any two centers $c$ and $c'$ in $P$. In this case, there will be at least $\lceil \frac{n}{k} \rceil$ data points lie strictly between $c$ and $c'$ as $d(i,y) < d(i,P)$ holds for every $i\in S$. However, according to the center selection by Algorithm 1 by \citet{BEL25a}, there are at most $\lceil \frac{n}{k} \rceil - 1$ date points between $c$ and $c'$ (otherwise $c'$ cannot be selected as a center), which implies contradiction. For the case that $y$ lies on the left (right) side of the leftmost (rightmost) center in $P$, since there are at least $\lceil \frac{n}{k} \rceil$ agents on the left (right) side of leftmost (rightmost) center in $P$, a contradiction shows up as one more center should be selected by the design of the algorithm.
\end{proof}

\Cref{prop:bulligner_etal_alg_PF} establishes that Algorithm 1 by \citet{BEL25a} guarantees PF solutions for clustering on a line when the candidate set coincides with the set of agents, i.e., $\mathcal{C}=\calX$. 
{Combined with \Cref{thm:clustering_meta}, \Cref{prop:bulligner_etal_alg_PF} immediately yields the upper bound of Theorem 4.4 of~\citet{BEL25a}, i.e., these statements show that Algorithm 1 by \citet{BEL25a} satisfies $2$-core when $\mathcal{C}=\calX$.}
However, when the candidate set is arbitrary (allowing agents to be distinct from candidates), the performance of their algorithm can deteriorate significantly. 
To illustrate this, we present the following example in \Cref{fig::example_alg_fail_1_pf}.

Consider an instance with $N=\{1,2,3,4\}$, $k=2$, and candidates $\{c_1,c_2,c_3,c_4\}$. All the data points and candidates are shown in \Cref{fig::example_alg_fail_1_pf}. 
\begin{figure}[!htbp]
	\centering
	\resizebox{.9\textwidth}{!}{
		\begin{tikzpicture}
			\draw[->] (-2,0) -- (14,0);
			\node at (-1,.3) {data points};
			\node at (-1,-.3) {candidates};
			
			\foreach \i in {1,3,8,10}
			{\draw (\i,.2) -- (\i,0);}
			
			\node (1) at (1,.4) {$1$};
			\node (2) at (3,.4) {$2$};
			
			\node (3) at (8,.4) {$3$};
			\node (4) at (10,.4) {$4$};

			\foreach \i in {2,6,9,13}
			{\draw (\i,-.2) -- (\i,0);}
			
			\node (1c) at (2,-.4) {$c_1$};
			\node (2c) at (6,-.4) {$c_2$};
			
			\node (3c) at (9,-.4) {$c_3$};
			\node (4c) at (13,-.4) {$c_4$};
			
	\end{tikzpicture}}
	\caption{Example of Algorithm by \citet{BEL25a} fails PF}
	\label{fig::example_alg_fail_1_pf}
\end{figure}
According to the algorithm by \citet{BEL25a}, candidates $c_2$ and $c_4$ will be selected as the centers. However, agents $\{1,2\}$ (resp. agents $\{3,4\}$) form a deviation coalition that prefers center $c_1$ (resp. center $c_3$). The approximation ratio of PF can be arbtrarily poor as $d(1,c_1)$ and $d(2,c_1)$ can be arbtrarily small while $d(1,c_2)$ and $d(2,c_2)$ can be arbtrarily large. 

As a fitting follow-up to this observation, we propose a novel algorithm named $\ell$-dictator partition algorithm, which ensures PF for general cases on a line.

\begin{algorithm}
	\caption{$\ell$-dictator partition algorithm}
	\label{alg::l_dictator_partition}
	\begin{algorithmic}[1]
		\REQUIRE TrSP instance $\mathcal{I}=\langle N, \mathcal{C}, k, \{\theta_i\}_{i\in N} \rangle$
		\ENSURE $P$.
		\STATE Initialize $P\leftarrow \emptyset$, $\ell$ be a constant such that $\ell \leq \lfloor n/k\rfloor$.
		\FOR{$i=0,\dots,k-1$}
		\STATE Let $j$ be the $\ell$-th agent among $\{i \lceil n/k \rceil + 1, i \lceil n/k \rceil + 2, \dots, (i+1) \lceil n/k \rceil\}$. 
		\STATE $c_i \leftarrow \min_{c\in \mathcal{C}\setminus P}d(j,c)$.
		\STATE Update $P \leftarrow P \cup \{c_i\}$.
		\ENDFOR
	\end{algorithmic}
\end{algorithm}

\begin{theorem}
	\label{thm::l_dictator_partition_1pf}
	The $\ell$-dictator partition algorithm satisfies $1$-PF in clustering on a line.
\end{theorem}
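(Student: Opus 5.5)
We would prove \Cref{thm::l_dictator_partition_1pf} by contradiction, exploiting the interval structure of the line. Throughout, we assume the datapoints are indexed left to right, $x_1\le x_2\le\dots\le x_n$. The algorithm partitions them into $k$ consecutive blocks $B_0,\dots,B_{k-1}$ of size $\lceil n/k\rceil$ (the last possibly shorter). For each block $B_i$ it selects the candidate $c_i$ nearest to the block's $\ell$-th point $j_i$, which we call the ``dictator''.

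Suppose some $S$ with $|S|\ge \lceil n/k\rceil$ and some center $c\in\mathcal{C}$ satisfy $d(x,c)<d(x,P)$ for every $x\in S$. We will use three observations.

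\begin{itemize}
\item First, on the line the set of points strictly preferring $c$ to $P$ is an open interval around $c$, bounded by the midpoints between $c$ and its nearest selected neighbours on each side. Hence we may take $S$ to consist of consecutive datapoints.
\item Second, since $|S|\ge\lceil n/k\rceil$ and the blocks have size $\lceil n/k\rceil$, this consecutive run must contain every point of some block, or at least straddle a block boundary so as to cover positions $\ell$ of one block.
\item Third, we aim to show that $S$ contains some dictator $j_i$. The run has length at least $\lceil n/k\rceil$, so it hits every residue position modulo $\lceil n/k\rceil$ among consecutive indices; in particular it contains an index congruent to $\ell$, which is the $\ell$-th point of some block. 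The condition $\ell\le\lfloor n/k\rfloor$ is what guarantees that the $\ell$-th position also exists in the last, possibly shorter, block, if that block is the one hit.
\end{itemize}

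Once we have a dictator $j=j_i\in S$, we want $d(j,c_i)\le d(j,c)$. This would contradict $d(j,c)<d(j,P)\le d(j,c_i)$. The inequality holds immediately if $c$ was still available when $c_i$ was chosen, because $c_i$ minimizes $d(j,\cdot)$ over $\mathcal{C}\setminus P$. If $c$ was unavailable, then $c$ had already been selected, so $c\in P$ and $d(j,c)\ge d(j,P)$, again a contradiction. So, whether or not $c$ was available, the point $j$ does not strictly prefer $c$ to $P$.

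The main obstacle is the counting step showing that any consecutive set of $\lceil n/k\rceil$ datapoints contains a dictator. One subtlety is that $k\lceil n/k\rceil$ may exceed $n$, so the last block may be short or even empty. Another is handling ties among coincident datapoints, which we would resolve by fixing the same tie-breaking in the ordering. We would handle this carefully by noting that the indices $\ell,\ \ell+\lceil n/k\rceil,\ \ell+2\lceil n/k\rceil,\dots$ that are $\le n$ are exactly the dictators, and that these are spaced exactly $\lceil n/k\rceil$ apart. The gap before the first one is $\ell-1<\lceil n/k\rceil$. For the gap after the last one, we would use $\ell\le\lfloor n/k\rfloor$ to show that the tail contains fewer than $\lceil n/k\rceil$ points. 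Therefore no window of $\lceil n/k\rceil$ consecutive indices avoids all dictators, and the contradiction follows.
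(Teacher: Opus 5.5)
Your proof is correct, and it is organized differently from the paper's. The paper splits on whether some selected center lies strictly inside $(\ltm(S),\rtm(S))$. If none does, it uses the same counting fact you prove (some $\ell$-dictator lies in $[\ltm(S),\rtm(S)]$) and finishes at once if that dictator is in $S$. Otherwise it runs a positional case analysis ($S$ between two consecutive selected centers, or beyond the extreme ones) to exhibit a member of $S$ who does not prefer the deviation.

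Your observation that $\{x : d(x,c)<d(x,P)\}$ is an interval lets you enlarge $S$ to all datapoints in it. The dictator produced by counting is then automatically in the coalition. Everything collapses onto the single fact that each dictator $j$ satisfies $d(j,P)=\min_{c'\in\mathcal{C}}d(j,c')$; this is your availability argument, which holds in any metric. What your route buys:
\begin{itemize}
\item a shorter proof with no case analysis;
\item a clean separation of where the line is used (convexity of the strict-preference region) from where the algorithm is used (dictators are fully served);
\item an explicit version of the counting step, which the paper only asserts ``due to the size of $S$''.
\end{itemize}
The paper's route argues directly from the positions of the selected centers and avoids the enlargement step, at the price of several hand-checked subcases.

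One small correction to your bookkeeping. The tail bound $n-t^\ast<\lceil n/k\rceil$ for the largest dictator index $t^\ast$ follows from the maximality of $t^\ast$ alone, not from $\ell\le\lfloor n/k\rfloor$. That hypothesis also does not make the last block's $\ell$-th point exist: take $n=7$, $k=3$, $\ell=2$, where the last block is $\{7\}$. In your argument the hypothesis is only needed to ensure $\ell\le\lceil n/k\rceil$. Since your window lies inside $[1,n]$, the index $\equiv\ell \pmod{\lceil n/k\rceil}$ that it contains is automatically an existing dictator, so nothing breaks.
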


\begin{proof}
	Consider any arbitrary subset $S$, of agents with size at least $\ceil{\frac{n}{k}}$. 
	For any subset of candidates $T$, we prove that there always exists an agent $i$ in $S$ such that $i$ prefers the clustering $P$ returned by the $\ell$-dictator partition algorithm to $T$, i.e., $\exists~ i \in S, d(i,P) < d(i,T)$.
	Without loss of generality, we denote the location of the leftmost and rightmost agents in $S$ by $\ltm(S)$ and $\rtm(S)$.
	
	\textbf{Case 1: } There exists $c \in P$ such that $c\in (\ltm(S), \rtm(S))$.\\ 
	Consider any unselected candidate $y \in \mathcal{C} \setminus P$. 
	Assuming that $y$ is on the left (right) side of $c$, then there always exists at least one agent $i$ located on the right (left) side of $c$, meaning that $d(i,y) > d(i,c) \geq d(i,P)$. 
	
	\textbf{Case 2.} There is no candidate $c \in P$ satisfying $c\in (\ltm(S), \rtm(S))$.\\ 
	First note that, due to the size of $S$, there must be some $\ell$-dictator agent in the interval $[\ltm(S), \rtm(S)]$.
	If $S$ contains some $\ell$-dictator agent, then $P$ contains this agent's closest center and we are done.
	We assume henceforth that there is instead some agent $q\in (\ltm(S), \rtm(S))\setminus S$ who is an $\ell$-dictator.
	
	The first sub-case is that all of the agents in $S$ are placed between two selected centers, denoted as $c$ and $c'$.
	It is apparent that $S$ would not deviate to a candidate $y$ outside of the interval between $c$ and $c'$, i.e. some candidate left of $c$ or right of $c'$.
	Hence, we focus on an arbitrary candidate $y$ in the interval between $c$ and $c'$. 
	Notice that $y$ is not selected by agent $q$, implying that the distance from agent $q$ to $c$ ($c'$) is strictly smaller than that to $y$. 
	Hence, there must exist at least one agent $i$ on the left (right) side of $q$ in $S$ who prefers $c$ ($c'$); 
	
	The second sub-case is that all of the agents in $S$ are placed on the left of the leftmost selected center or on the right of the rightmost selected center.  
	Supposing all of the agents in $S$ are placed on the left of the leftmost selected center $c$, then $q$ always selects $c$. 
	No agent in $S$ will deviate to any candidate center $y$ on the right side of $c$. 
	Furthermore, there must not be a candidate center between the location of $q$ and $c$, since otherwise, $q$ would not select $c$.
	Lastly, note that the agent $\rtm(S)$ would prefer $c$ to any candidate center $y$ left of $q$ (since $q$ also prefers $c$ to $y$). 
	If the agents are on the right of the rightmost center, the statement holds by an analogous argument.
	This completes the proof.
\end{proof}

\end{document}